\newtheorem{prop}{Proposition}
\newtheorem{theorem}{Theorem}
\newtheorem{definition}{Definition}
\newtheorem{lemma}{Lemma}
\newtheorem{remark}{Remark}
\DeclareMathOperator{\sinc}{sinc}
 \DeclareMathOperator{\sgn}{sgn}
\renewcommand{\@fnsymbol}[1]{\@arabic{#1}}
\author{Victoria Hutterer\thanks{Industrial Mathematics Institute, Johannes Kepler University, Linz, Austria. victoria.hutterer@indmath.uni-linz.ac.at}, Ronny Ramlau\footnote{Industrial Mathematics Institute, Johannes Kepler University, Linz, and Johann Radon Institute for Computational and Applied Mathematics, Linz,  Austria.} \ and Iuliia Shatokhina\footnote{Johann Radon Institute for Computational and Applied Mathematics, Linz,  Austria.}}
\title{Real-time Adaptive Optics with pyramid wavefront sensors: \\ A theoretical analysis of the pyramid sensor model }
\begin{document}

\maketitle


\begin{abstract}
We consider the mathematical background of the wavefront sensor type that is widely used in Adaptive Optics systems for astronomy, microscopy, and ophthalmology. The theoretical analysis of the pyramid sensor forward operators presented in this paper is aimed at a subsequent development of fast and stable algorithms for wavefront reconstruction from data of this sensor type. In our analysis we allow the sensor to be utilized in both the modulated and non-modulated fashion. We derive detailed mathematical models for the pyramid sensor and the physically simpler roof wavefront sensor as well as their various approximations. Additionally, we calculate adjoint operators which build preliminaries for the application of several iterative mathematical approaches for solving inverse problems such as gradient based algorithms, Landweber iteration or Kaczmarz methods. 
\end{abstract}

\section{Introduction}
Ground-based telescope facilities suffer from degraded image quality caused by atmospheric turbulence. When light from a distant star passes the Earth's atmosphere, initially planar wavefronts get distorted due to turbulent air motions causing fluctuations of the index of refraction. Therefore, advanced Adaptive Optics (AO) systems \cite{Ha98, Roddier} are incorporated in innovative telescope systems to mechanically correct in real-time for the distortions with deformable mirrors. The shape of the deformable mirrors is determined by measuring wavefronts coming from either bright astronomical stars or artificially produced laser beacons.
The basic idea is to reflect the distorted wavefronts on a mirror  that is shaped appropriately such that the corrected wavefronts allow for high image quality when observed by the science camera (see Figure \ref{fig:dm}). The according positioning of the mirror actuators implies the knowledge of the incoming wavefronts. Thus, in Adaptive Optics one is interested in the reconstruction of the unknown incoming wavefront $\Phi$ from available data in order to calculate the optimal shape of the deformable mirror. Unfortunately, there exists no optical device which is able to measure the wavefront directly. Instead, a wavefront sensor (WFS) measures the time-averaged characteristic of the captured light that is related to the incoming phase. The wavefront sensor splits the telescope aperture into many small, equally spaced subapertures and detects the intensity of the incoming light in each of the subapertures. The number of subapertures defines the spatial resolution of the AO system. The detector of the sensor provides the intensity data by integrating the light from the celestial object. From the detector's analog signal, digital sensor measurements are sampled with a time period $T$ which is usually extremely short, i.e., 0.3~-~2~milliseconds. \bigskip

\begin{figure}[!ht]
\centering
\includegraphics[scale=0.3]{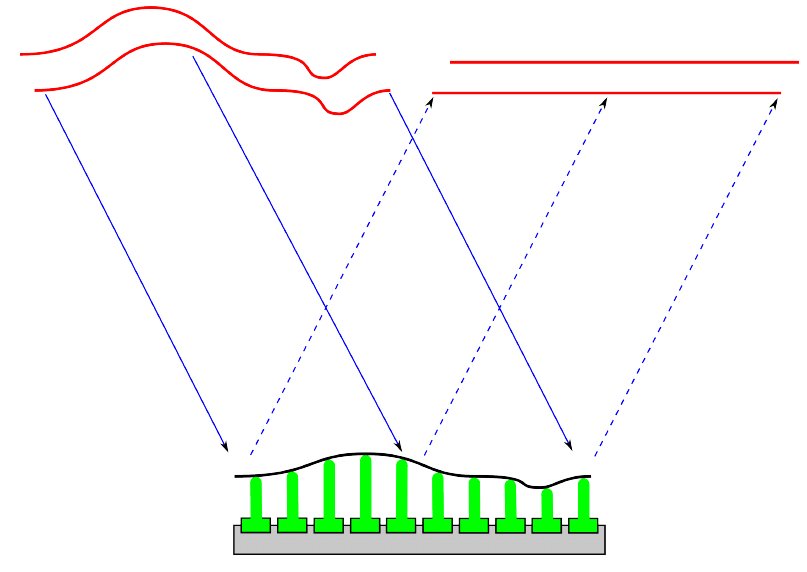}
\caption{Principle of wavefront correction \cite{Au17}. A deformable mirror reflects a perturbed wavefront and propagates the corrected, planar wavefront to the science camera yielding improved image quality.}
\label{fig:dm}
\end{figure}

This paper is focused on the pyramid wavefront sensor (PWFS), which has been invented in the 1990s \cite{Raga96}. It is gradually gaining more and more attention from the scientific community, especially in astronomical AO, due to its increased sensitivity, improved signal-to-noise ratio, robustness to spatial aliasing and adjustable spatial sampling compared to the other popular wavefront sensor choice --- the Shack-Hartmann (SH) sensor. Several theoretical studies ~\cite{BuDa06,EspoRi01,EsRiFe00,KoVe07,RaDi02,RaFa99,Shatokhina_PhDThesis,Veran_ao4elt4_pwfs_vs_sh,
Verinaud_2005_pwfs_vs_sh,Veri04,Viotto_2016_pwfs_gain} including numerical simulations and laboratory investigations with optical test benches~\cite{Bond_2016,Fusco_ao4elt3_pwfs_lab,Martin_ao4elt4_pwfs_bench_on-sky,Pinna_2008_HOT_bench,Turbide_ao4elt3_pwfs_bench,Ragazzoni_ao4elt3_pwfs_lab}  have confirmed the advantages of pyramid wavefront sensors  while additionally promising surveys were operated on sky~\cite{Esposito_2011_pwfs_onsky,Esposito_ao4elt2_LBT_onsky,Esposito2010_AO_for_LBT,FePeHi06,
Ghedina_2003_first_pwfs_onsky,Pedichini_2016_XAO_pwfs_LBT,PeFeHe10,Pinna_ao4elt4_XAO_LBT_onsky}. 

The current development of a new era of Extremely Large Telescopes (ELTs) with primary mirrors of $25-40$~m in diameter brings new challenges to the field of Adaptive Optics. For ELTs, pyramid sensors show enhanced performance in real life settings, e.g., they provide the ability to sense differential piston modes induced by diffraction effects of realistic telescope spiders that support secondary mirrors and perform even under significant levels of non-common path aberrations~\cite{VanDam_2012_pwfs_truth_LTAO_GMT,Esposito_2012_pwfs_NGS_SCAO_GMT}. 

Pyramid wavefront sensors are going to be included in many ELT instruments ~\cite{METIS_archiv,Clenet_SPIE_2016_micado_pwfs,VanDam_2012_pwfs_truth_LTAO_GMT,Esposito_2012_pwfs_NGS_SCAO_GMT,
Fusco_2010_spie_harmoni_eelt,Fusco_2010_spie_atlas_eelt,KoVe10,Macintosh_2006_spie_pwfs_xao_tmt,
Mieda_spie_2016_pwfs_truth_TMT,Neichel_2016_spie_pwfs_harmoni_eelt,Veran_ao4elt4_pwfs_vs_sh}. We would like to mention that, apart from astronomical applications, the pyramid wavefront sensor is also applied in adaptive loops in ophthalmology~\cite{Chamot06,Daly_2010,Alvarez_Thesis,Iglesias02} and microscopy~\cite{Ig11,Ig13}. Therefore, wavefront reconstruction algorithms for pyramid wavefront sensors are in high demand. The main goal of this paper is to provide an extensive mathematical analysis of the PWFS operators in order to develop suitable wavefront reconstruction methods. So far, the existing algorithms are (with a small number of exceptions as, e.g., \cite{Clare_2004,deo_spie_2018,Frazin_18,frazin_spie2018,Hut18,korkiakoski_nonlinear_08,Korkiakoski_08,
KoVe07,Viotto16}) based on a linear assumption of the pyramid sensor model \cite{Hu18_thesis,ShatHut_spie2018_overview}. Nevertheless, the simplifications of the non-linear pyramid operator allow for acceptable wavefront reconstruction quality. \bigskip

Basically,  wavefront reconstruction from pyramid sensor data consists in solving two non-linear integral equations 
\begin{equation*}
\boldsymbol P\Phi = [s_x,s_y] 
\end{equation*}
with respect to the unknown wavefront $\Phi$, where $\boldsymbol P = \left[\boldsymbol P_x,\boldsymbol P_y\right]$ is a singular Volterra integral operator of the first kind. \bigskip

In the following, we derive and analyze the mathematical model for both the non- and modulated pyramid sensor. 

The paper is organized as follows: In Section \ref{chap:pwfs_intro}, we give a brief introduction into the physical background of pyramid and roof wavefront sensors. Afterwards, we derive the singular and non-linear forward PWFS operator $\boldsymbol P$ and roof sensor operators $\boldsymbol R$. In order to simplify the problem of solving the WFS equation, we calculate linearizations $\boldsymbol R^{lin}$ for roof wavefront sensors in Section \ref{chap:pwfs_forward_models}. Furthermore, in Section \ref{chap:adjoint_operators}, we evaluate the corresponding adjoint operators which are necessary for the application of linear iterative methods such as gradient based algorithms or Landweber iteration for wavefront reconstruction addressed in an upcoming second part of the paper.

\section{Pyramid and roof wavefront sensor}\label{chap:pwfs_intro}

\begin{figure}[!ht]
\centering
\includegraphics[scale=0.25]{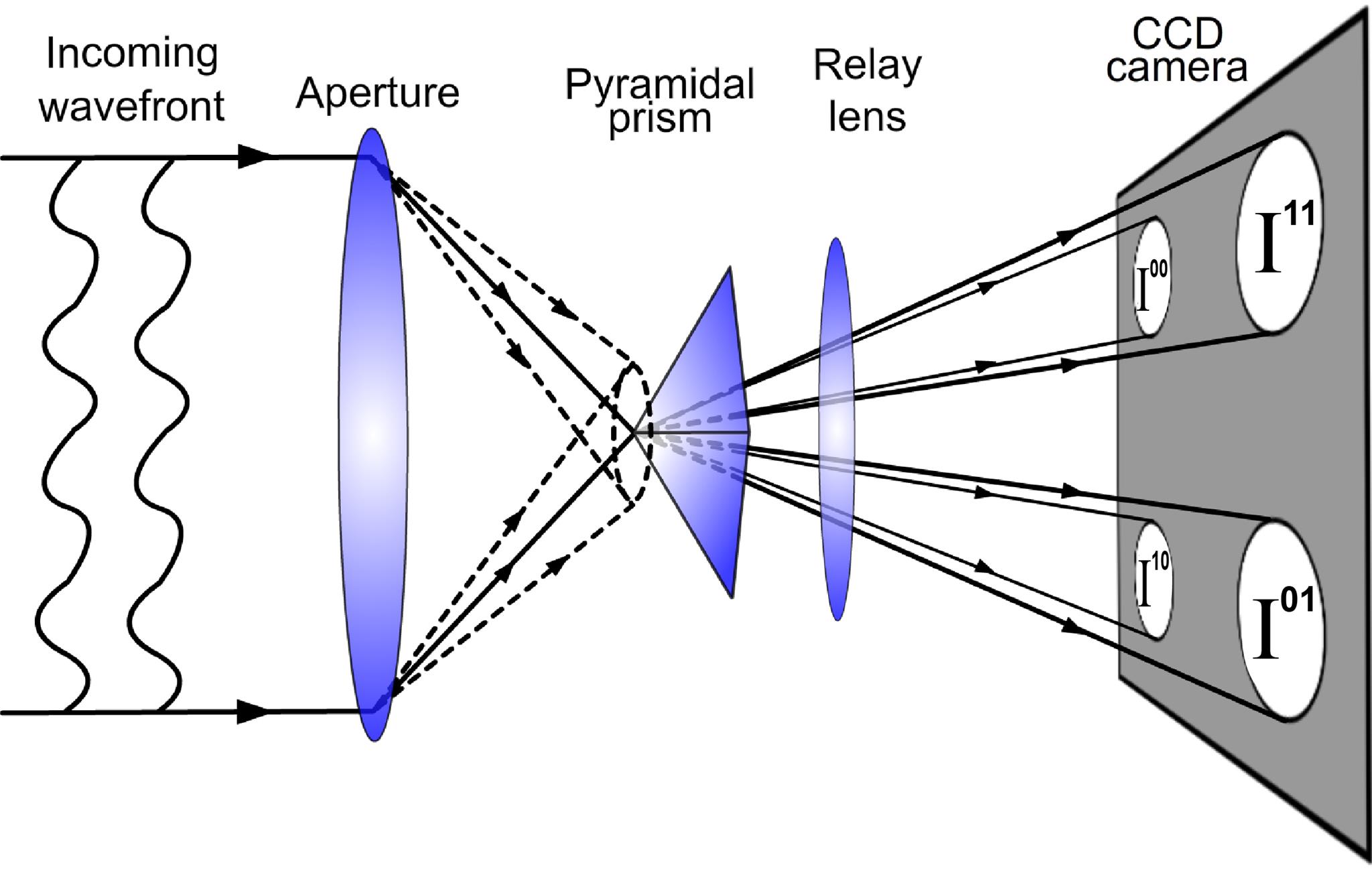}
\caption{Scheme of the optical setup of a pyramid WFS. The circular modulation path is shown in the dashed line. }
\label{fig:Pyramid_sensor_AP}
\end{figure}

As it can be seen in Figure \ref{fig:Pyramid_sensor_AP}, the main component of the pyramid sensor is a four-sided glass pyramidal 
prism placed in the focal plane of the telescope pupil. The incoming light is focused by the telescope onto the prism apex. 
The four facets of the pyramid split the incoming light in four beams, propagated in slightly different directions. 
A relay lens, placed behind the prism, re-images the four beams, allowing adjustable sampling of the four different images $I_{ij}$, $i,j=\{1,2\}$
of the aperture on the CCD camera. The two measurement sets $s_x$, $s_y$ are obtained from the four intensity 
patterns as 

\begin{equation}\label{meas_equ}
\begin{split}
s_x(x,y) = \dfrac{ [I^{01}(x,y) + I^{00}(x,y)] - [I^{11}(x,y) + I^{10}(x,y)] } {I_0} , \\
s_y(x,y) = \dfrac{ [I^{01}(x,y) + I^{11}(x,y)] - [I^{00}(x,y) + I^{10}(x,y)] } {I_0} ,
\end{split}
\end{equation} 
where $I_0$ is the average intensity per subaperture. 

For the sake of simplicity, we focus on the transmission mask modeling approach, which ignores the phase shifts introduced by the pyramidal prism and models the prism facets as transmitting only. The interference effects are neglected as well.

A dynamic circular modulation of the incoming beam allows to increase the linear range of the pyramid sensor~\cite{Veri04} and is also used to adjust its sensitivity. The modulation can be accomplished in several ways: either by oscillating the pyramid itself~\cite{Raga96}, with a steering \mbox{mirror~\cite{BuDa06,LeDueJoli}}, or by using a static diffusive optical element~\cite{LeDueJoli}. The circular modulation path of the focused beam on the pyramid apex is shown with a dashed circle in Figure~\ref{fig:Pyramid_sensor_AP}. 

We consider the roof wavefront sensor as a stand-alone WFS and as a simplification of the pyramid sensor. For this type of sensor, two orthogonally placed two-sided roof prisms instead of the pyramidal one are used. The orthogonality of the roof leads to a decoupling of the dependence of the sensor data on the incoming phase in $x$- and $y$-direction. For roof wavefront sensors, a linear modulation path is additionally considered in the literature as an approximation to the circular one.

The understanding of the physics behind the pyramid sensor has changed over time. In the beginning, the PWFS was introduced 
with dynamic modulation of the incoming beam and described within the geometric optics framework as a slope sensor 
similar to SH sensors, but having a higher sensitivity~\cite{Esposito_00_labtest,Raga96,RaFa99,Verinaud_2005_pwfs_vs_sh}. 

Later, the role of the beam modulation was questioned and the pyramid sensor without modulation was \mbox{studied} as well~\cite{Korkiakoski_08,Lee_00,RaDi02}. According to the Fourier optics based analytical model derived in~\cite{KoVe07}, the non-modulated PWFS measures a non-linear combination of one and two dimensional Hilbert transforms of the sine and cosine of the incoming distorted wavefront. 

Then, it was recognized that the dynamic modulation of the beam allows to strengthen the linearity of the sensor and to increase 
its dynamic range. Taking modulation into account within the Fourier optics framework complicates the  
non-linear forward model even more. However, a linearization of both models, with and without modulation, is possible under certain simplifying assumptions~\cite{BuDa06,Veri04}. Within the linearized model, it was shown that the modulated pyramid sensor measures both the slope and the Hilbert transform of the wavefront, depending on the frequency range and the amount of modulation~\cite{Veri04}. \bigskip

The full continuous measurements $s_x(x,y)$ and $s_y(x,y)$ of the pyramid wavefront sensor are not available in practice. For the description of the discrete pyramid sensor we perform a division of the continuous two dimensional process into finitely many equispaced regions called subapertures. The data are then assumed to be averaged over every subaperture which corresponds to the finite sampling of the pyramid sensor. Hence, the data grid is predetermined by a subaperture size of $d \cdot d$ with $d=\frac{D}{n}$, where $D$ represents the telescope diameter, i.e., the primary mirror size, and $n$ the number of subapertures in one direction. 

\begin{figure}[!ht]
\centering
\includegraphics[scale=0.5]{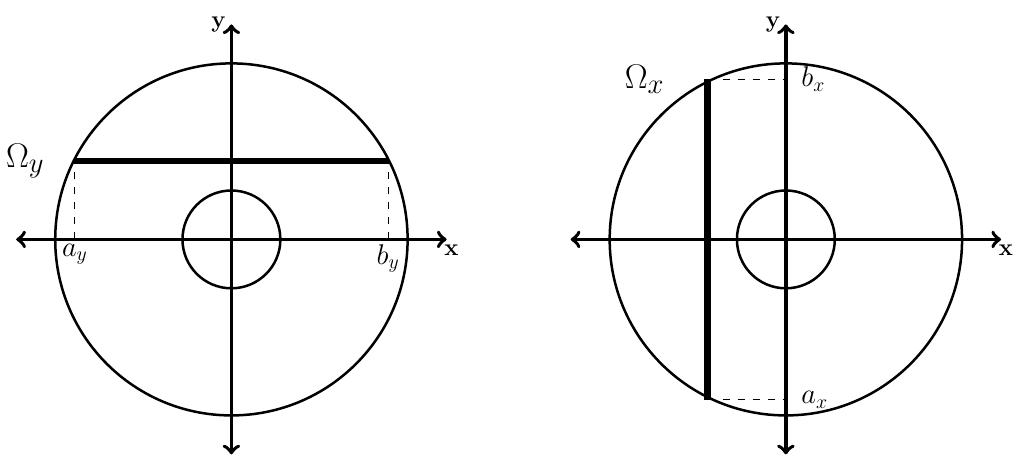}
\caption{The figures show the borders of the annular aperture mask for fixed $x$ and $y$. The domain $\Omega_x$ changes with $x$ and $\Omega_y$ with $y$ respectively. In some cases the intervals are split due to the central obstruction of the telescope.}
\label{fig:omega}
\end{figure}
Additionally, we restrict the availability of measurements to the size of the region captured by the sensor. For several telescope systems the pupil is annular instead of circular since a secondary mirror shades the primary mirror, making the area of central obstruction hardly attainable for photons. Thus, the remaining light in the area of the central obstruction does not produce reliable measurements. Moreover, the incoming phases are defined on $\R^2$ but for the control of the deformable mirror, we are only interested in the reconstructed wavefront shape on a restricted domain (bounded by the size of the telescope pupil). \bigskip

In the following, we describe the annular telescope aperture mask by $\Omega=\Omega_y\times \Omega_x \subseteq \left[-D/2,D/2\right]^2$ as shown in Figure~\ref{fig:omega}. Single lines of the annular aperture are represented by $\Omega_x = \left[a_x,b_x\right]$ and $\Omega_y = \left[a_y,b_y\right]$, with $a_x <b_x$, $a_y<b_y$ being the borders of the pupil for fixed $x$ and $y$ correspondingly. The sensor provides measurement on the region of the CCD-detector $D$. Throughout the paper, we do not distinguish between pupil and CCD-detector and assume $D=\Omega$. Further, the limitation onto the CCD detector (indicated as multiplication with a characteristic function of $D$) is not marked explicitly for the underlying operators for simplicity of notation. However, please keep in mind the compact support of the considered functions because of the restricted size of the aperture and the CCD-detector. For instance, we consider the norms in $\mathcal{L}_2\left(\R^2\right)$ but since the operators map from and to functions with compact support on $\Omega$ it is equivalent to considering the norms in $\mathcal{L}_2\left(\Omega\right)$. 

\section {Pyramid and roof wavefront sensor forward operators}\label{chap:pwfs_forward_models}

The pyramid WFS sensor model is non-linear and extensive~\cite{KoVe07,ISTRE1}. The measurements $(s^n_x, s^n_y)$ of the non-modulated sensor are connected to the wavefront $\Phi$ via a combination of $1$d and $2$d Hilbert transforms of $\sin{\Phi}, \cos\Phi$ and their multiplications. Circular modulation complicates the model even more which makes the task of inverting the operator $\boldsymbol P$, i.e., the full Fourier optics based model of the sensor rather difficult. \bigskip \par
In the following $\boldsymbol P^{\{n,c\}} = [\boldsymbol P_x^{\{n,c\}},\boldsymbol P_y^{\{n,c\}}]$ denote the operators, that describe the pyramid sensor forward model, $\boldsymbol R^{\{n,c,l\}} = [\boldsymbol R_x^{\{n,c,l\}}, \boldsymbol R_y^{\{n,c,l\}}]$ the roof wavefront sensor operators. The linearized roof sensor operators are indicated by $\boldsymbol R^{\{n,c,l\},lin}=[\boldsymbol R_x^{\{n,c,l\},lin},\boldsymbol R_y^{\{n,c,l\},lin}]$. The superscripts $\{n,c,l\}$ represent the regime in which the sensor is operated -- no modulation, circular or linear modulation applied. \bigskip

%
%
%

Atmospheric turbulence is usually introduced by either using the von Karman or Kolmogorov turbulence model which describe the energy distribution of turbulent motions \cite{Roddier,RoWe96}. Following the Kolmogorov model, considerations in~\cite{Ellerbroek02} initiate to expect smoother functions for representing atmospheric turbulence than just $\mathcal{L}_2$ with a high probability. This leads to the assumption of $\Phi \in \mathcal{H}^{11/6}\left(\R^2\right)$ as already suggested in~\cite{Eslitz13}. \bigskip

We start with the derivation of the analytic pyramid wavefront sensor model without taking interference effects between the four pupils on the CCD-detector into account. This pyramid sensor model is known as the transmission mask model. For the modulated sensor, we define a modulation parameter
\begin{equation} \label{eq:mod_param}
 \alpha_{\lambda} = \dfrac{2\pi \alpha} {\lambda} ,
\end{equation}
with $\alpha = r\lambda/D$ for a positive integer $r$ representing the modulation radius and $\lambda$ the sensing wavelength. 

\subsection{Pyramid forward model without interference (transmission mask model)}

For the pyramid wavefront sensor, we only consider the non-modulated and the sensor with circular modulation since they make sense from the physical point of view.
\begin{definition}\label{3.1}
We introduce the operators $\boldsymbol P_x^{\{n,c\}}$ in $x$-direction given by
\begin{align}\label{eq:3.2}
\left(\boldsymbol P_x^{\{n,c\}}\Phi\right)(x,y)&:=\dfrac{1}{\pi}\ \mathcal{X}_{\Omega}(x,y) \ p.v. \ \int\limits_{\Omega_y}{\dfrac{\sin{\left[\Phi(x',y)-\Phi(x,y)\right] \cdot k^{\{n,c\}} (x'-x)  }}{x'-x}\ dx'} \\ \notag
&+ \dfrac{1}{\pi^3}\ \mathcal{X}_{\Omega_y}(x) \ p.v. \ \int\limits_{\Omega_y}{\int\limits_{\Omega_x}{\int\limits_{\Omega_x}{\dfrac{\sin{\left[\Phi(x',y')-\Phi(x,y'')\right] \cdot l^{\{n,c\}} (x'-x,y''-y') }}{(x'-x)(y'-y)(y''-y)} \ dy'' \ }dy' \  } dx' }
\end{align}
and $\boldsymbol P_y^{\{n,c\}}$ in $y$-direction given by
\begin{align}\label{eq:3.3}
\left(\boldsymbol P_y^{\{n,c\}}\Phi\right)(x,y)&:=\dfrac{1}{\pi}\ \mathcal{X}_{\Omega}(x,y) \ p.v. \ \int\limits_{\Omega_x}{\dfrac{\sin{\left[\Phi(x,y')-\Phi(x,y)\right] \cdot k^{\{n,c\}} (y'-y) }}{y'-y}\ dy'} \\ \notag
&+ \dfrac{1}{\pi^3}\ \mathcal{X}_{\Omega_x}(y) \ p.v. \ \int\limits_{\Omega_y}{\int\limits_{\Omega_x}{\int\limits_{\Omega_y}{\dfrac{\sin{\left[\Phi(x',y')-\Phi(x'',y)\right] \cdot l^{\{n,c\}} (x''-x',y'-y) }}{(x'-x)(y'-y)(x''-x)} \ dx'' \ }dy' \ } dx' }.
\end{align}
The functions $k^{\{n,c\}}$ are defined by $k^n (x) := 1 , \ k^c (x): = J_0 (\alpha_{\lambda} x) $,
and the functions $l^{\{n,c\}}$ by $l^n(x,y): = 1$ and
\begin{equation*}
 l^c(x,y) := \dfrac{1}{T} \int\limits_{-T/2}^{T/2} \cos [ \alpha_{\lambda} x \sin (2\pi t/T) ] \cos [ \alpha_{\lambda} y \cos (2\pi t/T) ] \ dt .
\end{equation*}
The function $J_0$ denotes the zero-order Bessel function of the first kind given by
\begin{equation*}
 J_0 (x) = \dfrac{1}{\pi} \int\limits_0^{\pi} \cos ( x \sin t ) \ dt = \dfrac{1}{\pi} \int\limits_0^{\pi} \cos ( x \cos t ) \  dt 
\end{equation*}
with the modulation parameter $\alpha_{\lambda}$ defined in~\eqref{eq:mod_param}.
\end{definition}

Note that the kernels of the involved integral operators are strongly singular. Hence, they are defined in the p.v. (principal value) meaning, i.e., the integrals above are meant in the sense of
$$p.v. \ \int\limits_{\Omega_y}\dfrac{\Phi(x',y)}{x'-x}\ dx' = \lim\limits_{\delta\rightarrow 0^+}\int\limits_{\Omega_y\backslash\left[x-\delta,x+\delta\right]}\dfrac{\Phi(x',y)}{x'-x}\ dx'$$
for any wavefront $\Phi$. In the following, $p.v.  \int\limits_{\Omega_y}\int\limits_{\Omega_x}\int\limits_{\Omega_y}$ is always meant as abbreviation of $p.v. \int\limits_{\Omega_y}p.v.  \int\limits_{\Omega_x}p.v.  \int\limits_{\Omega_y}$ in the context of the pyramid sensor operator.


\begin{theorem}\label{theor_meas}
The relation between the pyramid wavefront sensor data and the incoming wavefront is given by
\begin{equation}\label{eq:3.4a}
\begin{split}
s_x^{\{n,c\}}(x,y) &= -\tfrac{1}{2}\left(\boldsymbol P_x^{\{n,c\}}\Phi\right)(x,y), \\ 
s_y^{\{n,c\}}(x,y) & = \tfrac{1}{2}\left(\boldsymbol P_y^{\{n,c\}}\Phi\right)(x,y),
\end{split}
\end{equation}
where $\boldsymbol P^{\{n,c\}}$ denote the pyramid sensor operators defined in~\eqref{eq:3.2}~-~\eqref{eq:3.3}.
\end{theorem}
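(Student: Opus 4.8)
The plan is to derive the measurement relation directly from Fourier optics within the transmission mask model and then read off the operators of Definition~\ref{3.1}. First I would model the field in the entrance pupil as $u = \mathcal{X}_{\Omega}\,e^{i\Phi}$, propagate it to the focal plane by a Fourier transform $\hat u = \mathcal F u$, apply the pyramid as a pure transmission mask (so that the facet feeding the image $I^{ij}$ multiplies $\hat u$ by the indicator $m_{ij}$ of the corresponding quadrant in the focal plane), and re-image each beam to the pupil by a second (inverse) Fourier transform, yielding the four pupil fields $u^{ij} = \mathcal F^{-1}[m_{ij}\hat u]$ and intensities $I^{ij} = |u^{ij}|^2$. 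The key observation is that each quadrant indicator factors as $m_{ij} = \tfrac14\,(1\pm\sgn(k_x))(1\pm\sgn(k_y))$, and since the Fourier multiplier $\sgn(k_x)$ (resp.\ $\sgn(k_y)$) corresponds to $i$ times the one-dimensional Hilbert transform $\mathcal H_x$ (resp.\ $\mathcal H_y$) acting in the $x$- (resp.\ $y$-) variable, every $u^{ij}$ becomes an explicit combination of $u$, $\mathcal H_x u$, $\mathcal H_y u$ and the mixed transform $\mathcal H_{xy}u = \mathcal H_x\mathcal H_y u$.

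Next I would insert these expressions into the measurement combinations \eqref{meas_equ}. Writing $u^{ij}$ schematically as $\tfrac14\,(P + \epsilon_i Q + \eta_j R + \epsilon_i\eta_j S)$ with $\epsilon_i,\eta_j\in\{\pm1\}$ and $P=u$, $Q=i\mathcal H_x u$, $R=i\mathcal H_y u$, $S=-\mathcal H_{xy}u$, the sum $[I^{01}+I^{00}]-[I^{11}+I^{10}]$ picks out exactly the part of $|u^{ij}|^2$ that is odd in $\epsilon_i$ and even in $\eta_j$; a short parity count shows that all other contributions cancel and leaves $\tfrac12\big(\operatorname{Im}(u\,\overline{\mathcal H_x u}) + \operatorname{Im}(\mathcal H_y u\,\overline{\mathcal H_{xy}u})\big)$, with the analogous statement (odd in $\eta_j$, even in $\epsilon_i$) for $s_y$. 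Substituting $u=e^{i\Phi}$ turns $\operatorname{Im}(u\,\overline{\mathcal H_x u})$ into $\tfrac1\pi\,\mathrm{p.v.}\!\int \sin[\Phi(x',y)-\Phi(x,y)]/(x'-x)\,dx'$ and, since $\mathcal H_y u$ is a single and $\mathcal H_{xy}u$ a double singular integral, turns $\operatorname{Im}(\mathcal H_y u\,\overline{\mathcal H_{xy}u})$ into the triple principal-value integral appearing in \eqref{eq:3.2}; a careful relabelling of the integration variables together with the identity $\sin(-t)=-\sin t$ and the compensating sign flips in the denominators match both terms (and their domains $\Omega_x,\Omega_y$) precisely. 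Dividing by $I_0$ and accounting for the geometric quadrant labelling in \eqref{meas_equ} then fixes the reported constants, $-\tfrac12$ for $s_x$ and $+\tfrac12$ for $s_y$; this recovers the non-modulated case $k^n=l^n=1$.

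Finally I would treat the circular modulation. A circular modulation adds a time-dependent tilt $\alpha_{\lambda}(x\sin(2\pi t/T)+y\cos(2\pi t/T))$ to the incoming phase along the modulation path, and the recorded intensities are averaged over one period $t\in[-T/2,T/2]$. Carrying this tilt through the computation above, the one-dimensional term acquires the extra phase $\alpha_{\lambda}(x'-x)\sin(2\pi t/T)$ inside the sine; expanding $\sin(A+B)$ and averaging over the period kills the odd contribution and produces the factor $\tfrac1T\int_{-T/2}^{T/2}\cos[\alpha_{\lambda}(x'-x)\sin(2\pi t/T)]\,dt = J_0(\alpha_{\lambda}(x'-x)) = k^c(x'-x)$, while the same averaging applied to the two-dimensional term, after discarding the vanishing odd part and factorising the remaining average into the product of cosines, yields exactly the kernel $l^c(x'-x,y''-y')$ of Definition~\ref{3.1} (using that $l^c$ is even in its second argument).

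I expect the main obstacle to be analytic rather than algebraic: justifying the Fourier-optics manipulations in the presence of the strongly singular kernels, in particular that the product $\mathcal H_y u\cdot\overline{\mathcal H_{xy}u}$ of Hilbert transforms is well defined and that the resulting triple integral exists and may legitimately be written as an iterated principal value $\mathrm{p.v.}\int_{\Omega_y}\mathrm{p.v.}\int_{\Omega_x}\mathrm{p.v.}\int_{\Omega_y}$ in the prescribed order. Boundedness of the Hilbert transform on $\mathcal L_2$, together with the Sobolev regularity $\Phi\in\mathcal H^{11/6}$ assumed in the text, controls the single transforms, but the existence of the iterated principal value of the triple kernel and the admissibility of the variable relabelling are the delicate points; by contrast, the modulation averaging reduces to a clean interchange of the time integral with the spatial principal values.
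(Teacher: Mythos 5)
Your proposal follows essentially the same route as the paper's Appendix proof: the transmission mask model with quadrant masks factored as $\tfrac14(1\pm\sgn)(1\pm\sgn)$, the identification of these Fourier multipliers with (one- and two-dimensional) Hilbert transforms, the intensity combinations of \eqref{meas_equ}, Euler's formula to produce the sine of phase differences, and the period-averaged tilt yielding $J_0$ and $l^c$ for circular modulation. The only notable differences are organizational --- your parity argument replaces the paper's brute-force expansion and cancellation of all sixteen cross terms in each $I^{mn}$, and the analytic obstacle you correctly flag (well-definedness of the singular pairings) is resolved in the paper by replacing $\mathcal{X}_{\Omega}$ with a smoothed mask $\mathcal{X}_{\Omega}^{\epsilon}\in\mathcal{C}_0^{\infty}$ so that $\psi_{aper}^{\epsilon}\in\mathcal{H}^{11/6}\left(\R^2\right)$ and the distributions $\delta$, $p.v.\,\tfrac{1}{x}$ lie in the dual space $\mathcal{H}^{-1/2-\epsilon}$.
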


\begin{proof}
The proof is given in the Appendix.
\end{proof}


\subsection{Roof sensor forward models (transmission mask model)}

The roof WFS constitutes a part of the pyramid WFS. In the roof sensor, the pyramidal prism is replaced by two orthogonally placed two-sided roof prisms, resulting in a decoupling of $x$- and $y$-direction. Therefore, the roof WFS operators $\boldsymbol R^{\{n,c\}}=[\boldsymbol R_x^{\{n,c\}},\boldsymbol R_y^{\{n,c\}}]$ are much simpler, as it contains only (variations of) 1d Hilbert transforms in one particular direction~\cite{KoVe07,ISTRE2,Veri04}.  \\ Due to the physical setup of the roof WFS, linear modulation induces characteristics which are of interest especially for roof wavefront sensors. Hence, we additionally investigate the linear modulated roof wavefront sensor model. In case of circular modulation the amplitude of modulation is assumed to be $\alpha_\lambda$, already introduced in~\eqref{eq:mod_param}. For linear modulation, $2\alpha$ denotes the angle of ray displacement along the desired direction, which is equivalent to the assumed circular modulation. We substitute the zero-order Bessel function by a $\sinc$-term in order to describe a linear modulation instead of a circular one. \\ For further investigations, we again do not take interference between the four beams into account and consider the roof WFS operator $\boldsymbol R^{\{n,c,l\}}$ on the one hand as a standalone wavefront sensor and on the other hand as an approximation to the pyramid WFS $\boldsymbol P^{\{n,c\}}$.

\begin{definition}\label{eq:3.5}
We introduce the operators $\boldsymbol R_x^{\{n,c,l\}}$ and  $\boldsymbol R_y^{\{n,c,l\}}$ by
\begin{align}\label{eq:3.6a}
\left( \boldsymbol R_x^{\{n,c,l\}} \Phi\right) (x,y) &:=\mathcal{X}_{\Omega}(x,y)   \dfrac { 1 } { \pi }\ p.v. \int \limits_{\Omega_y} \dfrac{ \sin [ \Phi (x',y) - \Phi (x,y) ] \cdot k^{\{n,c,l\}} (x'-x) } {x'-x} \ dx', \\ \label{eq:3.6b}
\left( \boldsymbol R_y^{\{n,c,l\}} \Phi\right) (x,y) &:=\mathcal{X}_{\Omega}(x,y)   \dfrac { 1 } { \pi }\ p.v. \int \limits_{\Omega_x} \dfrac{ \sin [ \Phi (x,y') - \Phi (x,y) ] \cdot k^{\{n,c,l\}}(y'-y) } {y'-y} \ dy' 
\end{align}
with modulation functions $k^n (x) := 1$, $k^c (x) := J_0 (\alpha_{\lambda} x)$, and $k^l (x) := \sinc (\alpha_{\lambda} x)$.
\end{definition}

\begin{theorem}
Using the operators defined in~\eqref{eq:3.6a}~-~\eqref{eq:3.6b} the measurements of the roof WFS in the transmission mask model are written as
\begin{align*}
s_x^{\{n,c,l\}} (x,y) &= -\tfrac{1}{2}\left( \boldsymbol R_x^{\{n,c,l\}} \Phi \right) (x,y), \\
s_y^{\{n,c,l\}} (x,y) &= \tfrac{1}{2}\left( \boldsymbol R_y^{\{n,c,l\}} \Phi \right) (x,y). 
\end{align*}
\end{theorem}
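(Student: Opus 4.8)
The plan is to mirror the Fourier-optics derivation behind Theorem~\ref{theor_meas} (carried out for the pyramid in the Appendix), but to replace the four-quadrant transmission mask of the pyramidal prism by the single-edge transmission mask of a roof prism. The decisive structural simplification is that each roof prism splits the focused beam along one direction only, so that the focal-plane mask is a function of a single coordinate. Consequently the $x$- and $y$-channels decouple and may be treated independently, and---crucially---the two-dimensional (triple-integral) term appearing in $\boldsymbol P_x^{\{n,c\}}$ cannot arise, since it stems precisely from the product of two sign functions in the pyramidal mask.

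First I would treat the $x$-roof. Writing the field in the focal plane as the two-dimensional Fourier transform $\psi=\mathcal{F}[\mathcal{X}_{\Omega}e^{i\Phi}]$ of the aperture-masked pupil field, the two facets of the roof (with edge along the $\eta$-axis) act as the transmission masks $H(\pm\xi)$, where $H$ is the Heaviside function. Re-imaging by the inverse Fourier transform yields the two pupil images $u_{\pm}=\mathcal{F}^{-1}[\psi\cdot H(\pm\xi)]$, whose squared moduli give the intensities $I_{\pm}=|u_{\pm}|^2$. Forming the combination that defines $s_x$ in~\eqref{meas_equ} isolates the interference cross-term, and the identity $\mathcal{F}^{-1}[\sgn(\xi)](x)=\tfrac{i}{\pi}\,p.v.\,\tfrac{1}{x}$ turns this cross-term into the one-dimensional Hilbert transform of $e^{i\Phi}$ in the $x$-variable. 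Splitting into real and imaginary parts produces
\[
\cos\Phi(x,y)\,\tfrac{1}{\pi}\,p.v.\!\int_{\Omega_y}\frac{\sin\Phi(x',y)}{x'-x}\,dx' \;-\; \sin\Phi(x,y)\,\tfrac{1}{\pi}\,p.v.\!\int_{\Omega_y}\frac{\cos\Phi(x',y)}{x'-x}\,dx',
\]
which collapses, via the angle-subtraction identity $\sin[\Phi(x',y)-\Phi(x,y)]=\sin\Phi(x',y)\cos\Phi(x,y)-\cos\Phi(x',y)\sin\Phi(x,y)$, to exactly the kernel of $\boldsymbol R_x^{n}$ in~\eqref{eq:3.6a}. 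The $y$-roof is handled identically with the roles of $x$ and $y$ exchanged, and the orientation of the second roof together with the grouping of intensities in~\eqref{meas_equ} accounts for the opposite sign $+\tfrac{1}{2}$ in $s_y$.

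It remains to incorporate modulation, which enters as a time-dependent tilt of the pupil field (equivalently, a moving focal spot) whose effect on the time-averaged intensity is a multiplicative factor inside the Hilbert kernel depending only on $x'-x$. For circular modulation the spot traces $\alpha(\sin(2\pi t/T),\cos(2\pi t/T))$, so the relevant average is $\tfrac{1}{T}\int_{-T/2}^{T/2}e^{i\alpha_{\lambda}(x'-x)\sin(2\pi t/T)}\,dt$, which by the integral representation of $J_0$ recorded in Definition~\ref{3.1} equals $k^c(x'-x)=J_0(\alpha_{\lambda}(x'-x))$; for linear modulation the spot sweeps a segment, and uniform averaging yields $k^l(x'-x)=\sinc(\alpha_{\lambda}(x'-x))$, while the unmodulated case is recovered by $k^n\equiv 1$. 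Inserting these factors reproduces~\eqref{eq:3.6a}~-~\eqref{eq:3.6b} in full generality.

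I expect the main obstacle to be the rigorous distributional bookkeeping rather than the algebra: justifying the use of $\mathcal{F}^{-1}[\sgn]$ as a principal-value kernel on the compactly supported, merely $\mathcal{H}^{11/6}$ field, tracking the multiplicative constants that ultimately produce the prefactor $\pm\tfrac{1}{2}$, and exchanging the principal-value limit both with the outer Fourier integral and with the modulation time-average. A secondary point requiring care is verifying that no two-dimensional term survives: because the roof mask depends on a single focal-plane coordinate, the inverse Fourier transform factorizes and the second sign function---responsible for the $y''$-integration in $\boldsymbol P_x^{\{n,c\}}$---is simply absent, which is the precise sense in which the roof operator is the one-dimensional part of the pyramid operator.
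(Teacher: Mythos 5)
Your proposal is correct and follows essentially the same route as the paper: the paper's proof is a citation to \cite{BuDa06,Shatokhina_PhDThesis,Veri04}, whose derivation is exactly the Fourier-optics transmission-mask computation you outline, and which is also the one-dimensional specialization of the paper's own Appendix proof of Theorem~\ref{theor_meas} (where the absence of the triple-integral term for the roof is indeed explained by the absence of the $\sgn(\xi)\sgn(\eta)$ cross term, and the modulation kernels $J_0$ and $\sinc$ arise from the same time-averaging of $\cos[\Phi^{mod}(x'-x,0,t)]$ that appears there).
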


\begin{proof}
See~\cite{BuDa06,Shatokhina_PhDThesis,Veri04}.
\end{proof}

The operators $\boldsymbol P^{\{n,c\}}_x$ and $\boldsymbol P^{\{n,c\}}_y$ as well as $\boldsymbol R^{\{n,c,l\}}_x$ and $\boldsymbol R^{\{n,c,l\}}_y$ are constructed in the same way, one only has to interchange the roles of $x$ and $y$ in the model. In the following, we will concentrate on the operators $\boldsymbol P^{\{n,c\}}_x$ and $\boldsymbol R^{\{n,c,l\}}_x$ since the obtained results can easily be transferred to the operators $\boldsymbol P^{\{n,c\}}_y$ and $\boldsymbol R^{\{n,c,l\}}_y$ as well. Let us now analyze the pyramid and roof sensor operators in more detail. 

\begin{prop}\label{boundedness_roof}
The non-linear operators $\boldsymbol R^{\{n,c,l\}}:\mathcal{H}^{11/6}\left(\R^2\right)\rightarrow\mathcal{L}_2\left(\R^2\right)$, representing roof wavefront sensors, are well-defined operators between the above given spaces.
\end{prop}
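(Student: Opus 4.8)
The plan is to show that for every $\Phi \in \mathcal{H}^{11/6}(\R^2)$ the principal value integrals defining $\boldsymbol R_x^{\{n,c,l\}}\Phi$ and $\boldsymbol R_y^{\{n,c,l\}}\Phi$ exist and that the resulting functions belong to $\mathcal{L}_2(\R^2)$. By the symmetry between the $x$- and $y$-directions noted just above the proposition, it suffices to treat $\boldsymbol R_x^{\{n,c,l\}}$. The whole argument rests on a single structural observation: the numerator $\sin[\Phi(x',y)-\Phi(x,y)]$ vanishes at the singularity $x'=x$ at a Hölder rate, which more than compensates the simple pole $1/(x'-x)$, so that the kernel is in fact only weakly singular.

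First I would invoke the Sobolev embedding theorem. Since the underlying dimension is $n=2$ and $11/6 > 1 = n/2$, there is an exponent $\gamma \in (0,1)$ (for instance $\gamma = 5/6$) with a continuous embedding $\mathcal{H}^{11/6}(\R^2) \hookrightarrow C^{0,\gamma}(\R^2)$. Hence $\Phi$ is bounded and Hölder continuous, and there is a constant $C_\Phi \lesssim \norma{\Phi}_{\mathcal{H}^{11/6}}$ with $|\Phi(x',y) - \Phi(x,y)| \le C_\Phi\, |x'-x|^\gamma$ at all points, uniformly in the frozen variable $y$ (the embedding controls the joint Hölder seminorm on $\R^2$). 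Next I would bound the integrand pointwise using $|\sin t| \le |t|$, this Hölder estimate, and the fact that each modulation function satisfies $|k^{\{n,c,l\}}| \le 1$ (indeed $k^n \equiv 1$, $|J_0|\le 1$, $|\sinc|\le 1$):
\[
\left| \frac{\sin[\Phi(x',y)-\Phi(x,y)]\, k^{\{n,c,l\}}(x'-x)}{x'-x} \right| \le C_\Phi\, |x'-x|^{\gamma - 1}.
\]
Because $\gamma - 1 > -1$, the right-hand side is integrable over the bounded set $\Omega_y$; the singularity at $x'=x$ is thus only apparent, the principal value coincides with an absolutely convergent ordinary integral, and $\boldsymbol R_x^{\{n,c,l\}}\Phi$ is well-defined pointwise.

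Integrating this bound, and using that $\Omega_y$ is contained in an interval of length at most $D$, yields a value-independent constant
\[
\left| \left(\boldsymbol R_x^{\{n,c,l\}}\Phi\right)(x,y)\right| \le \frac{C_\Phi}{\pi} \int_{\Omega_y} |x'-x|^{\gamma-1}\, dx' \le \frac{C_\Phi}{\pi}\int_{-D}^{D} |u|^{\gamma-1}\, du =: M < \infty ,
\]
uniformly in $(x,y)$. Consequently $\boldsymbol R_x^{\{n,c,l\}}\Phi \in \mathcal{L}_\infty(\Omega)$, and since $\Omega$ has finite measure, $\mathcal{L}_\infty(\Omega) \subset \mathcal{L}_2(\Omega)$, which — by the compact-support convention $D = \Omega$ — is the same as membership in $\mathcal{L}_2(\R^2)$. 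The identical argument applies verbatim to $\boldsymbol R_y^{\{n,c,l\}}$, proving the claim.

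The technical points to watch are minor: confirming $|k^{\{n,c,l\}}| \le 1$ in all three regimes, and handling the case where the central obstruction splits $\Omega_y$ into two intervals — this introduces nothing new, since the only pole sits at $x'=x$ inside whichever subinterval contains $x$, the other contributing a bounded integral. I expect the main conceptual step, rather than an obstacle, to be the recognition that the Hölder regularity supplied by $\mathcal{H}^{11/6}$ downgrades the strong singularity to a weak, integrable one; once that is in place the $\mathcal{L}_2$ (indeed $\mathcal{L}_\infty$) conclusion is immediate. An alternative route, useful if one instead wants an estimate structured by the Hilbert transform, is to expand $\sin[\Phi(x',y)-\Phi(x,y)]$ by the angle-subtraction formula and write $\boldsymbol R_x^{\{n,c,l\}}\Phi$ as a combination of (modulated, finite) Hilbert transforms of $\sin\Phi$ and $\cos\Phi$ multiplied by the bounded factors $\cos\Phi$ and $\sin\Phi$; the $\mathcal{L}_2$-boundedness of the finite Hilbert transform, together with splitting off the non-singular kernel $(k^{\{n,c,l\}}(x'-x)-1)/(x'-x)$ as a Hilbert--Schmidt remainder, then gives the same result.
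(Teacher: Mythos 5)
Your proof is correct and follows essentially the same route as the paper's: both hinge on the Sobolev embedding $\mathcal{H}^{11/6}\left(\R^2\right)\hookrightarrow C^{0,5/6}$, the bounds $\left|\sin t\right|\le\left|t\right|$ and $\left|k^{\{n,c,l\}}\right|\le 1$, which turn the apparent pole into an integrable singularity of order $\left|x'-x\right|^{-1/6}$ and render the principal value superfluous. The only difference is the endgame: where the paper applies the Cauchy--Schwarz inequality to the inner integral to estimate the $\mathcal{L}_2$-norm, you observe that the same pointwise bound already yields a uniform $\mathcal{L}_\infty(\Omega)$ bound, which on the bounded set $\Omega$ is a slightly stronger (and arguably cleaner) conclusion.
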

\begin{proof}
From the proof of Theorem \ref{theor_meas} it follows that the pyramid sensor operators and (further) the roof sensor operators are non-linear and well-defined for any wavefront $\Phi \in \mathcal{H}^{11/6}\left(\R^2\right)$. It remains to show $\left(\boldsymbol R^{\{n,c,l\}}\Phi\right) \in \mathcal{L}_2\left(\R^2\right)$.
The proof uses the boundedness$\textsuperscript{(1)}$ $\left|\sin\left(\Phi\right)\right|\le \left| \Phi \right|$ and the H\"{o}lder continuity$\textsuperscript{(2)}$ with $\alpha=5/6$ and H\"{o}lder constant $C>0$ in one direction of any function $\Phi \in \mathcal{H}^{11/6}\left(\R^2\right)$ (cf Sobolev embedding theorem, e.g.,~\cite[Theorem~5.4]{Adams75}). We start with showing that the integrand of~\eqref{eq:3.6a} in fact is integrable for $\mathcal{D}\left(\boldsymbol R^{\{n,c,l\}}\right) \subseteq \mathcal{H}^{11/6}\left(\R^2\right)$. Together with $\left|k^{\{n,c,l\}}\right| \le 1$, this infers from
\begin{align*}
 \int \limits_{\Omega_y} \left|\dfrac{ \sin [ \Phi (x',y) - \Phi (x,y) ] \cdot k^{\{n,c,l\}} (x'-x) } {x'-x}\right| \ dx' 
&\overset{(1)}{\le} \int \limits_{\Omega_y} \dfrac{\left|  \Phi (x',y) - \Phi (x,y) \right| } {\left|x'-x\right|} \ dx' \\
&\overset{(2)}{\le} C \int \limits_{\Omega_y} \dfrac{\left|  x' - x \right|^{5/6} } {\left|x'-x\right|} \ dx'
= C \int \limits_{\Omega_y} \dfrac{1} {\left|x'-x\right|^{1/6}} \ dx'  \\
&= \dfrac{6C}{5}\left(\left(b_y-x\right)^{5/6}+\left(x+a_y\right)^{5/6}\right) < \ \infty
\end{align*}
for $x \in \Omega_y$, $y \in \Omega_x$. As the proper integral exists, the Cauchy principal value exists as well. It follows that the $p.v.$ meaning is negligible in~\eqref{eq:3.6a}~-~\eqref{eq:3.6b} for $\mathcal{D}\left(\boldsymbol R^{\{n,c,l\}}\right) \subseteq \mathcal{H}^{11/6}\left(\R^2\right)$. By usage of the Cauchy-Schwarz inequality~(C-S), we obtain that the $\mathcal{L}_2$-norm
\begin{align*}
\left|\left|\boldsymbol R_x^{\{n,c,l\}}\Phi\right|\right|_{\mathcal{L}_2\left(\R^2\right)}^2 
&= \int\limits_{\R^2}\left|\boldsymbol R_x^{\{n,c,l\}}\Phi\left(x,y\right)\right|^2 \ d\left(x,y\right) \\
&= \int\limits_{\R^2}\left|\mathcal{X}_{\Omega}(x,y)   \dfrac { 1 } { \pi } \int \limits_{\Omega_y} \dfrac{ \sin [ \Phi (x',y) - \Phi (x,y) ] \cdot k^{\{n,c,l\}} (x'-x) } {x'-x} \ dx'\right|^2 \ d\left(x,y\right) \\
&\overset{\mathrm{C-S}}{\le} \dfrac { 1 } { \pi^2 } \int\limits_{\Omega}\left( \int \limits_{\Omega_y} \left|\dfrac{ \sin [ \Phi (x',y) - \Phi (x,y) ] \cdot k^{\{n,c,l\}} (x'-x) } {x'-x}  \right|^2 dx'\cdot \int\limits_{\Omega_y} 1\ dx' \right)d\left(x,y\right)\\
&\le \dfrac { \left|\Omega_y\right| } { \pi^2 } \int\limits_{\Omega} \int \limits_{\Omega_y} \left(\dfrac{\left| \sin [ \Phi (x',y) - \Phi (x,y) ]\right| } {\left|x'-x\right|} \right)^2\ dx' \ d\left(x,y\right) \\
&\overset{(1)}{\le} \dfrac {  \left|\Omega_y\right|  } { \pi^2 } \int\limits_{\Omega}\int \limits_{\Omega_y} \left(\dfrac{\left|  \Phi (x',y) - \Phi (x,y) \right| } {\left|x'-x\right|} \right)^2 \ dx' \ d\left(x,y\right) \\
&\overset{(2)}{\le} \dfrac { C^2  \left|\Omega_y\right| } { \pi^2 } \int\limits_{\Omega} \int \limits_{\Omega_y}\left( \dfrac{\left|  x' - x \right|^{5/6} } {\left|x'-x\right|} \right)^2 dx' \ d\left(x,y\right) \\
&= \dfrac { C^2  \left|\Omega_y\right|  } { \pi^2 } \int\limits_{\Omega}\int \limits_{\Omega_y}\dfrac{1} {\left|x'-x\right|^{1/3}}  \ dx' \ d\left(x,y\right)  \\
&\le\sup\limits_{x\in\Omega_y}\left|M\left(x\right)\right| \dfrac {C^2 \left|\Omega\right| \left|\Omega_y\right|  } {\pi^2 }    < \ \infty
\end{align*}
is finite as for $\Omega_y\subseteq \left[-D/2,D/2\right]$ holds
\begin{equation}\label{integrate_M}
M(x,y):=\int \limits_{\Omega_y}\dfrac{1} {\left|x'-x\right|^{1/3}}  \ dx'  \le \int \limits_{-D/2}^{D/2}\dfrac{1} {\left|x'-x\right|^{1/3}}  \ dx' = \dfrac{3}{2} \left[\left(\dfrac{D}{2}-x\right)^{2/3}+\left(x+\dfrac{D}{2}\right)^{2/3}\right] =:M(x)
\end{equation}
and further
\begin{equation}\label{integrate_M1}
\sup\limits_{x\in\Omega_y}\left|M\left(x\right)\right| < \infty.
\end{equation}
Hence, the roof sensor operators maps incoming wavefronts in $\mathcal{H}^{11/6}\left(\R^2\right)$ to measurements in $\mathcal{L}_2\left(\R^2\right)$ with compact support on the telescope pupil/detector.
\end{proof}

From the considerations in the above proof it follows the following statement:
\begin{remark}
If $\Phi \in \mathcal{H}^{11/6}\left(\R^2\right)$, the principal value integrals of the operators~\eqref{eq:3.6a}~-~\eqref{eq:3.6b} describing the roof wavefront sensor model coincide with the standard definition of Lebesgue integrals.
\end{remark}

\begin{prop}\label{boundedness}
The non-linear operators $\boldsymbol P^{\{n,c\}}:\mathcal{H}^{11/6}\left(\R^2\right)\rightarrow\mathcal{L}_2\left(\R^2\right)$, representing pyramid wavefront sensors, are well-defined operators between the above given spaces.
\end{prop}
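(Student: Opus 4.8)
The plan is to isolate the part of $\boldsymbol P_x^{\{n,c\}}$ that is already under control and to treat the genuinely two-dimensional remainder through the $\mathcal{L}_p$-theory of the Hilbert transform rather than by a direct absolute estimate. Comparing~\eqref{eq:3.2} with~\eqref{eq:3.6a}, the first line of $\boldsymbol P_x^{\{n,c\}}\Phi$ is exactly $\boldsymbol R_x^{\{n,c\}}\Phi$, which belongs to $\mathcal{L}_2\left(\R^2\right)$ by Proposition~\ref{boundedness_roof}. It therefore remains to prove that the triple principal value term, which I denote by $T_2^{\{n,c\}}\Phi$, maps $\mathcal{H}^{11/6}\left(\R^2\right)$ into $\mathcal{L}_2\left(\R^2\right)$; the operator $\boldsymbol P_y^{\{n,c\}}$ is then handled identically after interchanging the roles of $x$ and $y$.

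First I would establish that the iterated principal value defining $T_2^{\{n,c\}}\Phi$ exists pointwise. The innermost integral $p.v.\int_{\Omega_x}\cos\Phi(x,y'')/(y''-y)\,dy''$ is a one-dimensional Hilbert transform of $\cos\Phi(x,\cdot)$, which exists because $\Phi$, and hence $\sin\Phi$ and $\cos\Phi$, are H\"older continuous with exponent $5/6$ (the Sobolev embedding already used in Proposition~\ref{boundedness_roof}). The middle integral in $y'$ is of the same type, and since the Hilbert transform of a H\"older continuous function is again H\"older continuous away from the endpoints of the interval, the outer principal value in $x'$ exists as well. This also shows that in Definition~\ref{3.1} the $p.v.$ prescription may be read as the iterated singular integral.

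For the $\mathcal{L}_2$ estimate in the non-modulated regime ($l^n\equiv 1$) the decisive step is the angle-addition identity $\sin[\Phi(x',y')-\Phi(x,y'')]=\sin\Phi(x',y')\cos\Phi(x,y'')-\cos\Phi(x',y')\sin\Phi(x,y'')$. The two spatial arguments then decouple and the iterated triple principal value factorises: the $(x',y')$-integration becomes an iterated two-dimensional Hilbert transform $H_xH_y$ of $\mathcal{X}_\Omega\sin\Phi$ (respectively $\mathcal{X}_\Omega\cos\Phi$), while the $y''$-integration becomes a partial Hilbert transform $H_y$ of $\mathcal{X}_\Omega\cos\Phi$ (respectively $\mathcal{X}_\Omega\sin\Phi$), where $H_x$ and $H_y$ denote the partial Hilbert transforms in the respective variable. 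The point I want to stress is that one cannot bound the integrand in absolute value here: at each of the singularities $x'=x$, $y'=y$ and $y''=y$ the numerator fails to vanish, so there is no H\"older cancellation as in Proposition~\ref{boundedness_roof}, and integrability is due solely to the odd Hilbert kernel. I would therefore use that the partial and iterated Hilbert transforms are bounded on $\mathcal{L}_p\left(\R^2\right)$ for every $1<p<\infty$. As $\mathcal{X}_\Omega\sin\Phi$ and $\mathcal{X}_\Omega\cos\Phi$ are bounded with compact support, they lie in $\mathcal{L}_4\left(\R^2\right)$, so each of the two factors lies in $\mathcal{L}_4\left(\R^2\right)$; H\"older's inequality with exponents $4$ and $4$ then places the product $T_2^{n}\Phi$ in $\mathcal{L}_2\left(\R^2\right)$.

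For circular modulation I would substitute the defining $t$-average of $l^c$ and, for each fixed $t$, apply the angle-addition formulas both to $\sin[\Phi(x',y')-\Phi(x,y'')]$ and to $\cos[\alpha_\lambda(y''-y')\cos(2\pi t/T)]$ so as to separate $x'$, $y'$ and $y''$. For each $t$ the integrand again factorises into an iterated and a partial Hilbert transform, now of $\sin\Phi$ and $\cos\Phi$ multiplied by bounded modulation weights; since these weights are bounded, the modified inputs are still in $\mathcal{L}_4\left(\R^2\right)$ and the $\mathcal{L}_4\times\mathcal{L}_4$ argument applies verbatim with a bound uniform in $t$. Minkowski's integral inequality then controls the finite average over $t\in[-T/2,T/2]$ in $\mathcal{L}_2$, completing the proof. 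The hard part is precisely the absence of pointwise cancellation in the triple integral: unlike the roof case one must route the estimate through the $\mathcal{L}_p$-boundedness of singular integrals, and because neither factor is in $\mathcal{L}_\infty$ (the Hilbert transforms carry logarithmic endpoint singularities) the product lands in $\mathcal{L}_2$ only through the $\mathcal{L}_4\times\mathcal{L}_4$ H\"older bound and not through a crude $\mathcal{L}_\infty\times\mathcal{L}_2$ estimate; organising the circular-modulation expansion so that this factorised structure reappears is the most laborious bookkeeping.
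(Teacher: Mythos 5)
Your proof is correct, and for the non-modulated sensor it is essentially the paper's own argument: the same splitting of $\boldsymbol P_x^{\{n,c\}}$ into the roof part $\boldsymbol R_x^{\{n,c\}}$ (covered by Proposition~\ref{boundedness_roof}) plus the triple singular integral, the same angle-addition factorisation of that remainder into $\left(\boldsymbol H_{xy}(\sin\Phi)\right)\cdot\left(\boldsymbol H_y(\cos\Phi)\right)-\left(\boldsymbol H_{xy}(\cos\Phi)\right)\cdot\left(\boldsymbol H_y(\sin\Phi)\right)$, and the same combination of the Marcel Riesz inequality with the generalized H\"{o}lder inequality (the paper keeps a general exponent $p$ and lands in $\mathcal{L}_{p/2}$ for $2\le p<\infty$; you simply fix $p=4$). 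Your observation that the triple term admits no pointwise H\"{o}lder cancellation and must therefore be routed through the $\mathcal{L}_p$-boundedness of singular integrals is exactly the point of the paper's proof. The one place where you genuinely diverge is the circularly modulated case: you expand the kernel $l^c$ as a $t$-average directly inside the triple integral and redo the trigonometric separation of $x'$, $y'$, $y''$ for each fixed $t$, absorbing the bounded modulation weights into the $\mathcal{L}_4$ inputs, whereas the paper steps back to the relation $s_x^{c}(x,y)=\frac{1}{T}\int_{-T/2}^{T/2}s_x^{n}(x,y,t)\,dt$, writes $s_x^n(\cdot,\cdot,t)=\boldsymbol P^n\boldsymbol M_t^{mod}\Phi$ for the tilted phase $\boldsymbol M_t^{mod}\Phi=\Phi+\Phi^{mod}(\cdot,\cdot,t)\in\mathcal{H}^{11/6}\left(\R^2\right)$, and so reuses the non-modulated result as a black box before applying Minkowski's integral inequality. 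Both routes are valid and both close with Minkowski; the paper's is lighter on bookkeeping because it never re-expands the modulated kernel, while yours stays entirely at the level of the operator $\boldsymbol S_x^{c}$ and makes the uniformity in $t$ of the H\"{o}lder--Riesz bound explicit.
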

\begin{proof}
As already shown in the proof of Theorem~\ref{theor_meas}, the pyramid sensor operators are non-linear, well-defined operators for any wavefront $\Phi \in \mathcal{H}^{11/6}\left(\R^2\right)$. In order to verify that $\left(\boldsymbol P^{\{n,c\}}\Phi\right) \in \mathcal{L}_2$ we split the corresponding operators into two parts:
\begin{equation*}\label{pyr_2parts1}
\left(\boldsymbol P_x^{\{n,c\}}\Phi\right)(x,y) = \left(\boldsymbol R_x^{\{n,c\}}\Phi\right)(x,y) + \left(\boldsymbol S_x^{\{n,c\}}\Phi\right)(x,y)
\end{equation*}
with the roof sensor operators $\boldsymbol R_x^{\{n,c\}}$ defined in~\eqref{eq:3.6a} and the second term
$$\left(\boldsymbol S_x^{\{n,c\}}\Phi\right)(x,y) :=  \dfrac{1}{\pi^3}\ \mathcal{X}_{\Omega_y}(x) \ p.v. \ \int\limits_{\Omega_y}{\int\limits_{\Omega_x}{\int\limits_{\Omega_x}{\dfrac{\sin{\left[\Phi(x',y')-\Phi(x,y'')\right] \cdot l^{\{n,c\}} (x'-x,y''-y') }}{(x'-x)(y'-y)(y''-y)} \ dy'' \ }dy' \ } dx' }.$$
With Proposition~\ref{boundedness_roof} and $$\left|\left|\boldsymbol P_x^{\{n,c\}}\Phi\right|\right|_{\mathcal{L}_2\left(\R^2\right)}  \le \left|\left|\boldsymbol R_x^{\{n,c\}}\Phi\right|\right|_{\mathcal{L}_2\left(\R^2\right)} +\left|\left|\boldsymbol S_x^{\{n,c\}}\Phi\right|\right|_{\mathcal{L}_2\left(\R^2\right)}, $$ it remains to show $\left|\left|\boldsymbol S_x^{\{n,c\}}\Phi\right|\right|_{\mathcal{L}_2\left(\R^2\right)} < \infty$. \bigskip

First, we focus on the non-modulated sensor and consider the operator $\boldsymbol S_x^n$.

The proof uses the $\mathcal{L}_p$-boundedness of the classical Hilbert transform for $1<p<\infty$ as found in, e.g.,~\cite{Butz71}. For our purposes, we define the Hilbert transforms $\boldsymbol H_x$ in $x$- direction and $\boldsymbol H_y$ in $y$-direction by
\begin{equation}\label{hilbert_def}
\boldsymbol H_x\Phi(x,y) :=\dfrac{1}{\pi} \ p.v. \ \int_{-\infty}^\infty\dfrac{\Phi(x',y)}{x'-x} \ dx', \qquad \qquad \qquad \boldsymbol H_y\Phi(x,y) :=\dfrac{1}{\pi} \ p.v.  \int_{-\infty}^\infty\dfrac{\Phi(x,y')}{y'-y} \ dy' .
\end{equation}
\begin{theorem}[Theorem 8.1.12, \cite{Butz71}] \label{bound}
For $\Phi \in \mathcal{L}_p\left(\R^2\right)$, $1<p<\infty$, the Hilbert transform defined in~\eqref{hilbert_def} exists almost everywhere, belongs to $\mathcal{L}_p\left(\R^2\right)$ and satisfies
\begin{equation}\label{eq:bound}
\left|\left|\boldsymbol H_x\Phi\right|\right|_{\mathcal{L}_p}\le c_p\left|\left|\Phi\right|\right|_{\mathcal{L}_p} \qquad \qquad \text{and} \qquad \qquad \left|\left|\boldsymbol H_y\Phi\right|\right|_{\mathcal{L}_p}\le d_p\left|\left|\Phi\right|\right|_{\mathcal{L}_p}
\end{equation}
with constants $c_p,d_p>0$. \eqref{eq:bound} is often referred to as the \textit{Marcel Riesz inequality for Hilbert transforms}.
\end{theorem}
The 2d Hilbert transform $\boldsymbol H_{xy}$ is given as an operator $\boldsymbol H_{xy}:\mathcal{L}_p\left(\R^2\right)\rightarrow  \mathcal{L}_p\left(\R^2\right)$ for $1 < p < \infty$ as well since it can be considered as the composition $\boldsymbol H_{xy} = \boldsymbol H_{x} \circ \boldsymbol H_{y}$.

Using trigonometric formulas, we rewrite $\boldsymbol S_x^n$ into
{\small\begin{align*}
 \left(\boldsymbol S_x^n\Phi\right)(x,y) 
 &=  \dfrac{1}{\pi^3}\ \mathcal{X}_{\Omega_x}(y) \ p.v. \ \int\limits_{\Omega_y}{\int\limits_{\Omega_x}{\int\limits_{\Omega_y}{\dfrac{\sin{\left[\Phi(x',y')-\Phi(x,y'')\right]}}{(x'-x)(y'-y)(y''-y)} \ dy'' \ }dy' \ } dx' } \\
 &= \dfrac{1}{\pi^3}\ \mathcal{X}_{\Omega_x}(y) \ p.v. \ \int\limits_{\Omega_y}{\int\limits_{\Omega_x}{\int\limits_{\Omega_y}{\dfrac{\sin\left[\Phi(x',y')\right]\cos\left[\Phi(x,y'')\right]-\cos\left[\Phi(x',y')\right]\sin\left[\Phi(x,y'')\right]}{(x'-x)(y'-y)(y''-y)} \ dy'' \ }dy' \ } dx' } \\
 &=\left(\boldsymbol H_{xy}\left(\sin\Phi\right)\right)\left(x,y\right) \cdot \left(\boldsymbol H_y\left(\cos\Phi\right)\right)\left(x,y\right) - \left(\boldsymbol H_{xy}\left(\cos\Phi\right)\right)\left(x,y\right) \cdot \left(\boldsymbol H_y\left(\sin\Phi\right)\right)\left(x,y\right)
\end{align*}
}and obtain for $2\le p<\infty$ 
\begin{align*}
\pi^3\left|\left|\boldsymbol S_x^n\Phi\right|\right|_{\mathcal{L}_{p/2}}
&=\left|\left|\left(\boldsymbol H_{xy}\left(\sin\Phi\right)\right) \cdot \left(\boldsymbol H_y\left(\cos\Phi\right)\right)- \left(\boldsymbol H_{xy}\left(\cos\Phi\right)\right) \cdot \left(\boldsymbol H_y\left(\sin\Phi\right)\right)\right|\right|_{\mathcal{L}_{p/2}} \\
&\le \left|\left|\left(\boldsymbol H_{xy}\left(\sin\Phi\right)\right) \cdot \left(\boldsymbol H_y\left(\cos\Phi\right)\right)\right|\right|_{\mathcal{L}_{p/2}} + \left|\left| \left(\boldsymbol H_{xy}\left(\cos\Phi\right)\right) \cdot \left(\boldsymbol H_y\left(\sin\Phi\right)\right)\right|\right|_{\mathcal{L}_{p/2}} \\
&\le \left|\left|\left(\boldsymbol H_{xy}\left(\sin\Phi\right)\right)\right|\right|_{\mathcal{L}_{p}} \left|\left|\left(\boldsymbol H_y\left(\cos\Phi\right)\right)\right|\right|_{\mathcal{L}_{p}} + \left|\left| \left(\boldsymbol H_{xy}\left(\cos\Phi\right)\right)\right|\right|_{\mathcal{L}_{p}} \left|\left| \left(\boldsymbol H_y\left(\sin\Phi\right)\right)\right|\right|_{\mathcal{L}_{p}} \\
&< \infty 
\end{align*} 
by using the generalized H\"{o}lder inequality. The multiplication with the characteristic functions is omitted above due to simplicity of notation.

It follows that
\begin{equation*}\label{L_p_bound}
\left(\boldsymbol P^n\Phi\right) \in \mathcal{L}_p\left(\R^2\right)
\end{equation*}
 for $1\le p < \infty$ and $\Phi \in \mathcal{H}^{11/6}\left(\R^2\right)$. \bigskip

In order to prove $\left(\boldsymbol P^c\Phi\right) \in \mathcal{L}_2$ we use relation~\eqref{eq:mod_sx_general} of the proof of Theorem~\ref{theor_meas} between non-modulated and modulated pyramid data, i.e.,
\begin{equation}\label{eq:rel_nomod_mod} 
 s_x^{c} (x,y) = \dfrac{1}{T} \int\limits_{-T/2}^{T/2} s_x^n (x,y,t)\  dt .
\end{equation}
Let T denote one full time period and $t\in\left[-T/2,T/2\right]$. For deriving the time-dependent non-modulated pyramid sensor data, we introduce an operator $\boldsymbol M_t^{mod}$ given by
\begin{equation*}
\left(\boldsymbol M_t^{mod}\Phi\right)\left(x,y\right):=\Phi\left(x,y\right)+\Phi^{mod}\left(x,y,t\right)
\end{equation*}
for the periodic tilt $ \Phi^{mod}$ inducing modulation. As in~\eqref{eq:periodic_tilt}, this tilt is represented by
\begin{equation*}
 \Phi^{mod} (x,y,t) = \alpha_{\lambda} ( x \sin (2\pi t/T) + y \cos (2\pi t/T) ). 
\end{equation*} 
Due to the structure of $\Phi^{mod}$ and its compact support on the aperture, it holds $\Phi^{mod}\left(\cdot,\cdot,t\right) \in \mathcal{H}^{11/6}\left(\R^2\right)$ for all $t \in \left[-T/2,T/2\right]$. This gives a continuous map $\boldsymbol M^{mod}_t: \mathcal{H}^{11/6}\left(\R^2\right) \rightarrow  \mathcal{H}^{11/6}\left(\R^2\right)$ and further
\begin{equation}\label{eq:time_bound}
\left(\boldsymbol P^n\boldsymbol M^{mod}_t\Phi\right) \in \mathcal{L}_p\left(\R^2\right) \qquad \qquad \forall t\in\left[-T/2,T/2\right], 1\le p < \infty.
\end{equation}
Using $s_x^n\left(\cdot,\cdot,t\right)=\left(\boldsymbol P^n\boldsymbol M^{mod}_t\Phi\right)$, \eqref{eq:rel_nomod_mod}, and the generalized Minkowski's integral inequality$\textsuperscript{(1)}$ (cf, e.g.,~\cite[Theorem 202]{hardy88},\cite{stein70}), we obtain
\begin{align*}
\left|\left|\boldsymbol s_x^c\right|\right|_{\mathcal{L}_{p}}
&=\left( \int\limits_{\R^2}\left|s_x^{c}\left(x,y\right)\right|^p \ d\left(x,y\right)\right)^{1/p} \\
&=\left(\int\limits_{\R^2}\left| \dfrac{1}{T} \int\limits_{-T/2}^{T/2} \left(\boldsymbol P^n\boldsymbol M_t^{mod}\Phi\right)\left(x,y\right)\  dt \ \right|^p d\left(x,y\right)\right)^{1/p} \\
&\overset{(1)}{\le} \dfrac{1}{T} \int\limits_{-T/2}^{T/2}\left(\ \int\limits_{\R^2} \left|\left(\boldsymbol P^n\boldsymbol M_t^{mod}\Phi\right)\left(x,y\right)\right|^p \ d\left(x,y\right)\right)^{1/p} \ dt \\
&= \dfrac{1}{T} \int\limits_{-T/2}^{T/2} \left|\left|\boldsymbol P^n\boldsymbol M_t^{mod}\Phi\right|\right|_{\mathcal{L}_p} \ dt
\overset{\eqref{eq:time_bound}}{<} \infty,
\end{align*}
which shows $\boldsymbol P^c\Phi \in \mathcal{L}_p\left(\R^2\right),1\le p<\infty$ for the modulated pyramid sensor operators. Note that we omitted the factor $\left(-\tfrac{1}{2}\right)$ from~\eqref{eq:3.4a}.
\end{proof}
Merely the light which is captured on the telescope pupil $\Omega$ influences the pyramid sensor response. We consider only the light falling on the aperture $\left(\mathcal{X}_{\Omega} \cdot \  \Phi\right)$ but use the notation $\Phi \in \mathcal{H}^{11/6}\left(\R^2\right)$ for both considered variants of wavefronts with and without compact support on the telescope pupil.

Alternatively, one can also define the pyramid sensor operators as $\boldsymbol P_x^{\{n,c\}}:\mathcal{H}^{11/6}\left(\Omega\right)\rightarrow\mathcal{L}_2\left(\Omega\right)$, i.e., on the telescope aperture, and $\boldsymbol P_y^{\{n,c\}}, \boldsymbol R_x^{\{n,c,l\}}, \boldsymbol R_y^{\{n,c,l\}}$ respectively. However, since we apply, e.g., Plancherel's theorem we define the operators on the whole $\R^2$ but keep in mind that one can always restrict to the region of the telescope aperture and CCD-detector.  
More precisely, the pyramid sensor operators are applied to functions with compact support on the pupil $\Omega$ and map to functions with compact support on the detector $\Omega$.

\subsection{Linearization of the roof sensor operators}
Linear approximations of WFS operators around the zero phase are sufficient in closed loop AO in which the wavefront sensor measures already corrected and very small incoming wavefronts.
The linearization of the operators can be obtained by different ways, e.g., by replacing
$$ \sin [ \Phi (x',y) - \Phi (x,y) ] \approx \Phi (x',y) - \Phi (x,y)  $$
which is valid in case of  
$$\left|\Phi (x',y) - \Phi (x,y)\right| << 1 .$$
Linearizations $\boldsymbol R^{\{n,c,l\},lin}$ based on these approximations were already considered in~\cite{BuDa06,KoVe07,Shat13,Veri04}.

We concentrate on linear approximations for the roof wavefront sensor operators by means of the Fr\'{e}chet derivative. We start by calculating the G\^{a}teaux derivatives. Then, we show that the G\^{a}teaux derivatives coincide with the Fr\'{e}chet derivatives and finally, we evaluate the corresponding linearizations.

\begin{theorem}\label{3.9_gateaux}
The G\^{a}teaux derivatives $\left(\boldsymbol R_x^{\{n,c,l\}}\right)'(\Phi) \in \mathcal{L}\left(\mathcal{H}^{11/6},\mathcal{L}_2\right)$ at $\Phi \in \mathcal{D}\left(\boldsymbol R_x^{\{n,c,l\}}\right)$ of the non-linear roof sensor operators $\boldsymbol R_x^{\{n,c,l\}}:\mathcal{D}\left(\boldsymbol R_x^{\{n,c,l\}}\right)\subseteq \mathcal{H}^{11/6}\left(\R^2\right)\rightarrow \mathcal{L}_2\left(\R^2\right)$ defined in~\eqref{eq:3.6a} are given by
{\small\begin{equation}\label{theo_gateaux_def}
\left(\left(\boldsymbol R_x^{\{n,c,l\}}\right)'(\Phi)\ \psi\right)\left(x,y\right) =\mathcal{X}_\Omega\left(x,y\right)\dfrac{1}{\pi}  \int\limits_{\Omega_y}{\dfrac{\cos{\left[\Phi(x',y)-\Phi(x,y)\right]}\left[\psi(x',y)-\psi(x,y)\right]\cdot k^{\{n,c,l\}}\left(x'-x\right)}{(x'-x)}\ dx'}
\end{equation}
}and $\left(\boldsymbol R_y^{\{n,c,l\}}\right)'(\Phi) \in \mathcal{L}\left(\mathcal{H}^{11/6},\mathcal{L}_2\right)$ respectively.
\end{theorem}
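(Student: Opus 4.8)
The plan is to compute the Gâteaux derivative directly from its definition as the $\mathcal{L}_2$-limit of difference quotients, and then to verify that the resulting operator is linear and bounded so that it genuinely lies in $\mathcal{L}(\mathcal{H}^{11/6},\mathcal{L}_2)$. First I would fix $\psi \in \mathcal{H}^{11/6}(\R^2)$ and a scalar $t\neq 0$ and form the difference quotient $t^{-1}[\boldsymbol R_x^{\{n,c,l\}}(\Phi+t\psi)-\boldsymbol R_x^{\{n,c,l\}}(\Phi)]$. Abbreviating $a:=\Phi(x',y)-\Phi(x,y)$ and $b:=\psi(x',y)-\psi(x,y)$, the integrand contains the factor $t^{-1}[\sin(a+tb)-\sin(a)]$. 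The angle-addition formula gives $t^{-1}[\sin(a+tb)-\sin(a)] = \cos(a)\,t^{-1}\sin(tb) + \sin(a)\,t^{-1}(\cos(tb)-1)$, whose pointwise limit as $t\to 0$ is $b\cos(a)$. This already reproduces the integrand claimed in~\eqref{theo_gateaux_def}, so the work lies entirely in justifying convergence in norm.

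The crux is to upgrade this pointwise convergence of integrands to convergence of the full operator in the $\mathcal{L}_2(\Omega)$-norm. I would control the error via the mean value theorem: writing $t^{-1}[\sin(a+tb)-\sin(a)]-b\cos(a) = b[\cos(a+\xi b)-\cos(a)]$ for some $\xi$ between $0$ and $t$, and using $|\cos u-\cos v|\le |u-v|$, the bracket is bounded by $|t|\,|b|^2$. Here I would invoke exactly the tool already used in Proposition~\ref{boundedness_roof}, namely the Hölder continuity with exponent $5/6$ of functions in $\mathcal{H}^{11/6}$ supplied by the Sobolev embedding: $|b| = |\psi(x',y)-\psi(x,y)|\le C|x'-x|^{5/6}$. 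Together with $|k^{\{n,c,l\}}|\le 1$, the error integrand is dominated by $|t|\,C^2|x'-x|^{2/3}$, a proper (non-singular) integrable function on $\Omega_y$. Integrating in $x'$ and then taking the $\mathcal{L}_2$-norm over the bounded set $\Omega$ yields an error of order $|t|$, which tends to $0$; this establishes both the existence of the Gâteaux derivative and the formula~\eqref{theo_gateaux_def}.

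Finally, to confirm membership in $\mathcal{L}(\mathcal{H}^{11/6},\mathcal{L}_2)$, I would note that linearity in $\psi$ is immediate from~\eqref{theo_gateaux_def}, and boundedness follows by repeating the Cauchy-Schwarz estimate of Proposition~\ref{boundedness_roof} almost verbatim: using $|\cos(\cdot)|\le 1$, $|k^{\{n,c,l\}}|\le 1$, and the same Hölder bound $|\psi(x',y)-\psi(x,y)|\le C|x'-x|^{5/6}$, the kernel is dominated by $|x'-x|^{-1/6}$, and the resulting integral is finite and controlled by the quantity $M(x)$ from~\eqref{integrate_M}. Tracking constants, the Hölder constant $C$ is itself bounded by $\|\psi\|_{\mathcal{H}^{11/6}}$ through the embedding constant, giving $\|(\boldsymbol R_x^{\{n,c,l\}})'(\Phi)\psi\|_{\mathcal{L}_2}\le c\,\|\psi\|_{\mathcal{H}^{11/6}}$.

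The main obstacle I anticipate is purely technical: ensuring that the interchange of limit and integral survives the strong singularity $1/(x'-x)$ of the kernel. This is precisely where the Hölder regularity of $\mathcal{H}^{11/6}$ functions is essential, since it is what renders every integral a genuine Lebesgue integral (so the p.v. is inessential, as already remarked after Proposition~\ref{boundedness_roof}) and what converts the naive estimate into an integrable one carrying a spare positive power of $|x'-x|$. The $y$-direction operator $(\boldsymbol R_y^{\{n,c,l\}})'(\Phi)$ then follows by the symmetry interchanging the roles of $x$ and $y$.
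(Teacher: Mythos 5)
Your proposal is correct and follows essentially the same route as the paper: the paper expands $\sin$ via Taylor's theorem with Lagrange remainder (rather than your mean-value-theorem form, which is equivalent), discards the $O(t)$ remainder using the same H\"older bound $|\psi(x',y)-\psi(x,y)|\le C|x'-x|^{5/6}$ from the Sobolev embedding, and proves boundedness by the identical Cauchy--Schwarz/$M(x)$ estimate. If anything, your version is slightly more careful, since you make the $\mathcal{L}_2$-norm convergence of the difference quotient explicit with a uniform $O(|t|)$ error bound, whereas the paper passes the limit under the integral sign without further comment.
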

\begin{proof}
We utilize the representation (cf Taylor's theorem with Lagrange form of the remainder)
\begin{eqnarray}\label{eq:taylorsin}
\sin\left(\Phi+\psi\right) &= \sin\left(\Phi\right)+\sin'\left(\Phi\right)\psi+\dfrac{1}{2}\sin''\left(\Phi+\theta\psi\right) \psi^2
\end{eqnarray}
for a $\theta = \theta\left(\Phi,\psi\right) \in \left(0,1\right)$.  For simplicity of notation, we omit the multiplication with the characteristic function of the aperture in front of every integral and the multiplication with the modulation kernels $k^{\{n,c,l\}}$ as these functions are independent of the phase $\Phi$ anyway.

The G\^{a}teaux derivatives $\left(\boldsymbol R_x^{\{n,c,l\}}\right)'(\Phi)$ are computed as
{\small \begin{equation*} \label{gateaux_deriv}
\begin{split}
\left(\left(\boldsymbol R_x^{\{n,c,l\}}\right)'(\Phi)\ \psi\right)\left(x,y\right)
&=\lim\limits_{t \rightarrow 0} \dfrac{\left(\boldsymbol R_x^{\{n,c,l\}}\left(\Phi+t\psi\right)\right)\left(x,y\right)-\left(\boldsymbol R_x^{\{n,c,l\}}\Phi\right)\left(x,y\right)}{t} \\
&=\lim\limits_{t\rightarrow 0} \dfrac{1}{\pi} \int\limits_{\Omega_y}{\dfrac{\sin {\left[\Phi(x',y)+t\psi(x',y)-\Phi(x,y)-t\psi(x,y)\right]}-\sin{\left[\Phi(x',y)-\Phi(x,y)\right]}}{t(x'-x)}\ dx'} \\ 
&\overset{\eqref{eq:taylorsin}}{=}\lim\limits_{t\rightarrow 0}  \dfrac{1}{\pi}  \int\limits_{\Omega_y}\dfrac{\sin'{\left[\Phi(x',y)-\Phi(x,y)\right]}\left[t\psi(x',y)-t\psi(x,y)\right]}{t(x'-x)} \\ 
&+\dfrac{\dfrac{1}{2}\sin''{\left[\Phi(x',y)-\Phi(x,y)+\theta t \left[\psi(x',y)-\psi(x,y)\right]\right]}t^2\left[\psi(x',y)-\psi(x,y)\right]^2}{t(x'-x)}\ dx' \\
&=\dfrac{1}{\pi}  \int\limits_{\Omega_y}{\dfrac{\cos{\left[\Phi(x',y)-\Phi(x,y)\right]}\left[\psi(x',y)-\psi(x,y)\right]}{(x'-x)}\ dx'}.
\end{split}
\end{equation*}
}Obviously, the G\^{a}teaux derivatives are linear in $\psi$. 

For any $\Phi \in \mathcal{H}^{11/6}\left(\R^2\right)$ we deduce that the G\^{a}teaux derivatives are bounded, and further continuous in the direction $\psi$, i.e., $\left(\boldsymbol R_x^{\{n,c,l\}}\right)'(\Phi) \in \mathcal{L}\left(\mathcal{H}^{11/6},\mathcal{L}_2\right)$, by showing
\begin{equation*}
\begin{split}
\left|\left|\left(\boldsymbol R_x^{\{n,c,l\}}\right)'(\Phi)\right|\right|_{\mathcal{L}\left(\mathcal{H}^{11/6},\mathcal{L}_2\right)} 
&= \sup\limits_{\left|\left|\psi\right|\right|_{\mathcal{H}^{11/6}}=1} \left|\left|\left(\left(\boldsymbol R_x^{\{n,c,l\}}\right)'(\Phi)\ \psi\right)\right|\right|_{\mathcal{L}_2\left(\R^2\right)} <\infty.
\end{split}
\end{equation*} 
By application of the Cauchy-Schwarz inequality, $\left|k^{\{n,c,l\}}\right| \le 1$, as well as the H\"{o}lder continuity$\textsuperscript{(1)}$ with $\alpha=5/6$ and H\"{o}lder constant $C>0$ of any function in $\mathcal{H}^{11/6}\left(\R^2\right)$, the statement results from
{\footnotesize \begin{equation*}
\begin{split}
\left|\left|\left(\left(\boldsymbol R_x^{\{n,c,l\}}\right)'(\Phi)\ \psi\right)\right|\right|^2_{\mathcal{L}_2\left(\R^2\right)} 
&= \int\limits_{\R^2} \left|\left(\left(\boldsymbol R_x^{\{n,c,l\}}\right)'(\Phi)\ \psi\right)\left(x,y\right)\right|^2 d\left(x,y\right) \\
&= \int\limits_{\Omega} \left|\dfrac{1}{\pi}\ \int\limits_{\Omega_y} \dfrac{\cos\left[\Phi\left(x',y\right)-\Phi\left(x,y\right)\right]\left[\psi\left(x',y\right)-\psi\left(x,y\right)\right]\cdot k^{\{n,c,l\}}\left(x'-x\right)}{x'-x} \ dx' \right|^2 d\left(x,y\right) \\
&\overset{\mathrm{C-S}}{\le}\dfrac{1}{\pi^2}\int\limits_{\Omega}\left(\int\limits_{\Omega_y}\left|\cos\left[\Phi\left(x',y\right)-\Phi\left(x,y\right)\right]\cdot k^{\{n,c,l\}}\left(x'-x\right)\right|^2 dx' \right)\\
&\cdot \left( \int\limits_{\Omega_y}\left|\dfrac{\psi\left(x',y\right)-\psi\left(x,y\right)}{x'-x}\right|^2 \ dx' \right) d\left(x,y\right) \\
&\overset{(1)}{\le} \dfrac{1}{\pi^2}\int\limits_{\Omega}\left(\int\limits_{\Omega_y}1 \ dx' \right)\cdot \left( C^2 \int\limits_{\Omega_y}\left(\dfrac{\left|x'-x\right|^{5/6}}{\left|x'-x\right|}\right)^2 \ dx' \right) d\left(x,y\right)  \\
&=\dfrac{C^2\left|\Omega_y\right|}{\pi^2}\int\limits_{\Omega} \int\limits_{\Omega_y}\dfrac{1}{\left|x'-x\right|^{1/3}} \ dx' \ d\left(x,y\right)  \overset{\eqref{integrate_M}-\eqref{integrate_M1}}{<} \infty.
\end{split}
\end{equation*} 
}
\end{proof}

\begin{theorem}
The G\^{a}teaux derivatives~\eqref{theo_gateaux_def} coincide with the Fr\'{e}chet derivatives.
\end{theorem}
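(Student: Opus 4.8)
The plan is to verify the definition of Fréchet differentiability directly and to show that the bounded linear operator $\left(\boldsymbol R_x^{\{n,c,l\}}\right)'(\Phi)$ obtained in Theorem~\ref{3.9_gateaux} already serves as the Fréchet derivative. For $\Phi \in \mathcal{D}\left(\boldsymbol R_x^{\{n,c,l\}}\right)$ and an increment $\psi \in \mathcal{H}^{11/6}\left(\R^2\right)$ I introduce the remainder
\[
\rho_\psi := \boldsymbol R_x^{\{n,c,l\}}(\Phi+\psi) - \boldsymbol R_x^{\{n,c,l\}}\Phi - \left(\boldsymbol R_x^{\{n,c,l\}}\right)'(\Phi)\,\psi
\]
and aim to prove $\norma{\rho_\psi}_{\mathcal{L}_2} = o\!\left(\norma{\psi}_{\mathcal{H}^{11/6}}\right)$ as $\norma{\psi}_{\mathcal{H}^{11/6}} \to 0$. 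Since the Gâteaux derivative is already known to be linear and bounded, this single estimate identifies it with the Fréchet derivative.

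First I would expand the integrand of $\boldsymbol R_x^{\{n,c,l\}}(\Phi+\psi)$ by the Taylor formula~\eqref{eq:taylorsin}, now with base point $a = \Phi(x',y)-\Phi(x,y)$ and increment $h = \psi(x',y)-\psi(x,y)$. The zeroth-order term reproduces $\boldsymbol R_x^{\{n,c,l\}}\Phi$, the first-order term reproduces exactly the Gâteaux derivative~\eqref{theo_gateaux_def} applied to $\psi$, and the Lagrange remainder leaves
\[
\rho_\psi(x,y) = -\frac{1}{2\pi}\,\mathcal{X}_\Omega(x,y)\int\limits_{\Omega_y}\frac{\sin\!\left[a+\theta h\right]\,h^2\,k^{\{n,c,l\}}(x'-x)}{x'-x}\,dx'
\]
for some $\theta = \theta(x,x',y) \in (0,1)$. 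The point of this step is that the linear part cancels precisely, so that only the quadratic remainder remains to be controlled.

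Next I would estimate $\rho_\psi$ pointwise. Using $\left|\sin\right| \le 1$, $\left|k^{\{n,c,l\}}\right| \le 1$ and the Hölder continuity $\left|h\right| = \left|\psi(x',y)-\psi(x,y)\right| \le C_\psi\left|x'-x\right|^{5/6}$ of $\psi$, one gets $h^2/\left|x'-x\right| \le C_\psi^2\left|x'-x\right|^{2/3}$ and hence
\[
\left|\rho_\psi(x,y)\right| \le \frac{C_\psi^2}{2\pi}\int\limits_{\Omega_y}\left|x'-x\right|^{2/3}\,dx' \le \frac{C_\psi^2}{2\pi}\,\widetilde{K},
\]
where $\widetilde{K} := \sup_{x\in\Omega_y}\int_{\Omega_y}\left|x'-x\right|^{2/3}\,dx' < \infty$ is finite by an elementary computation analogous to~\eqref{integrate_M}~-~\eqref{integrate_M1}. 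Integrating this uniform pointwise bound over the compact set $\Omega$ yields $\norma{\rho_\psi}_{\mathcal{L}_2} \le \tfrac{\widetilde{K}\sqrt{\left|\Omega\right|}}{2\pi}\,C_\psi^2$.

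The main obstacle — and the step that upgrades the estimate from Gâteaux to Fréchet — is to control the Hölder constant $C_\psi$ of the increment by its Sobolev norm, since Fréchet differentiability requires smallness relative to $\norma{\psi}_{\mathcal{H}^{11/6}}$. This is provided by the \emph{continuity} of the Sobolev embedding $\mathcal{H}^{11/6}\left(\R^2\right) \hookrightarrow C^{0,5/6}\left(\R^2\right)$ (cf.~\cite[Theorem~5.4]{Adams75}), which gives $C_\psi \le C_{\mathrm{emb}}\,\norma{\psi}_{\mathcal{H}^{11/6}}$ with a constant $C_{\mathrm{emb}}$ independent of $\psi$. Combining this with the previous bound gives $\norma{\rho_\psi}_{\mathcal{L}_2} \le \mathrm{const}\cdot\norma{\psi}_{\mathcal{H}^{11/6}}^2$, so that $\norma{\rho_\psi}_{\mathcal{L}_2}/\norma{\psi}_{\mathcal{H}^{11/6}} \to 0$ as $\norma{\psi}_{\mathcal{H}^{11/6}} \to 0$. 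This is exactly the statement that $\boldsymbol R_x^{\{n,c,l\}}$ is Fréchet differentiable with Fréchet derivative equal to the Gâteaux derivative~\eqref{theo_gateaux_def}; the corresponding claim for $\boldsymbol R_y^{\{n,c,l\}}$ follows by interchanging the roles of $x$ and $y$.
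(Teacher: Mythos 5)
Your proposal is correct, but it takes a genuinely different route from the paper. The paper does not verify the definition of Fr\'{e}chet differentiability directly; instead it invokes the standard criterion (Theorem III.5.4 in Werner) that an operator with G\^{a}teaux derivatives existing in a neighborhood is Fr\'{e}chet differentiable provided the map $\Phi \mapsto \left(\boldsymbol R_x^{\{n,c,l\}}\right)'(\Phi)$ is continuous into $\mathcal{L}\left(\mathcal{H}^{11/6},\mathcal{L}_2\right)$, and then proves this continuity is in fact Lipschitz, via the Lipschitz continuity of the cosine, Cauchy--Schwarz, and the same H\"{o}lder/Sobolev estimates you use. You instead estimate the second-order Taylor remainder of the sine directly and show $\norma{\rho_\psi}_{\mathcal{L}_2} \le \mathrm{const}\cdot\norma{\psi}_{\mathcal{H}^{11/6}}^2$, the decisive ingredient being the continuity of the embedding $\mathcal{H}^{11/6}\left(\R^2\right) \hookrightarrow C^{0,5/6}$ so that the H\"{o}lder constant of the increment is controlled by its Sobolev norm --- without this the argument would only give G\^{a}teaux differentiability, and you correctly identify it as the crux. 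Each approach buys something the other does not state explicitly: yours yields a quantitative first-order Taylor expansion with quadratic remainder (stronger than plain Fr\'{e}chet differentiability, and useful, e.g., for Newton-type methods or tangential-cone-type conditions), while the paper's yields that the derivative map is Lipschitz continuous, i.e., continuous Fr\'{e}chet differentiability, which matters for the convergence theory of the iterative methods announced for the second part. Two minor points worth a remark in a write-up: the $\theta$ in your Lagrange remainder depends on $(x,x',y)$, so it is cleaner to bound the measurable function $\sin(a+h)-\sin a-\cos(a)\,h$ by $\tfrac{1}{2}h^2$ pointwise (or use the integral form of the remainder) rather than reason about $\theta$ itself; and your integral $\int_{\Omega_y}\left|x'-x\right|^{2/3}dx'$ is not even singular, so its finiteness is immediate and needs no analogue of \eqref{integrate_M}.
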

\begin{proof}
For the proof we use the following assertion stated in, e.g.,~\cite{bakushinsky_2018,werner2007funktionalanalysis}.  
\begin{prop}[Theorem III.5.4, \cite{werner2007funktionalanalysis}; p.10, \cite{bakushinsky_2018}]
If the G\^{a}teaux derivatives $\left(\boldsymbol R_x^{\{n,c,l\}}\right)'(\Phi)$ exist for all $\Phi$ from a neighborhood of $\Phi_0 \in \mathcal{D}\left(\boldsymbol R_x^{\{n,c,l\}}\right)$ and the mappings $\Phi \rightarrow \left(\boldsymbol R_x^{\{n,c,l\}}\right)'(\Phi)$ are continuous from $\mathcal{H}^{11/6}\left(\R^2\right)$ into $\mathcal{L}\left(\mathcal{H}^{11/6},\mathcal{L}_2\right)$ at $\Phi = \Phi_0$, then $\boldsymbol R^{\{n,c,l\}}_x$ are Fr\'{e}chet differentiable at~$\Phi_0$.
\end{prop}
Hence, it suffices to show that for any $\Phi_1,\Phi_2 \in \mathcal{H}^{11/6}\left(\R^2\right)$ holds
\begin{equation*}\label{contin_frechet}
\begin{split}
\Big|\Big|\left(\boldsymbol R_x^{\{n,c,l\}}\right)'(\Phi_1)-\left(\boldsymbol R_x^{\{n,c,l\}}\right)'(\Phi_2)\Big|\Big|^2_{\mathcal{L}\left(\mathcal{H}^{11/6},\mathcal{L}_2\right)} 
\le \tilde{C} \left|\left|\Phi_1-\Phi_2\right|\right|_{\mathcal{H}^{11/6}}^2,
\end{split}
\end{equation*} 
i.e.,
\begin{equation*}
\begin{split}
\sup\limits_{\left|\left|\psi\right|\right|_{\mathcal{H}^{11/6}}=1} \Big|\Big|\underbrace{\left(\left(\boldsymbol R_x^{\{n,c,l\}}\right)'(\Phi_1)\ \psi\right)-\left(\left(\boldsymbol R_x^{\{n,c,l\}}\right)'(\Phi_2)\ \psi\right)}_{:=U_1}\Big|\Big|^2_{\mathcal{L}_2} 
\le \tilde{C} \left|\left|\Phi_1-\Phi_2\right|\right|_{\mathcal{H}^{11/6}}^2
\end{split}
\end{equation*} 
with $\tilde{C} < \infty$. 

\begingroup
\allowdisplaybreaks

Under the Lipschitz continuity$\textsuperscript{(2)}$ of the cosine function with the Lipschitz constant $L>0$ we obtain
{\small \begin{align*}
 \left|\left|U_1\right|\right|^2_{\mathcal{L}_2} 
\overset{{\color{white}3.2611}}{=}{}& \int\limits_{\R^2} \left|\left(\left(\boldsymbol R_x^{\{n,c,l\}}\right)'(\Phi_1)\ \psi\right)\left(x,y\right)-\left(\left(\boldsymbol R_x^{\{n,c,l\}}\right)'(\Phi_2)\ \psi\right)\left(x,y\right)\right|^2 d\left(x,y\right) \\
\overset{{\color{white}3.2611}}{=}{}& \int\limits_{\Omega} \left|\dfrac{1}{\pi}\int\limits_{\Omega_y}\Big[\cos\left[\Phi_1\left(x',y\right)-\Phi_1\left(x,y\right)\right]-\cos\left[\Phi_2\left(x',y\right)-\Phi_2\left(x,y \right)\right]\Big] \right.\\
&\left.\cdot \dfrac{\left[\psi\left(x',y\right)-\psi\left(x,y\right)\right]\cdot k^{\{n,c,l\}}\left(x'-x\right)}{x'-x} \ dx' \right|^2 d\left(x,y\right) \\
\overset{{\color{white}3i}\mathrm{C-S}{\color{white}31}}{\le}{}& \dfrac{1}{\pi^2} \int\limits_{\Omega}\left(\ \int\limits_{\Omega_y}\Big|\cos\left[\Phi_1\left(x',y\right)-\Phi_1\left(x,y\right)\right]-\cos\left[\Phi_2\left(x',y\right)-\Phi_2\left(x,y \right)\right]\Big|^2 dx'\right) \\ 
&\cdot 
\left(\ \int\limits_{\Omega_y}\left| \dfrac{\left[\psi\left(x',y\right)-\psi\left(x,y\right)\right]\cdot k^{\{n,c,l\}}\left(x'-x\right)}{x'-x} \right|^2 dx' \right) d\left(x,y\right) \\
\overset{(1),(2)}{\le}{}&  \dfrac{L^2C^2}{\pi^2} \int\limits_{\Omega}\left(\ \int\limits_{\Omega_y}\left|\Phi_1\left(x',y\right)-\Phi_1\left(x,y\right)-\Phi_2\left(x',y\right)+\Phi_2\left(x,y \right)\right|^2 dx'\right) \\ 
&\cdot 
\left(\ \int\limits_{\Omega_y}\left( \dfrac{\left|x'-x\right|^{5/6}}{\left|x'-x\right|} \right)^2 dx' \right) d\left(x,y\right) \\
\overset{{\color{white}3.2611}}{=}{}&  \dfrac{L^2C^2}{\pi^2} \int\limits_{\Omega}\left( \ \int\limits_{\Omega_y}\left|\left[\Phi_1\left(x',y\right)-\Phi_2\left(x',y\right)\right]-\left[\Phi_1\left(x,y\right)-\Phi_2\left(x,y \right)\right]\right|^2 dx'\right) \\ 
&\cdot 
\left(\ \int\limits_{\Omega_y} \dfrac{1}{\left|x'-x\right|^{1/3}} \ dx' \right) d\left(x,y\right) \\
\overset{\eqref{integrate_M}}{\le}{}&  \dfrac{2L^2C^2}{\pi^2} \int\limits_{\Omega}\left(\ \int\limits_{\Omega_y}\left|\Phi_1\left(x',y\right)-\Phi_2\left(x',y\right)\right|^2+\left|\Phi_1\left(x,y\right)-\Phi_2\left(x,y \right)\right|^2 dx'\right) \\
&\cdot M(x,y) \ d\left(x,y\right) \\
\overset{{\color{white}3.2611}}{\le}{}&  \sup\limits_{x\in\Omega_y}\left|M(x)\right| \dfrac{2L^2C^2}{\pi^2} \left(\ \int\limits_{\Omega_y}\int\limits_{\Omega_x} \int\limits_{\Omega_y}\left|\Phi_1\left(x',y\right)-\Phi_2\left(x',y\right)\right|^2 dx' \ dy \ dx  \right. \\
&+ \left.\int\limits_{\Omega_y}\int\limits_{\Omega_x} \int\limits_{\Omega_y} \left|\Phi_1\left(x,y\right)-\Phi_2\left(x,y \right)\right|^2 dx \ dy \ dx'  \right)\\
\overset{{\color{white}3.2611}}{=}{}&  \sup\limits_{x\in\Omega_y}\left|M(x)\right| \dfrac{2L^2C^2\left|\Omega_y\right|}{\pi^2} \left(\left|\left|\Phi_1-\Phi_2\right|\right|_{\mathcal{L}_2}^2 + \left|\left|\Phi_1-\Phi_2\right|\right|_{\mathcal{L}_2}^2 \right)\\
\overset{{\color{white}3.2611}}{=}{}&  \sup\limits_{x\in\Omega_y}\left|M(x)\right| \dfrac{4L^2C^2\left|\Omega_y\right|}{\pi^2} \left|\left|\Phi_1-\Phi_2\right|\right|_{\mathcal{L}_2}^2 \\
\overset{{\color{white}3.2611}}{\le}{}&  \sup\limits_{x\in\Omega_y}\left|M(x)\right| \dfrac{4L^2C^2\left|\Omega_y\right|}{\pi^2} \left|\left|\Phi_1-\Phi_2\right|\right|_{\mathcal{H}^{11/6}}^2
\overset{\eqref{integrate_M1}}{\le} \tilde{C} \left|\left|\Phi_1-\Phi_2\right|\right|_{\mathcal{H}^{11/6}}^2,
\end{align*}
}i.e., the mapping $\Phi \rightarrow \left(\boldsymbol R_x^{\{n,c,l\}}\right)'(\Phi)$ is continuous. Thus, the operators $\boldsymbol R_x^{\{n,c,l\}}$ representing the roof sensor are Fr\'{e}chet differentiable. \endgroup \end{proof}
\begin{theorem}\label{3.9}
The linearizations $\boldsymbol R_x^{\{n,c,l\},lin}:\mathcal{H}^{11/6}\left(\R^2\right)\rightarrow\mathcal{L}_2\left(\R^2\right)$ by means of the Fr\'{e}chet derivative of the operators $\boldsymbol R^{\{n,c,l\}}$ introduced in~\eqref{eq:3.6a}-\eqref{eq:3.6b} are given by
{\small\be{3.81}
\left(\boldsymbol R_x^{\{n,c,l\},lin}\Phi\right)(x,y):= \left( \left( \boldsymbol R_x^{\{n,c,l\}} \right)'  (0) \ \Phi\right)\left(x,y\right)=\mathcal{X}_{\Omega}(x,y)\dfrac{1}{\pi} \int\limits_{\Omega_y}{\dfrac{ [\Phi(x',y)-\Phi(x,y)] \cdot k^{\{n,c,l\}} (x'-x) }{x'-x}\ dx'}
\ee
}for $x$-direction and $\boldsymbol R^{\{n,c,l\},lin}_y$ accordingly.
\end{theorem}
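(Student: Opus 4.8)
The plan is to read the linearization directly off the Fréchet derivative established in the two preceding theorems, specialized to the zero phase. Recall that the linear approximation of a Fréchet differentiable operator $F$ about a point $\Phi_0$ is its first-order Taylor polynomial $F(\Phi_0) + F'(\Phi_0)(\cdot - \Phi_0)$. Motivated by the closed-loop AO setting, in which the sensor sees already corrected, small wavefronts, I linearize about $\Phi_0 = 0$. The first step is to observe that the zeroth-order term vanishes: substituting $\Phi \equiv 0$ into~\eqref{eq:3.6a} makes the numerator $\sin[\Phi(x',y) - \Phi(x,y)] = \sin(0) = 0$, so $\boldsymbol R_x^{\{n,c,l\}}(0) = 0$ and the affine approximation collapses to the single linear term $\left(\boldsymbol R_x^{\{n,c,l\}}\right)'(0)\,\Phi$. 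This is precisely the first equality asserted in~\eqref{3.81}, and it justifies calling the derivative at zero the linearization rather than merely the linear part of an affine model.

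Second, I would invoke the immediately preceding theorem, which showed that the Gâteaux derivative~\eqref{theo_gateaux_def} coincides with the Fréchet derivative. Hence the explicit formula is obtained by evaluating~\eqref{theo_gateaux_def} at $\Phi = 0$ and taking the direction $\psi = \Phi$. Setting $\Phi \equiv 0$ in the cosine factor gives $\cos[\Phi(x',y) - \Phi(x,y)] = \cos(0) = 1$, so the integrand reduces to $[\Phi(x',y) - \Phi(x,y)]\,k^{\{n,c,l\}}(x'-x)/(x'-x)$ and the second equality in~\eqref{3.81} follows at once.

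Third, to confirm that $\boldsymbol R_x^{\{n,c,l\},lin}$ maps $\mathcal{H}^{11/6}\left(\R^2\right)$ into $\mathcal{L}_2\left(\R^2\right)$ as stated, I would note that Theorem~\ref{3.9_gateaux} already proved $\left(\boldsymbol R_x^{\{n,c,l\}}\right)'(\Phi) \in \mathcal{L}\left(\mathcal{H}^{11/6},\mathcal{L}_2\right)$ for \emph{every} $\Phi \in \mathcal{H}^{11/6}\left(\R^2\right)$, in particular at $\Phi = 0$; the bound there, obtained via Cauchy--Schwarz together with the H\"older continuity with exponent $5/6$ and estimates~\eqref{integrate_M}--\eqref{integrate_M1}, carries over verbatim with the cosine factor replaced by $1$. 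The $y$-direction operator $\boldsymbol R_y^{\{n,c,l\},lin}$ is treated identically by interchanging the roles of $x$ and $y$, as noted after the roof-sensor theorem.

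There is no genuine obstacle: the statement is a direct specialization of the already-proved Fréchet derivative, and the computation is a one-line substitution $\cos(0) = 1$. The only point requiring any care is the conceptual one in the first step, namely verifying that $\boldsymbol R_x^{\{n,c,l\}}(0) = 0$ so that the derivative at zero alone furnishes the complete linear model; once this is in place, nothing further is needed.
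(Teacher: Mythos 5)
Your proposal is correct and follows essentially the same route as the paper, whose proof is a one-line remark that the claim follows by evaluating the Fr\'{e}chet derivative~\eqref{theo_gateaux_def} at $\Phi=0$ in the direction $\Phi$, so that $\cos(0)=1$ yields~\eqref{3.81}. Your additional observations --- that $\boldsymbol R_x^{\{n,c,l\}}(0)=0$ (so the affine Taylor model collapses to the linear term) and that the mapping property is inherited from Theorem~\ref{3.9_gateaux} --- are correct and merely make explicit what the paper leaves implicit.
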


\begin{proof}
The claim immediately follows from $\left(\boldsymbol R_x^{\{n,c,l\},lin}\Phi\right):= \left( \left( \boldsymbol R_x^{\{n,c,l\}} \right)'  (0) \ \Phi\right)$, i.e., considering the Fr\'{e}chet derivatives \eqref{theo_gateaux_def} at $\Phi=0$ and in direction $\Phi$.
\end{proof}

\subsection{Simplified linearized operators $\boldsymbol L_x^{\{n,c,l\}}$}

Variations of the finite Hilbert transform operator allow to simplify the linear approximations of the roof sensor model. Let us, thus, consider the following operators.

\begin{definition}
We define the integral operators $\boldsymbol L^{\{n,c,l\}} = \left[\boldsymbol L_x^{\{n,c,l\}},\boldsymbol L_y^{\{n,c,l\}}\right]$ by
\begin{equation}\label{3b.13}
( \boldsymbol L_x^{\{n,c,l\}} \Phi ) (x,y) := \dfrac { 1 } { \pi } \ p.v. \int \limits_{\Omega_y} \dfrac{  \Phi (x',y) k^{\{n,c,l\}} \left(x'-x\right)} {x'-x} \ dx' 
\end{equation}
and $\boldsymbol L_y^{\{n,c,l\}}$ accordingly.
\end{definition}
As derived in \cite{fefferman81,stein70}, $\boldsymbol L^{\{n,c,l\}}$ are bounded operators on $\mathcal{L}_p\left(\R^2\right)$ for $1 < p < \infty$ due to the structure of the functions $k^{\{n,c,l\}}$ introducing modulation. According to the pyramid and roof sensor model, we consider the operators as $\boldsymbol L^{\{n,c,l\}}: \mathcal{H}^{11/6}\left(\\R^2\right)\rightarrow\mathcal{L}_2\left(\R^2\right)$. In case of no modulation the operator $\boldsymbol L^n_x$ coincides with the finite Hilbert transform -- a singular Cauchy integral operator. Using the above defined operators, the linearized roof sensor measurements read as
\begin{eqnarray} \label{3.11}
s_x^{\{n,c,l\},lin} (x,y) &=& -\tfrac{1}{2}\left( \boldsymbol R^{\{n,c,l\},lin}_x\Phi \right) (x,y) \notag \\ 
&=& -\tfrac{1}{2}\mathcal{X}_{\Omega}(x,y)\left[\left( \boldsymbol L_x^{\{n,c,l\}} \Phi \right) (x,y) - \Phi (x,y) \cdot \left(\boldsymbol L_x^{\{n,c,l\}} 1\right) (x,y)\right].
\end{eqnarray}

Equation \eqref{3.11} provides two possibilities for wavefront reconstruction. As it is shown in \cite{ISTRE2,ISTRE1}, when an AO system enters the closed loop, the first term in the forward model gains in importance and an assumption of neglecting the second term in the reconstruction procedure is justifiable. Therefore, one could either use the full linearized roof sensor model or ignore the second term  $\left(\boldsymbol L_x^{\{n,c,l\}} 1\right)$ as it is done in several existing algorithms for pyramid wavefront sensor \cite{ShatHut_spie2018_overview}, e.g., the Preprocessed Cumulative Reconstructor with Domain decomposition (P-CuReD) \cite{Shat_SPIE,Shat13}, the Pyramid Fourier Transform Reconstructor \cite{Shat17_ao4elt5_clif,Shat17}, the Finite Hilbert Transform Reconstructor \cite{Shatokhina_PhDThesis}, or the Singular Value Type Reconstructor \cite{Hut17}.

\section {Adjoint operators}\label{chap:adjoint_operators}

Several iterative algorithms for solving inverse problems (e.g., the Landweber iteration, steepest descent method or conjugate gradient method for the normal equation) include the application of adjoint operators. In order to make these methods suitable for wavefront reconstruction, we derive the adjoints of the underlying operators.

First, we will evaluate the Fourier transforms of the one-term assumptions $\boldsymbol L^{\{n,c,l\}}$ defined in \eqref{3b.13} and afterwards use Plancherel's theorem to calculate the corresponding adjoints. 

The underlying operators $\boldsymbol L^{\{n,c,l\}}$ are defined from $\mathcal{H}^{11/6}\left(\R^2\right)$ into $\mathcal{L}_2\left(\R^2\right)$. In order to calculate the corresponding adjoint operators from $\mathcal{L}_2\left(\R^2\right)$ into $\mathcal{H}^{11/6}\left(\R^2\right)$, we introduce the embedding operator 
\begin{equation}\label{eq:embed_op}
i_s: \mathcal{H}^{11/6} \rightarrow \mathcal{L}_2
\end{equation}
and derive the adjoints according to \cite{RaTesch04} as, e.g., $$\left( \boldsymbol L^{\{n,c,l\},lin}_x\right)^*:\mathcal{L}_2\rightarrow \mathcal{H}^{11/6} \qquad \text{with} \qquad \left( \boldsymbol L^{\{n,c,l\},lin}_x\right)^* = i_s^*\left( \tilde{\boldsymbol L}^{\{n,c,l\},lin}_x\right)^*$$ for $\left(\tilde{\boldsymbol L}^{\{n,c,l\},lin}_x\right)^*: \mathcal{L}_2 \rightarrow \mathcal{L}_2$. For simplicity, we use the notation $\left(\boldsymbol L^{\{n,c,l\},lin}_x\right)^*$ instead of $\left(\tilde{\boldsymbol L}^{\{n,c,l\},lin}_x\right)^*$ and omit the multiplication with the aperture mask in the following. The adjoints of the roof sensor operators are considered accordingly. \bigskip

\begin{prop} \label{f3.13}
The 1d Fourier transforms in $x$-direction of the operators $\boldsymbol L_x^{\{n,c,l\}}$ defined in \eqref{3b.13} are given by
\begin{equation}\label{fourier_mult}
\left( \boldsymbol L^{\{n,c,l\}}_x\Phi\right)^{\widehat{\phantom{x}}}(\xi,y) =c^{\{n,c,l\}}\left(\xi\right)\cdot\widehat{\Phi}\left(\xi,y\right) 
\end{equation}
with
\begin{equation}\label{ft_1}
\begin{split}
c^n\left(\xi\right) =i\sgn\left(\xi\right) \\ 
\end{split}
\end{equation}
for the non-modulated sensor,
\begin{equation}\label{ft_2}
\begin{split}
c^c\left(\xi\right) =i\begin{cases} \sgn\left(\xi\right), \qquad \qquad &\mathrm{for} \ \left|\xi\right| > \tfrac{\alpha}{\lambda}, \\ \tfrac{2}{\pi}\arcsin\left(\xi\tfrac{\lambda}{\alpha}\right), & \mathrm{for} \ \left|\xi\right| \le \tfrac{\alpha}{\lambda} \end{cases}
\end{split}
\end{equation}
for the circularly modulated sensor, and
\begin{equation} \label{ft_3}
\begin{split}
c^l\left(\xi\right) =i\begin{cases} \sgn\left(\xi\right), \qquad \qquad &\mathrm{for} \ \left|\xi\right| > \tfrac{\alpha}{\lambda}, \\  \xi\tfrac{\lambda}{\alpha}, & \mathrm{for} \ \left|\xi\right| \le \tfrac{\alpha}{\lambda} \end{cases}
\end{split}
\end{equation}
for the linearly modulated sensor. The Fourier transforms of $\boldsymbol L_y^{\{n,c,l\}}$ are represented accordingly.
\end{prop}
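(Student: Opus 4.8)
The plan is to read off \eqref{fourier_mult} from the convolution structure of $\boldsymbol L_x^{\{n,c,l\}}$ in the $x$–variable. Fix $y$ and substitute $s=x'-x$ in \eqref{3b.13}; since $\Phi(\cdot,y)$ is compactly supported on $\Omega_y$, the finite integral is a genuine convolution over all of $\R$, and because each modulation function $k^{\{n,c,l\}}$ is even, the operator becomes $(\boldsymbol L_x^{\{n,c,l\}}\Phi)(\cdot,y)=\Phi(\cdot,y)*\kappa^{\{n,c,l\}}$ with the $p.v.$ kernel $\kappa^{\{n,c,l\}}(s)=-k^{\{n,c,l\}}(s)/(\pi s)$. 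By the convolution theorem the $1$d Fourier transform in $x$ is multiplication by $c^{\{n,c,l\}}(\xi)=\widehat{\kappa^{\{n,c,l\}}}(\xi)$, which is exactly the claimed form; it remains to identify this symbol in the three regimes. Throughout I would fix the convention $\widehat f(\xi)=\int_{\R}f(x)e^{-2\pi i\xi x}\,dx$ — the choice for which the band edge of the modulated symbols lands at $|\xi|=\alpha/\lambda$ rather than at $\alpha_\lambda$ from \eqref{eq:mod_param}.

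Write $g(\xi):=i\,\sgn(\xi)$ for the Fourier transform of $-1/(\pi\,\cdot)$; this is the Hilbert multiplier and is consistent with $\boldsymbol L_x^n=\boldsymbol H_x$ from \eqref{hilbert_def}. Since $k^n\equiv 1$ has transform $\delta$, the non-modulated symbol is $c^n=g=i\,\sgn$, i.e.\ \eqref{ft_1}. For the two modulated cases I factor $\kappa^{\{c,l\}}(s)=k^{\{c,l\}}(s)\cdot(-1/(\pi s))$ and turn the product into a convolution of transforms, $c^{\{c,l\}}=\widehat{k^{\{c,l\}}}*g$. The two ingredients are classical and, crucially, \emph{band-limited}: the Weber--Schafheitlin/Hankel integral $\int_0^\infty J_0(at)\cos(\omega t)\,dt=(a^2-\omega^2)^{-1/2}$ for $\omega<a$ gives $\widehat{J_0(\alpha_\lambda\,\cdot)}(\eta)=\bigl(\pi\sqrt{(\alpha/\lambda)^2-\eta^2}\bigr)^{-1}$ on $|\eta|<\alpha/\lambda$ and $0$ elsewhere, while the sinc transforms into the rectangle $\widehat{\sinc(\alpha_\lambda\,\cdot)}(\eta)=\tfrac{\lambda}{2\alpha}\,\mathcal X_{\{|\eta|<\alpha/\lambda\}}$.

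With $\beta:=\alpha/\lambda$ the symbol then falls out of two elementary integrals, $c^c(\xi)=\tfrac{i}{\pi}\int_{-\beta}^{\beta}\sgn(\xi-\eta)\,(\beta^2-\eta^2)^{-1/2}\,d\eta$ and $c^l(\xi)=\tfrac{i}{2\beta}\int_{-\beta}^{\beta}\sgn(\xi-\eta)\,d\eta$. For $|\xi|>\beta$ the sign factor is constant on the band and both return $i\,\sgn(\xi)$; for $|\xi|\le\beta$ the antiderivatives $\arcsin(\eta/\beta)$ and $\eta$ split the band at $\eta=\xi$ and yield $\tfrac{2i}{\pi}\arcsin(\xi\lambda/\alpha)$ and $i\,\xi\lambda/\alpha$, which are precisely \eqref{ft_2} and \eqref{ft_3}. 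The transforms of $\boldsymbol L_y^{\{n,c,l\}}$ follow by interchanging the roles of $x$ and $y$.

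The main obstacle is analytic, not computational: the kernels $\kappa^{\{n,c,l\}}$ are not integrable (they decay like $1/s$ and are singular at the origin), so every transform above lives in the tempered-distribution/principal-value world and the product-to-convolution step must be justified. I would anchor this on two facts already available. First, $\boldsymbol L^{\{n,c,l\}}$ is bounded on $\mathcal L_p(\R^2)$ for $1<p<\infty$ (the Fefferman--Stein result cited after \eqref{3b.13}), so $c^{\{n,c,l\}}$ is a bona fide bounded multiplier and \eqref{fourier_mult} holds as an $\mathcal L_2$–identity. Second, each $\widehat{k^{\{c,l\}}}$ has compact support, so its convolution with the bounded tempered distribution $g$ is well-defined and may be evaluated pointwise as above. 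To sidestep the distributional bookkeeping entirely, one can instead compute the symbol of the truncated kernel $k^{\{n,c,l\}}(s)\,\mathcal X_{\{|s|>\varepsilon\}}/(\pi s)$ and pass to $\varepsilon\to0^+$, recovering the same $\arcsin$ and ramp profiles.
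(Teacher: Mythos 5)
Your proof is correct and rests on the same skeleton as the paper's: fix $y$, recognize $\boldsymbol L_x^{\{n,c,l\}}$ as a convolution in the $x$-variable with the odd $p.v.$ kernel $-k^{\{n,c,l\}}(s)/(\pi s)$, and apply the convolution theorem so that the symbol $c^{\{n,c,l\}}$ is the Fourier transform of that kernel. Where you genuinely diverge is in how the kernel transforms are obtained: the paper simply quotes $\widehat{v^{\{n,c,l\}}}$ from \cite{Shat13,Veri04} and concludes via $c^{\{n,c,l\}}=-\sqrt{2/\pi}\,\widehat{v^{\{n,c,l\}}}$, whereas you derive them by factoring the kernel as $k^{\{c,l\}}\cdot\left(-1/(\pi\,\cdot)\right)$, converting the product into the convolution $\widehat{k^{\{c,l\}}}\ast(i\,\sgn)$, and exploiting that $\widehat{J_0(\alpha_\lambda\,\cdot)}$ and $\widehat{\sinc(\alpha_\lambda\,\cdot)}$ are band-limited to $|\eta|\le\alpha/\lambda$; the two resulting elementary integrals reproduce the $\arcsin$ and ramp profiles inside the band and the saturation at $i\,\sgn(\xi)$ outside it (I checked the constants: $\int_{-\beta}^{\beta}(\beta^2-\eta^2)^{-1/2}\,d\eta=\pi$ makes the normalizations come out right). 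This buys a self-contained computation and, as a side benefit, pins down the Fourier convention ($e^{-2\pi i\xi x}$) under which the band edge sits at $|\xi|=\alpha/\lambda$ --- a point the paper leaves implicit, since its $\sqrt{2\pi}$-normalized convolution theorem and the quoted band edge at $\alpha/\lambda$ strictly belong to different conventions. Your closing remarks on the distributional justification (the $\mathcal{L}_p$-boundedness of the multiplier, the compact support of $\widehat{k^{\{c,l\}}}$ making the convolution with the bounded distribution $i\,\sgn$ well-defined, or alternatively truncating the kernel at $|s|>\varepsilon$ and passing to the limit) address a step the paper passes over entirely. I see no gaps.
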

\begin{proof}
Similar considerations are contained in~\cite{Shat13,Veri04}. Since we examine the 1d Hilbert transform $\mathcal{L}_2\left(\R\right)\rightarrow \mathcal{L}_2\left(\R\right)$ we fix $y \in \Omega_x$ and investigate the operators $\boldsymbol L_x^{\{n,c,l\}}:\mathcal{H}^{11/6}\left(\R\right)\subseteq\mathcal{L}_{2}\left(\R\right)\rightarrow \mathcal{L}_2\left(\R\right)$ defined according to~\eqref{3b.13} without indicating the fixed $y$ specifically. For fixed $x \in \Omega_y$ the operators $\boldsymbol L_y^{\{n,c,l\}}:\mathcal{H}^{11/6}\left(\R\right)\subseteq\mathcal{L}_{2}\left(\R\right)\rightarrow \mathcal{L}_2\left(\R\right)$ are analyzed respectively. 
We introduce the even kernel functions $v^{\{n,c,l\}}$ by
$$v^{\{n,c,l\}}(x):=p.v. \dfrac{k^{\{n,c,l\}}(x)}{x}$$ and obtain
\begin{equation}\label{conv_meas}
\begin{split}
\left(\boldsymbol L_x^{\{n,c,l\}}\Phi\right)\left(x,y\right) ={}& -\dfrac{1}{\pi} \left(\Phi\left(\cdot,y\right)\ast v^{\{n,c,l\}}\right)\left(x\right)  \\ 
={}&- \dfrac{1}{\pi} \int\limits_{-\infty}^{\infty}\Phi\left(x',y\right)v^{\{n,c,l\}}\left(x-x'\right) \ dx' \\
={}& \lim\limits_{\delta\rightarrow 0^+}\dfrac{1}{\pi} \left[\int\limits_{-\infty}^{x-\delta}\dfrac{\Phi\left(x',y\right)k^{\{n,c,l\}}\left(x'-x\right)}{x'-x} \ dx'  \right.\\
&+\left. \int\limits_{x+\delta}^{\infty}\dfrac{\Phi\left(x',y\right)k^{\{n,c,l\}}\left(x'-x\right)}{x'-x} \ dx'\right].
\end{split}
\end{equation}
By the convolution theorem, the 1d convolution in~\eqref{conv_meas} is a multiplication in the Fourier domain, i.e.,
\begin{equation}\label{conv_op}
\left(  \boldsymbol L^{\{n,c,l\}}_x\Phi\right)^{\widehat{\phantom{x}}}\left(\xi,y\right) = - \sqrt{\dfrac{2}{\pi}} \ \widehat{\Phi}\left(\xi,y\right) \cdot \widehat{v^{\{n,c,l\}}}\left(\xi\right).
\end{equation}
As already used in~\cite{Shat13,Veri04}, the Fourier transforms of the kernel functions are calculated as
\begin{equation*}
\begin{split}
\widehat{v^n}\left(\xi\right) =-i\sqrt{\tfrac{\pi}{2}}\sgn\left(\xi\right) \\ 
\end{split}
\end{equation*}
for the non-modulated sensor,
\begin{equation*}
\begin{split}
\widehat{v^c}\left(\xi\right) =-i\begin{cases} \sqrt{\tfrac{\pi}{2}}\sgn\left(\xi\right), \qquad \qquad &\mathrm{for} \ \left|\xi\right| > \tfrac{\alpha}{\lambda}, \\  \sqrt{\tfrac{2}{\pi}}\arcsin\left(\xi\tfrac{\lambda}{\alpha}\right), & \mathrm{for} \ \left|\xi\right| \le \tfrac{\alpha}{\lambda} \end{cases}
\end{split}
\end{equation*}
for the circularly modulated sensor, and
\begin{equation*}
\begin{split}
\widehat{v^l}\left(\xi\right) =-i\begin{cases} \sqrt{\tfrac{\pi}{2}}\sgn\left(\xi\right), \qquad \qquad &\mathrm{for} \ \left|\xi\right| > \tfrac{\alpha}{\lambda}, \\  \sqrt{\tfrac{\pi}{2}}\xi\tfrac{\lambda}{\alpha}, & \mathrm{for} \ \left|\xi\right| \le \tfrac{\alpha}{\lambda} \end{cases}
\end{split}
\end{equation*}
for the linearly modulated sensor.

The claim of the Proposition follows by~\eqref{conv_op} and $c^{\{n,c,l\}} =-\sqrt{\tfrac{2}{\pi}}\widehat{v^{\{n,c,l\}}}$.
\end{proof}

Note that by the isometry of the Fourier transform we obtain $$\left|\left|\boldsymbol L_x^{\{n,c,l\}}\Phi\left(\cdot,y\right)\right|\right|_{\mathcal{L}_2}=\left|\left|\left(\boldsymbol L_x^{\{n,c,l\}}\Phi\right)^{\widehat{\phantom{x}}}\left(\cdot,y\right)\right|\right|_{\mathcal{L}_2}=\left|\left|c^{\{n,c,l\}} \cdot \widehat{\Phi}\left(\cdot,y\right)\right|\right|_{\mathcal{L}_2}=\tilde{c}\left|\left|\widehat{\Phi}\left(\cdot,y\right)\right|\right|_{\mathcal{L}_2},$$
i.e., 
\begin{equation*} \label{boundedness_L}
\left|\left|\boldsymbol L_x^{\{n,c,l\}}\Phi\left(\cdot,y\right)\right|\right|_{\mathcal{L}_2} = \tilde{c}\left|\left|\Phi\left(\cdot,y\right)\right|\right|_{\mathcal{L}_2}
\end{equation*}
for a constant $0<\tilde{c}<\infty$.


\subsection {Adjoint operators $\left( \boldsymbol L^{\{n,c,l\}}_x\right)^*$ in $\mathcal{L}_2\left(\R^2\right)$}

As previously mentioned, it is sufficient to derive the adjoints as operators from $\mathcal{L}_2$ into itself and use the embedding operator \eqref{eq:embed_op} in order to obtain adjoint operators from $\mathcal{L}_2$ into $ \mathcal{H}^{11/6}$ \cite{RaTesch04}.

\begin{prop}\label{3.13}
The adjoints $\left( \boldsymbol L^{\{n,c,l\}}_x\right)^*:\mathcal{L}_2\left(\R^2\right)\rightarrow \mathcal{L}_2\left(\R^2\right)$ of the operators $\boldsymbol L_x^{\{n,c,l\}}$ defined in \eqref{3b.13} are given by
\be{3.14} \notag
\left( \left( \boldsymbol L^{\{n,c,l\}}_x\right)^*\Psi\right)(x,y) = -\dfrac{1}{\pi}\ p.v.\int\limits_{\Omega_y}{\dfrac{\Psi(x',y) \cdot k^{\{n,c,l\}} (x'-x) }{x'-x}\ dx'},
\ee
and $\left( \boldsymbol L_y^{\{n,c,l\}}\right)^*$ accordingly, i.e., $ \boldsymbol L^{\{n,c,l\}}$ are skew-adjoint in $\mathcal{L}_2\left(\R^2\right)$.
\end{prop}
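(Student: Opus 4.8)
The plan is to diagonalise $\boldsymbol L_x^{\{n,c,l\}}$ by the Fourier transform and to identify the adjoint through the complex conjugate of its Fourier multiplier. Since the operator acts only in the $x$-variable, with $y$ entering merely as a parameter, I would fix $y\in\Omega_x$, carry out a one-dimensional computation in $x$, and then integrate the resulting identity over $y$ to recover the full two-dimensional statement. By Proposition~\ref{f3.13} the operator is the Fourier multiplier $\left(\boldsymbol L_x^{\{n,c,l\}}\Phi\right)^{\widehat{\phantom{x}}}(\xi,y)=c^{\{n,c,l\}}(\xi)\,\widehat\Phi(\xi,y)$ with symbols $c^{\{n,c,l\}}$ from \eqref{ft_1}--\eqref{ft_3}; these symbols are bounded, so $\boldsymbol L_x^{\{n,c,l\}}$ extends to a bounded operator on $\mathcal{L}_2(\R^2)$ and its $\mathcal{L}_2$-adjoint is well defined.

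The central step is a symbol computation via Plancherel's theorem. For fixed $y$ and $\Phi,\Psi\in\mathcal{L}_2(\R)$, writing $\langle f,g\rangle=\int f\,\overline g$, I would compute $\langle \boldsymbol L_x^{\{n,c,l\}}\Phi,\Psi\rangle = \langle c^{\{n,c,l\}}\,\widehat\Phi,\widehat\Psi\rangle = \langle \widehat\Phi,\,\overline{c^{\{n,c,l\}}}\,\widehat\Psi\rangle = \langle \Phi,\,(\overline{c^{\{n,c,l\}}}\,\widehat\Psi)^{\vee}\rangle$, which shows that the adjoint is again a Fourier multiplier, now with symbol $\overline{c^{\{n,c,l\}}(\xi)}$.

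The decisive observation is that each symbol in \eqref{ft_1}--\eqref{ft_3} is purely imaginary (it carries the explicit factor $i$ multiplying a real, odd function of $\xi$), so that $\overline{c^{\{n,c,l\}}(\xi)}=-c^{\{n,c,l\}}(\xi)$. Hence the adjoint has symbol $-c^{\{n,c,l\}}$, i.e.\ $\left(\boldsymbol L_x^{\{n,c,l\}}\right)^*=-\boldsymbol L_x^{\{n,c,l\}}$, which is precisely the asserted skew-adjointness. To turn this multiplier identity back into the claimed spatial integral representation, I would invoke the convolution form \eqref{conv_meas}: since $\boldsymbol L_x^{\{n,c,l\}}$ has kernel $\tfrac{1}{\pi}\,p.v.\,k^{\{n,c,l\}}(x'-x)/(x'-x)$, the operator $-\boldsymbol L_x^{\{n,c,l\}}$ has the opposite-sign kernel, giving exactly the stated formula. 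Integrating over $y\in\Omega_x$ then yields the statement on $\mathcal{L}_2(\R^2)$.

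I expect the only genuine care to lie in the analytic bookkeeping rather than in any new idea: one must ensure that the principal-value integrals and the Fourier transforms are legitimate in the $\mathcal{L}_2$ setting, so that Plancherel applies and the adjoint exists as a bona fide $\mathcal{L}_2\to\mathcal{L}_2$ operator. These points are secured by the $\mathcal{L}_2$-boundedness of $\boldsymbol L^{\{n,c,l\}}$ recorded after Proposition~\ref{f3.13} and by the convolution-kernel description \eqref{conv_meas}, which already encodes the principal value correctly; the passage from the symbol $-c^{\{n,c,l\}}$ back to the kernel is then just the inverse of the calculation performed in the proof of Proposition~\ref{f3.13}.
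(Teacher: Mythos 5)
Your proposal is correct and follows essentially the same route as the paper's proof: both pass to the Fourier domain via Proposition~\ref{f3.13}, apply Plancherel's theorem, and exploit that the symbols $c^{\{n,c,l\}}$ are purely imaginary so that $\overline{c^{\{n,c,l\}}}=-c^{\{n,c,l\}}$, yielding $\left(\boldsymbol L_x^{\{n,c,l\}}\right)^*=-\boldsymbol L_x^{\{n,c,l\}}$. The only cosmetic difference is that you spell out the return from the multiplier $-c^{\{n,c,l\}}$ to the spatial kernel, which the paper leaves implicit.
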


\begin{proof}
The proof is performed in the Fourier domain. We have to consider the $\mathcal{L}_2\left(\C\right)$-inner product due to $c^{\{n,c,l\}} \in \C$ defined in~\eqref{ft_1}~-~\eqref{ft_3}. We use Plancherel's theorem and the equality $c^{\{n,c,l\}}=-\overline{c^{\{n,c,l\}}}$ for the complex conjugate$\textsuperscript{(1)}$.

For any $\Phi,\Psi \in \mathcal{L}_2\left(\R^2\right)$ with support on $\Omega$ and $y \in \Omega_x$ holds
\begin{align*}
\langle \left(\boldsymbol L^{\{n,c,l\}}_x\Phi\right)\left(\cdot,y\right),\Psi\left(\cdot,y\right)\rangle_{\mathcal{L}_2\left(\C\right)} 
&= \langle \left(\boldsymbol L^{\{n,c,l\}}_x\Phi\right)^{\widehat{\phantom{x}}}\left(\cdot,y\right),\widehat{\Psi}\left(\cdot,y\right)\rangle_{\mathcal{L}_2\left(\C\right)}  \\
&= \int\limits_{\R}{{\left(\boldsymbol L^{\{n,c,l\}}_x\Phi\right)^{\widehat{\phantom{x}}}(\xi,y)\overline{\widehat{\Psi}(\xi,y)} \ d\xi }} \\
&\overset{\eqref{fourier_mult}}{=} \int\limits_{\R}{{\left(c^{\{n,c,l\}}\left(\xi\right) \cdot \widehat{\Phi}(\xi,y)\right)\overline{\widehat{\Psi}(\xi,y)} \ d\xi }} \\
&\overset{(1)}{=}- \int\limits_{\R}{{ \widehat{\Phi}(\xi,y)\overline{\left(c^{\{n,c,l\}}\left(\xi\right)\cdot\widehat{\Psi}(\xi,y)\right)} \ d\xi}} \\
&=\langle \widehat{\Phi}\left(\cdot,y\right),\left(\left(\boldsymbol L^{\{n,c,l\}}_x\right)^* \Psi\right)^{\widehat{\phantom{x}}}\left(\cdot,y\right)\rangle_{\mathcal{L}_2\left(\C\right)} \\
&=\langle \Phi\left(\cdot,y\right),\left(\left( \boldsymbol L^{\{n,c,l\}}_x\right)^* \Psi\right)\left(\cdot,y\right)\rangle_{\mathcal{L}_2\left(\C\right)}
 \end{align*}
 with $\left(\left(\boldsymbol L^{\{n,c,l\}}_x\right)^*\Psi\right)^{\widehat{\phantom{x}}}\left(\xi,y\right) = - c^{\{n,c,l\}}\left(\xi\right) \widehat{\Psi}\left(\xi,y\right)$, i.e., $\left( \boldsymbol L^{\{n,c,l\}}_x\right)^* = -\boldsymbol L^{\{n,c,l\}}_x$.
\end{proof}

\subsection {Adjoint operators $\left( \boldsymbol R^{\{n,c,l\},lin}_x\right)^*$ in $\mathcal{L}_2\left(\R^2\right)$}

\begin{prop}\label{3.15}
The adjoints $\left(\boldsymbol R_x^{\{n,c,l\},lin}\right)^*:\mathcal{L}_2\left(\R^2\right)\rightarrow \mathcal{L}_2\left(\R^2\right)$ of the linearized roof sensor operators $\boldsymbol R_x^{\{n,c,l\},lin}$ defined in \eqref{eq:3.81} are given by
\be{3.16} 
\begin{split}\notag
\left(\left(\boldsymbol R_x^{\{n,c,l\},lin}\right)^*\Psi\right)(x,y) &= \left(\left( \boldsymbol L^{\{n,c,l\}}_x\right)^*\Psi \right) (x,y)  - \Psi(x,y) \left( \boldsymbol L^{\{n,c,l\}}_x 1 \right) (x,y) \\
&=-\dfrac{1}{\pi} \ p.v. \int\limits_{\Omega_y}{\dfrac{ [\Psi(x',y)+\Psi(x,y)] \cdot k^{\{n,c,l\}} (x'-x) }{x'-x}\ dx'}
\end{split}
\ee
and $\left(\boldsymbol R_y^{\{n,c,l\},lin}\right)^*$ respectively.
\end{prop}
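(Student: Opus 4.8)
The plan is to read off the adjoint from the additive splitting of the linearized operator already recorded in~\eqref{3.11}. Abbreviating $g^{\{n,c,l\}} := \boldsymbol L_x^{\{n,c,l\}}1$ and letting $\boldsymbol M_{g}$ denote the pointwise multiplication operator $\boldsymbol M_g\Phi := g\cdot\Phi$, the operator of~\eqref{eq:3.81} is
\[
\boldsymbol R_x^{\{n,c,l\},lin} = \boldsymbol L_x^{\{n,c,l\}} - \boldsymbol M_{g^{\{n,c,l\}}}.
\]
Since taking adjoints is additive, it suffices to treat the two summands separately:
\[
\left(\boldsymbol R_x^{\{n,c,l\},lin}\right)^* = \left(\boldsymbol L_x^{\{n,c,l\}}\right)^* - \left(\boldsymbol M_{g^{\{n,c,l\}}}\right)^*.
\]

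First I would invoke Proposition~\ref{3.13}, which supplies the skew-adjointness $\left(\boldsymbol L_x^{\{n,c,l\}}\right)^* = -\boldsymbol L_x^{\{n,c,l\}}$ of the first summand. For the second summand the decisive observation is that $g^{\{n,c,l\}}$ is \emph{real-valued}: the kernel $k^{\{n,c,l\}}(x'-x)/(x'-x)$ in~\eqref{3b.13} is real and is integrated against the constant $1$, so its principal value is a real function. Multiplication by a real function is self-adjoint on $\mathcal{L}_2\left(\R^2\right)$, hence $\left(\boldsymbol M_{g^{\{n,c,l\}}}\right)^* = \boldsymbol M_{g^{\{n,c,l\}}}$. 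Combining the two pieces gives $\left(\boldsymbol R_x^{\{n,c,l\},lin}\right)^*\Psi = \left(\boldsymbol L_x^{\{n,c,l\}}\right)^*\Psi - \Psi\cdot\left(\boldsymbol L_x^{\{n,c,l\}}1\right)$, which is exactly the first line of the claim. The second line then follows by substituting the integral form of $\left(\boldsymbol L_x^{\{n,c,l\}}\right)^* = -\boldsymbol L_x^{\{n,c,l\}}$, pulling the pointwise factor $\Psi(x,y)$ — constant in the integration variable $x'$ — inside the principal value integral, and collecting the two contributions over the common kernel into the numerator $\Psi(x',y)+\Psi(x,y)$.

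The step I expect to need the most care is the status of the multiplication operator $\boldsymbol M_{g^{\{n,c,l\}}}$. Already in the non-modulated case $k^n\equiv 1$ one finds $g^n(x,y) = \tfrac{1}{\pi}\ln\left|(b_y-x)/(a_y-x)\right|$, which is bounded on compact subsets of the interior of $\Omega_y$ but blows up logarithmically at the endpoints $a_y,b_y$, so $g^n\notin\mathcal{L}_\infty$ and $\boldsymbol M_{g^n}$ is not obviously bounded on $\mathcal{L}_2\left(\R^2\right)$. I would resolve this by noting that such a logarithmic singularity lies in every $\mathcal{L}_p\left(\Omega\right)$ with $p<\infty$, so that for the compactly supported $\Phi,\Psi$ under consideration the product $g\,\Phi\,\overline{\Psi}$ is in $\mathcal{L}_1$ by H\"{o}lder's inequality; this is all that the symmetry identity $\langle\boldsymbol M_g\Phi,\Psi\rangle = \langle\Phi,\boldsymbol M_g\Psi\rangle$ requires.

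As a self-contained alternative that bypasses $\boldsymbol M_g$ entirely, I would verify $\langle\boldsymbol R_x^{\{n,c,l\},lin}\Phi,\Psi\rangle_{\mathcal{L}_2} = \langle\Phi,(\boldsymbol R_x^{\{n,c,l\},lin})^*\Psi\rangle_{\mathcal{L}_2}$ directly from~\eqref{eq:3.81}. Fixing $y$ and applying Fubini's theorem to the double $x,x'$ integral, the key structural fact is the antisymmetry $k^{\{n,c,l\}}(x'-x)/(x'-x) = -k^{\{n,c,l\}}(x-x')/(x-x')$, which holds because $k^{\{n,c,l\}}$ is even. Relabelling $x\leftrightarrow x'$ in the term carrying $\Phi(x',y)$ transfers that part of the operator onto $\Psi$ with a sign flip, while the diagonal term $\Phi(x,y)$ reproduces precisely the multiplication by $\boldsymbol L_x^{\{n,c,l\}}1$, yielding the stated formula without requiring $\boldsymbol M_g$ to be bounded in isolation. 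The only delicate point along this route is interchanging the principal value with the outer integration, which is licensed because, by Proposition~\ref{boundedness_roof} and the subsequent Remark, the integrals at hand converge as ordinary Lebesgue integrals for the wavefronts considered.
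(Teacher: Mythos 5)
Your main argument is essentially the paper's own proof: both split $\boldsymbol R_x^{\{n,c,l\},lin}$ via \eqref{3.11} into $\boldsymbol L_x^{\{n,c,l\}}$ minus multiplication by $\boldsymbol L_x^{\{n,c,l\}}1$, apply Proposition~\ref{3.13} to the first summand, and move the (real-valued) multiplier onto $\Psi$ in the inner product for the second. Your extra remarks on the logarithmic endpoint singularity of $\boldsymbol L_x^{n}1$ and the alternative Fubini/antisymmetry computation go beyond what the paper records, but the core route is the same.
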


\begin{proof} We choose any $\Phi,\Psi \in \mathcal{L}_2\left(\R^2\right)$ with support on the telescope pupil $\Omega$. Due to the linearity of the inner product and with representation \eqref{3.11}, it holds that
\begin{align*}\label{eq:3.17}
\langle \left(\boldsymbol R^{\{n,c,l\},lin}_x\Phi\right),\Psi\rangle&= \langle \mathcal{X}_{\Omega}\left(\boldsymbol L^{\{n,c,l\}}_x\Phi-\Phi \boldsymbol L^{\{n,c,l\}}_x 1\right),\Psi\rangle \\ \notag
&= \langle \boldsymbol L^{\{n,c,l\}}_x\Phi-\Phi \boldsymbol L^{\{n,c,l\}}_x 1,\Psi\rangle \\
& = \langle \boldsymbol L^{\{n,c,l\}}_x\Phi,\Psi\rangle - \langle \Phi \boldsymbol L^{\{n,c,l\}}_x 1,\Psi \rangle \notag\\
&=\langle \Phi,\left( \boldsymbol L^{\{n,c,l\}}_x\right)^*\Psi\rangle - \langle \Phi,\Psi \boldsymbol L^{\{n,c,l\}}_x 1\rangle \notag \\
&=\langle \Phi, \left( \boldsymbol L^{\{n,c,l\}}_x\right)^*\Psi-\Psi\boldsymbol L^{\{n,c,l\}}_x 1\rangle \notag \\
&=\langle \Phi,\left(\left(\boldsymbol R_x^{\{n,c,l\},lin}\right)^*\Psi\right)\rangle, \notag
\end{align*}
where we consider the inner product with respect to ${\mathcal{L}_2\left(\R^2\right)}$ or ${\mathcal{L}_2\left(\Omega\right)}$ respectively.
\end{proof}

\section*{Conclusion}

In this paper, we have considered the mathematical background for the pyramid wavefront sensor which is widely used in Adaptive Optics systems in areas such as astronomy, microscopy, and ophthalmology. The theoretical analysis of the forward operators of pyramid and roof wavefront sensors presented in this paper was aimed at the subsequent development of fast and stable algorithms for wavefront reconstruction from pyramid sensor data. The interference effects as well as the phase shift introduced by the pyramidal prism were neglected for simplicity. The analysis allows any kind of modulation (no, circular, linear) to be applied to the sensors. Due to the closed loop operation assumption, we can linearize the initially non-linear forward operators. Additionally, the closed loop setting allows to simplify the linearized operators further, ending up with measurements described by one term comparable to a finite Hilbert transform in case of the non-modulated sensor. Moreover, for the considered operators we derived the corresponding adjoint operators. These are used in the upcoming second part of the paper~\cite{HuSha18_2} which will be devoted to the application of iterative algorithms to the problem of wavefront reconstruction from pyramid wavefront sensor data. An extension of the analysis (linearization and calculation of adjoints) to the full pyramid sensor operator as well as the adaption of the aperture mask to segmented pupils on ELTs~\cite{engler_spie_2018,HuShaOb18,ObRafShaHu18_proc,schwartz_spie2018,schwartz_ao4elt5} is dedicated to future work.

\section*{Acknowledgements}

This work has been partly supported by the Austrian Federal Ministry of Science and Research (HRSM) and the Austrian Science Fund (F68-N36, project 5). The authors thank the Austrian Adaptive Optics team for fruitful discussions and especially Simon Hubmer for his support.

\section*{Appendix}\label{chap:appendix}

\begin{proof}[Proof of Theorem \ref{theor_meas}]
First, we consider the non-modulated case.\\
Let $\Phi \in \mathcal{H}^{11/6}\left(\R^2\right)$ denote the phase screen in radians coming into the telescope. Using wave optics based models and assuming constant amplitude over the full telescope pupil $\Omega=\Omega_y\times\Omega_x$, the complex amplitude (wave) $\psi_{aper}$ corresponding to this incoming phase reads as
\be{psi_aper}\notag
\psi_{aper}(x,y)=\mathcal{X}_{\Omega}(x,y) \cdot \exp\left(-i\Phi(x,y)\right).
\ee 
Note that $\psi_{aper} \in \mathcal{L}_p\left(\R^2\right)$ for all $1\le p \le \infty$ due to the 
compact support of $\psi_{aper}$, the continuity of $\Phi$, and further the continuity of $\psi_{aper}$ on $\Omega_y\times\Omega_x$. In order to assume the continuity of $\psi_{aper}$ on $\R^2$ we slightly modify the telescope aperture mask $\mathcal{X}_{\Omega}$ and approximate it by $\mathcal{X}^\epsilon_{\Omega} \in \mathcal{C}_0^\infty\left(\R^2\right)$ fulfilling $\mathcal{X}_{\Omega}=\mathcal{X}^\epsilon_{\Omega}$ on $\Omega$. 

The idea of the extended mask is to smoothen the sharp edges of the telescope pupil in order to guarantee $\psi_{aper}^\epsilon \in \mathcal{H}^{11/6}\left(\R^2\right)$ using
\be{psi_aper_epsilon} \notag
\psi_{aper}^\epsilon(x,y):=\mathcal{X}_{\Omega}^\epsilon(x,y) \cdot \exp\left(-i\Phi(x,y)\right).
\ee 
The above assertion is fulfilled for an approximation of the aperture mask denoted by $\mathcal{X}^\epsilon_{\Omega} \in \mathcal{C}^\infty_0\left(\R^2\right)$. One possible representation of the modification $\mathcal{X}^\epsilon_{\Omega}$ can be constructed utilizing the following Lemma for the sets $\Omega = \Omega_y\times\Omega_x = \left[a_y,b_y\right]\times\left[a_x,b_x\right] $ and $ \Omega^\epsilon = \left(a_y-\epsilon,b_y+\epsilon\right)
\times\left(a_x-\epsilon,b_x+\epsilon\right)$ for a small $\epsilon > 0$, i.e., $\Omega \subset \Omega^\epsilon$.

\begin{lemma}[Lemma 4.2, \cite{Triebel}]
Let $\Omega \subset \R^n$ and $\Omega^\epsilon \subset \R^n$ be bounded with $\overline{\Omega}\subset \Omega^\epsilon$ and $\Omega^\epsilon$ open. Then, there exists a real-valued function $\mathcal{X}^\epsilon_{\Omega} \in \mathcal{C}_0^\infty\left(\Omega^\epsilon\right)$ with $0\le \mathcal{X}^\epsilon_{\Omega}\left(z\right) \le 1$ for $z \in \Omega^\epsilon$ and $\mathcal{X}^\epsilon_{\Omega}\left(z\right)=1$ for $z \in \Omega$. 
\end{lemma}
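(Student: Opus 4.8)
The plan is to produce $\mathcal{X}^\epsilon_\Omega$ explicitly as a mollification of the characteristic function of an intermediate set squeezed between $\overline{\Omega}$ and $\Omega^\epsilon$. The fact that makes this possible is that $\overline{\Omega}$ is compact while $\Omega^\epsilon$ is open with $\overline{\Omega}\subset\Omega^\epsilon$, so that the gap $\eta:=\mathrm{dist}\bigl(\overline{\Omega},\R^n\setminus\Omega^\epsilon\bigr)$ is strictly positive (if $\Omega^\epsilon=\R^n$ one simply fixes any finite $\eta>0$). This positive gap is exactly the room needed to round off the sharp edges of $\Omega$ without letting the resulting support escape from $\Omega^\epsilon$.

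First I would fix a standard mollifier $\rho\in\mathcal{C}_0^\infty(\R^n)$ with $\rho\ge 0$, $\mathrm{supp}(\rho)\subseteq\{z : |z|\le 1\}$ and $\int_{\R^n}\rho=1$, and set $\rho_\delta(z):=\delta^{-n}\rho(z/\delta)$ for a single scale $\delta\in(0,\eta/3)$. Next I would introduce the thickened compact set $K:=\{z : \mathrm{dist}(z,\overline{\Omega})\le\delta\}$, which by the choice of $\delta$ satisfies $\overline{\Omega}\subset K\subset\Omega^\epsilon$, and define
\[
\mathcal{X}^\epsilon_\Omega:=\mathcal{X}_K\ast\rho_\delta,
\]
where $\mathcal{X}_K$ denotes the characteristic function of $K$. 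Since $K$ is compact and $\rho_\delta$ has compact support, this convolution lies in $\mathcal{C}_0^\infty(\R^n)$, with every derivative falling on the smooth factor $\rho_\delta$.

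Then I would verify the three claimed properties in turn. The bound $0\le\mathcal{X}^\epsilon_\Omega\le 1$ is immediate, since $\mathcal{X}_K$ takes values in $[0,1]$ and $\rho_\delta$ is a nonnegative kernel of unit mass. For the value on $\Omega$: if $z\in\overline{\Omega}$ (in particular if $z\in\Omega$) and $|y|\le\delta$, then $\mathrm{dist}(z-y,\overline{\Omega})\le|y|\le\delta$, so $z-y\in K$; hence the integrand $\mathcal{X}_K(z-y)\rho_\delta(y)$ equals $\rho_\delta(y)$ throughout $\mathrm{supp}(\rho_\delta)$ and $\mathcal{X}^\epsilon_\Omega(z)=\int_{\R^n}\rho_\delta=1$. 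For the support: $\mathcal{X}^\epsilon_\Omega(z)\ne 0$ forces the existence of some $y$ with $|y|\le\delta$ and $z-y\in K$, whence $\mathrm{dist}(z,\overline{\Omega})\le 2\delta<\eta$; thus $\mathrm{supp}(\mathcal{X}^\epsilon_\Omega)$ is a compact subset of $\Omega^\epsilon$, as required.

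The only genuinely delicate point is the bookkeeping of scales: the single parameter $\delta$ must be chosen small enough that the assertion $\mathcal{X}^\epsilon_\Omega=1$ on $\overline{\Omega}$ and the assertion $\mathrm{supp}(\mathcal{X}^\epsilon_\Omega)\subset\Omega^\epsilon$ hold simultaneously. This is precisely why I thicken $\overline{\Omega}$ by exactly $\delta$ before mollifying at the same scale $\delta$: the value on $\overline{\Omega}$ then comes for free, while the support is enlarged by at most $2\delta$, which stays within the budget $\eta$ by the choice $\delta<\eta/3$. Beyond this calibration the argument is the classical mollification construction of a smooth cutoff, so I expect no substantial obstacle.
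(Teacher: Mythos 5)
Your construction is correct and complete: the positivity of $\eta=\mathrm{dist}\bigl(\overline{\Omega},\R^n\setminus\Omega^\epsilon\bigr)$ follows from compactness of $\overline{\Omega}$ (as $\Omega$ is bounded) together with closedness of the complement of $\Omega^\epsilon$, the set $K$ is squeezed correctly between $\overline{\Omega}$ and $\Omega^\epsilon$ for $\delta<\eta/3$, and the three verifications (range in $[0,1]$, identity $\equiv 1$ on $\overline{\Omega}$, support within distance $2\delta<\eta$ of $\overline{\Omega}$) are all sound. The paper itself does not prove this lemma --- it cites Triebel for the statement and then only sketches the Sobolev averaging method, namely mollifying the characteristic function $\mathcal{X}_{\Omega}$ of $\Omega$ itself at scale $\epsilon$. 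That sketch, taken literally, does \emph{not} deliver the property $\mathcal{X}^\epsilon_{\Omega}\equiv 1$ on $\Omega$: the mollification of $\mathcal{X}_{\Omega}$ equals $1$ only on the set of points whose $\epsilon$-ball lies inside $\Omega$, so it drops below $1$ near $\partial\Omega$. Your extra step of first thickening $\overline{\Omega}$ to $K=\{z:\mathrm{dist}(z,\overline{\Omega})\le\delta\}$ and then mollifying $\mathcal{X}_K$ at the same scale $\delta$ is exactly the missing calibration that makes the "equal to $1$ on $\Omega$" clause hold while keeping the support inside $\Omega^\epsilon$; in that sense your argument is a self-contained and corrected version of what the paper merely gestures at, and is the standard proof one would find in the cited reference.
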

Outside of $\Omega^\epsilon$ we extend $\mathcal{X}_{\Omega}^\epsilon$ with zeros and obtain $\mathcal{X}_{\Omega}^\epsilon \in \mathcal{C}_0^\infty\left(\R^2\right)$. \bigskip

We investigate the construction of the new smooth aperture mask $\mathcal{X}^\epsilon_{\Omega}$ in more detail. As in \cite{Triebel}, we consider the smooth function $f \in \mathcal{C}_0^\infty\left(\R^2\right)$,
\begin{align}\label{eq:smooth} \notag
f\left(z\right) := \begin{cases} c \ \exp\left({-\dfrac{1}{1-\left|z\right|^2}}\right),   &\text{for} \ \left|z\right| < 1, \\
0, &\text{for} \ \left|z\right| \ge 1,\end{cases} 
\end{align}
having compact support on $\left[-1,1\right]^2$. The constant $c  \in \R$ is chosen such that $$\int\limits_{\R^2} f(z) \ dz = \int \limits_{\left|z\right|\le 1} f(z) \ dz = 1.$$
Additionally, we introduce $$f^\epsilon\left(z\right) := \dfrac{1}{\epsilon^2} f\left(\dfrac{z}{\epsilon}\right)$$ for $\epsilon >0$. 
Using this function, S. L. Sobolev established a method \cite{Sobolew} which is utilized for the smoothing of the characteristic function describing the telescope pupil. With the coordinate transformation \be{s.123}z-\epsilon z'=z'',\ee for the function $\mathcal{X}_{\Omega} \in \mathcal{L}_p\left(\R^2\right)$, $1\le p \le \infty$ we build the average function $\mathcal{X}^\epsilon_{\Omega} \in \mathcal{C}_0^\infty\left(\R^2\right)$ by
\begin{align}\label{av_func} \notag
\mathcal{X}_{\Omega}^\epsilon(z) &= \int\limits_{\R^2} \mathcal{X}_{\Omega}\left(z-\epsilon z'\right)f\left(z'\right) \ dz'  \\ \notag
&\overset{\eqref{eq:s.123}}{=} \int\limits_{\R^2} \dfrac{1}{\epsilon^2}\mathcal{X}_{\Omega}\left(z''\right)f\left(\dfrac{z-z''}{\epsilon}\right) \ dz''  \\ \notag
&=\int\limits_{\left|z-z''\right|\le \epsilon}f^{\epsilon}\left(z-z''\right)\mathcal{X}_{\Omega}\left(z''\right) \ dz'' .
\end{align}
Altogether, $\psi_{aper}^\epsilon$ is an element of $\mathcal{H}^{11/6}\left(\R^2\right)$ using that from $\Phi \in \mathcal{H}^{11/6}$ it follows $e^{-i\Phi} \in  \mathcal{H}^{11/6}$ as stated in~\cite{Brez01a,Brez01}.

The adaption of the aperture mask is necessary to guarantee a well-defined mathematical derivation of the pyramid wavefront sensor model. Please note that $$\left|\left|\mathcal{X}_{\Omega}-\mathcal{X}^\epsilon_{\Omega}\right|\right|_{\mathcal{L}_2\left(\R^2\right)} = \mathcal{O}\left(\epsilon^2\right),$$ and therefore $\mathcal{X}^\epsilon_{\Omega}$ is an arbitrarily good approximation of the aperture mask $\mathcal{X}_{\Omega}$.

The following physical argument supports the usage of a smoothed aperture mask $\Omega^{\epsilon}$ instead of the one with the sharp edges. Both masks have a compact support. Therefore, on the Fourier domain they are both represented with infinite spectra. Since $\epsilon<d$ (with $d$ denoting the telescope subaperture size) is small, the difference between the two masks, when looking in the  Fourier domain, appears only in the very high frequency components. Because the pyramidal prism is a physical device of finite size, it anyway cuts off high frequency components of the input. Additionally, the sensor brings spatial discretization in the model due to subaperture averaging. As a result, the spectra of the resulting sensor data contain frequencies only up to a given cut-off frequency defined via the subaperture size $d$ as $f_{cut}=1/(2d)$. Therefore, in practice there is no difference between the smoothed and the sharp aperture mask. In the following and throughout the paper we write $\mathcal{X}_{\Omega}$ but keep in mind that we always mean $\mathcal{X}_{\Omega}^\epsilon$ for a small $\epsilon > 0$ for the pyramid wavefront sensor model to be well-defined. \bigskip

As a next step, we consider the point spread function (PSF) of the glass pyramid.

The PSF is the inverse Fourier transform of the optical transfer function (OTF) of the pyramidal prism
\begin{equation}\label{psf_otf_fourier}
 PSF_{pyr} = {\cal F}^{-1}_{2d} \{ OTF_{pyr} \} . 
\end{equation}
Within the transmission mask approach \cite{KoVe07}, the OTF only takes splitting of the light into account and ignores the phase shifts introduced by the pyramid facets.  It is represented as a sum of 2d Heaviside functions
\begin{equation*}
OTF_{pyr}\left(\xi,\eta\right) = \sum\limits_{m=0}^1{\sum\limits_{n=0}^1{T^{mn}\left(\xi,\eta\right)}}
\end{equation*}
with
\begin{align*}\label{heaviside} 
T^{mn}\left(\xi,\eta\right) = H_{2d}\left[\left(-1\right)^m\xi,\left(-1\right)^n\eta\right] := \begin{cases} 1, \qquad &\text{if} \ \left(-1\right)^m\xi \ge 0, \quad \left(-1\right)^n\eta \ge 0, \\ 0,  &\text{otherwise.}\end{cases}
\end{align*}

The 2d Heaviside function is the product of two 1d Heaviside functions
\begin{equation*}
H_{2d} (\xi, \eta) = H_{1d} (\xi) \cdot H_{1d} (\eta)
\end{equation*}
and the 1d Heaviside function can be represented as
\begin{equation*}
H_{1d} (\xi) = \dfrac{1}{2} + \dfrac{1}{2} \cdot \texttt{sgn} (\xi).
\end{equation*}
Therefore
\begin{eqnarray*}
H_{2d} (\xi, \eta) 
&=& \dfrac{1}{4} \left[ 1 + \texttt{sgn} (\xi) + \texttt{sgn} (\eta) + \texttt{sgn} (\xi) \cdot \texttt{sgn} (\eta) \right], 
\end{eqnarray*}
which for $g_m\left(\xi\right):= \texttt{sgn}\left(\left(-1\right)^m\xi\right)$ results in
\begin{eqnarray*}
T^{mn} (\xi,\eta) 
&=& \dfrac{1}{4}\left[1+g_m\left(\xi\right)+g_n\left(\eta\right)+g_m\left(\xi\right)\cdot g_n\left(\eta\right)\right].
\end{eqnarray*}
Please note that in the above notation, $g_m$ is always meant as function of the first variable $\xi$ and $g_n$ as function of the second variable $\eta$ for 2d considerations. Furthermore, $\mathcal{F}_x$ will denote the 1d Fourier transform in the first variable and $\mathcal{F}_y$ the 1d Fourier transform in the second variable.
With \eqref{psf_otf_fourier} and due to the linearity of the Fourier transform, the PSF of the pyramid is represented as a sum of four PSFs
\begin{eqnarray*}
PSF_{pyr} &=& {\cal F}_{2d}^{-1} \Big\{\sum\limits_{m=0}^1{\sum\limits_{n=0}^1{T^{mn}}}\Big\}
= \sum\limits_{m=0}^1\sum\limits_{n=0}^1{\cal F}_{2d}^{-1} \{ T^{mn} \}  \\
&=& \sum\limits_{m=0}^1\sum\limits_{n=0}^1\underbrace{ {\cal F}_{2d}^{-1} \left\{ \dfrac{1}{4}\left[1+g_m+g_n+g_m\cdot g_n\right] \right\} }_{=:PSF_{pyr}^{mn}}.
\end{eqnarray*}
The OTF is a sum of products of functions depending either on $\xi$ or $\eta$. Therefore, the inverse 2d Fourier transform reduces to products of 1d inverse Fourier transforms. For $PSF_{pyr}^{mn}$, we obtain
\begin{eqnarray*}
PSF_{pyr}^{mn}\left(x,y\right)
&=& \dfrac{1}{4}  \left[  {\cal F}_{x}^{-1} \left\{ 1 \right\} (x) \cdot {\cal F}_{y}^{-1} \left\{  1 \right\} (y)  \right. \\ \nonumber
&+& {\cal F}_{x}^{-1} \left\{ g_m  \right\} (x)  \cdot  {\cal F}_{y}^{-1} \left\{  1 \right\} (y)  \\ \nonumber
&+& {\cal F}_{x}^{-1} \left\{  1 \right\} (x)   \cdot   {\cal F}_{y}^{-1} \left\{g_n \right\} (y)   \\ \nonumber
&+& \left. {\cal F}_{x}^{-1} \left\{ g_m  \right\} (x)  \cdot  {\cal F}_{y}^{-1} \left\{ g_n\right\} (y) \right] .  \nonumber  
\end{eqnarray*}


\begingroup
\allowdisplaybreaks

The Fourier transforms of the involved constant and signum functions do only exist in a distributional sense.
For test functions $\varphi$, we introduce the delta distribution $\delta$ defined as $\langle \delta,\varphi\rangle = \varphi(0)$. This application is well-defined for continuous functions, e.g., $\varphi \in \mathcal{H}^{1/2+\epsilon}\left(\R\right)$, and, on account of this, $\delta \in \mathcal{H}^{-1/2-\epsilon}\left(\R\right)$ for $\epsilon > 0$. The distribution $(p.v. \ \tfrac{1}{x})$ is defined via the Cauchy principal value by $$\Big\langle \left(p.v.\ \dfrac{1}{x}\right),\varphi\Big\rangle = \lim\limits_{\epsilon\rightarrow 0^+}\int\limits_{\left|x\right|>\epsilon}\dfrac{\varphi(x')}{x'} \ dx' = \pi \left(\boldsymbol H_x\varphi\right)\left(0\right)$$ for the 1d Hilbert transform defined according to \eqref{hilbert_def}. As the Hilbert transform $\boldsymbol H_x:\mathcal{L}_2\left(\R\right)\rightarrow\mathcal{L}_2\left(\R\right)$ is a well-defined operator (see, e.g., \cite[Theorem 8.1.12]{Butz71}), the evaluation of $\langle (p.v.\ \tfrac{1}{x}),\psi\rangle$ is well-defined for $\psi \in  \mathcal{H}^{1/2+\epsilon}\left(\R\right) \subset \mathcal{L}_2\left(\R\right)$, and thus $(p.v.\ \tfrac{1}{x}) \in \mathcal{H}^{-1/2-\epsilon}\left(\R\right)$. \\
Specifically, 
\begin{equation*} \label{eq:FT_1}
{\cal F}_{x}^{-1} \{ 1 \}  (x) = \sqrt{2\pi} \cdot \delta (x) \in \mathcal{H}^{-1/2-\epsilon}\left(\R\right), 
\end{equation*}
\begin{equation*} \label{eq:FT_sgn}
{\cal F}_{x}^{-1} \{ \texttt{sgn} (\cdot) \} (x) = i \cdot \sqrt{ \dfrac{2}{\pi} } \cdot p.v. \ \dfrac {1}{x} \in \mathcal{H}^{-1/2-\epsilon}\left(\R\right) ,
\end{equation*}
and in the same way
\begin{equation*}
{\cal F}_{x}^{-1} \left\{g_m \right\} (x)  = i \cdot \sqrt{ \dfrac{2}{\pi} } \cdot p.v. \ \dfrac
{1} { (-1)^m x} = i \cdot (-1)^m \cdot \sqrt{ \dfrac{2}{\pi} } \cdot p.v. \ \dfrac {1} { x} \in \mathcal{H}^{-1/2-\epsilon}\left(\R\right).
\end{equation*}
Using the notations $\delta_x$ in case the delta distribution is only applied in $x$-direction and $\delta_y$ accordingly, $v_x:=(p.v. \ \tfrac{1}{x})$ and $v_y:=(p.v. \ \tfrac{1}{y})$ as well as $v_{xy}:=(p.v. \ \tfrac{1}{xy})$, the $2$d $PSF_{pyr}^{mn} \in \mathcal{H}^{-1-\epsilon}\left(\R^2\right)$ is given by
\begin{eqnarray*}
PSF_{pyr}^{mn}
&=& \dfrac{\pi}{2} \cdot \delta_x \delta_y 
+   \dfrac{1}{2} \cdot i \cdot (-1)^m \cdot v_x\delta_y  
+  \dfrac{1}{2} \cdot i \cdot (-1)^n \cdot \delta_x  v_y
+  \dfrac{1}{2\pi} \cdot (-1)^{m+n+1} \cdot v_{xy}. \nonumber
\end{eqnarray*}
According to the standard description of optical systems, the complex amplitude $\psi_{det}$ coming to the detector plane is the inverse 2d Fourier transform of the complex amplitude after the pyramid $\left(\psi_{aper}^\epsilon\cdot OTF_{pyr}\right)$ which results (by application of the convolution theorem) in a convolution of the incoming complex amplitude $\psi^\epsilon_{aper}$ with the point spread function of the glass pyramid as described, e.g., in \cite{KoVe07}. This step can now mathematically be written in the sense of distributions as a shifted PSF distribution applied to the complex amplitude of the incoming phase
\begin{align*}
\psi_{det} (x,y) &=  \dfrac{ 1 } { 2\pi }  \langle PSF_{pyr}\left(\left(x,y\right)-\left(\cdot,\cdot\right)\right),\psi_{aper}^\epsilon \rangle.
\end{align*}    
Then, by linearity, the four independent beams $\psi_{det}^{mn} , m,n \in \left\{ 0,1 \right\} $, falling onto the detector are given by
{\footnotesize\begin{align*} \label{eq:psi_mn}
\psi_{det}^{mn} (x,y) &=  \dfrac{ 1 } { 2\pi }  \langle PSF^{mn}_{pyr}\left(\left(x,y\right)-\left(\cdot,\cdot\right)\right),\psi_{aper}^\epsilon \rangle \\
&= \dfrac{ 1 } { 2\pi }  \langle \left(       \dfrac{\pi}{2} \cdot  \delta_x  \delta_y + \dfrac{1}{2} \cdot i \cdot (-1)^m \cdot v_x \delta_y  
+  \dfrac{1}{2} \cdot i \cdot (-1)^n \cdot \delta_x  v_y +  \dfrac{1}{2\pi} \cdot (-1)^{m+n+1} \cdot v_{xy}\right)        \left(\left(x,y\right)-\left(\cdot,\cdot\right)\right),\psi_{aper}^\epsilon \rangle \\
&=\dfrac{1}{4} \langle \left(\delta_x\delta_y\right)\left(\left(x,y\right)-\left(\cdot,\cdot\right)\right),\psi_{aper}^\epsilon\rangle + 
\dfrac{\left(-1\right)^m i}{4\pi} \langle \left(v_x\delta_y\right)\left(\left(x,y\right)-\left(\cdot,\cdot\right)\right),\psi_{aper}^\epsilon\rangle \\
&+\dfrac{\left(-1\right)^n i}{4\pi} \langle \left(\delta_x v_y\right)\left(\left(x,y\right)-\left(\cdot,\cdot\right)\right),\psi_{aper}^\epsilon\rangle + 
\dfrac{\left(-1\right)^{m+n+1}}{4\pi^2} \langle v_{xy}\left(\left(x,y\right)-\left(\cdot,\cdot\right)\right),\psi_{aper}^\epsilon\rangle\\
&=\dfrac{1}{4} \psi_{aper}^\epsilon\left(x,y\right) + 
\dfrac{\left(-1\right)^m i}{4\pi} \langle v_x\left(x-\cdot\right),\psi_{aper}^\epsilon\left(\cdot,y\right)\rangle \\
&+\dfrac{\left(-1\right)^n i}{4\pi} \langle v_y\left(y-\cdot\right),\psi_{aper}^\epsilon\left(x,\cdot\right)\rangle + 
\dfrac{\left(-1\right)^{m+n+1}}{4\pi^2} \langle v_{xy}\left(\left(x,y\right)-\left(\cdot,\cdot\right)\right),\psi_{aper}^\epsilon\rangle.
\end{align*}  
}The four complex amplitudes are explicitly formulated as
\begin{align*}
\psi_{det}^{00} (x,y) &=
\dfrac{1}{4}\psi_{aper}^\epsilon\left(x,y\right) + 
\dfrac{ i}{4\pi} \langle v_x\left(x-\cdot\right),\psi_{aper}^\epsilon\left(\cdot,y\right)\rangle \\
&+\dfrac{ i}{4\pi} \langle v_y\left(y-\cdot\right),\psi_{aper}^\epsilon\left(x,\cdot\right)\rangle -
\dfrac{1}{4\pi^2} \langle v_{xy}\left(\left(x,y\right)-\left(\cdot,\cdot\right)\right),\psi_{aper}^\epsilon\rangle, 
\end{align*} 
\begin{align*}
\psi_{det}^{01} (x,y) &=
\dfrac{1}{4}\psi_{aper}^\epsilon\left(x,y\right) + 
\dfrac{ i}{4\pi} \langle v_x\left(x-\cdot\right),\psi_{aper}^\epsilon\left(\cdot,y\right)\rangle \\
&-\dfrac{ i}{4\pi} \langle v_y\left(y-\cdot\right),\psi_{aper}^\epsilon\left(x,\cdot\right)\rangle +
\dfrac{1}{4\pi^2} \langle v_{xy}\left(\left(x,y\right)-\left(\cdot,\cdot\right)\right),\psi_{aper}^\epsilon\rangle, 
\end{align*} 
\begin{align*}
\psi_{det}^{10} (x,y) &=
\dfrac{1}{4}\psi_{aper}^\epsilon\left(x,y\right) -
\dfrac{ i}{4\pi} \langle v_x\left(x-\cdot\right),\psi_{aper}^\epsilon\left(\cdot,y\right)\rangle \\
&+\dfrac{ i}{4\pi} \langle v_y\left(y-\cdot\right),\psi_{aper}^\epsilon\left(x,\cdot\right)\rangle +
\dfrac{1}{4\pi^2} \langle v_{xy}\left(\left(x,y\right)-\left(\cdot,\cdot\right)\right),\psi_{aper}^\epsilon\rangle, 
\end{align*} 
\begin{align*}
\psi_{det}^{11} (x,y) &=
\dfrac{1}{4}\psi_{aper}^\epsilon\left(x,y\right) -
\dfrac{ i}{4\pi} \langle v_x\left(x-\cdot\right),\psi_{aper}^\epsilon\left(\cdot,y\right)\rangle \\
&-\dfrac{ i}{4\pi} \langle v_y\left(y-\cdot\right),\psi_{aper}^\epsilon\left(x,\cdot\right)\rangle -
\dfrac{1}{4\pi^2} \langle v_{xy}\left(\left(x,y\right)-\left(\cdot,\cdot\right)\right),\psi_{aper}^\epsilon\rangle. 
\end{align*} 
Now, the intensities on the detector are computed as 
\be{intensity} \notag
I^{mn}\left(x,y\right) = \psi_{det}^{mn}\left(x,y\right)\overline{\psi_{det}^{mn}\left(x,y\right)}, \qquad \qquad m,n\in \{0,1\}.
\ee
If we abbreviate $\psi_{aper}^\epsilon$ by $\psi$ and omit the arguments for simplicity of notation the four intensities are evaluated as
\begin{eqnarray*}
I^{00}\left(x,y\right) &=& \psi_{det}^{00}\left(x,y\right)\overline{\psi_{det}^{00}\left(x,y\right)} \\
&=& \left[\dfrac{1}{4}\psi + 
\dfrac{ i}{4\pi} \langle v_x,\psi\rangle
+\dfrac{ i}{4\pi} \langle v_y,\psi\rangle -
\dfrac{1}{4\pi^2} \langle v_{xy},\psi\rangle\right] \\
& \cdot &
 \left[\dfrac{1}{4}\overline{\psi} -
\dfrac{ i}{4\pi} \langle v_x,\overline{\psi}\rangle
-\dfrac{ i}{4\pi} \langle v_y,\overline{\psi}\rangle -
\dfrac{1}{4\pi^2} \langle v_{xy},\overline{\psi}\rangle\right]  \\
&=&
\dfrac{1}{16}\psi\overline{\psi} - \dfrac{i}{16\pi}\psi\langle v_x,\overline{\psi}\rangle- \dfrac{i}{16\pi}\psi\langle v_y,\overline{\psi}\rangle - \dfrac{1}{16\pi^2}\psi\langle v_{xy},\overline{\psi}\rangle \\
&+&
\dfrac{i}{16\pi}\langle v_x,\psi\rangle \overline{\psi} + \dfrac{1}{16\pi^2}\langle v_x,\psi\rangle \langle v_x,\overline{\psi}\rangle+\dfrac{1}{16\pi^2}\langle v_x,\psi\rangle \langle v_y,\overline{\psi}\rangle - \dfrac{i}{16\pi^3}\langle v_x,\psi\rangle \langle v_{xy},\overline{\psi}\rangle \\
&+&
\dfrac{i}{16\pi}\langle v_y,\psi\rangle \overline{\psi} + \dfrac{1}{16\pi^2}\langle v_y,\psi\rangle \langle v_x,\overline{\psi}\rangle+\dfrac{1}{16\pi^2}\langle v_y,\psi\rangle \langle v_y,\overline{\psi}\rangle - \dfrac{i}{16\pi^3}\langle v_y,\psi\rangle \langle v_{xy},\overline{\psi}\rangle \\
&-&
\dfrac{1}{16\pi^2}\langle v_{xy},\psi\rangle \overline{\psi} + \dfrac{i}{16\pi^3}\langle v_{xy},\psi\rangle \langle v_x,\overline{\psi}\rangle+\dfrac{i}{16\pi^3}\langle v_{xy},\psi\rangle \langle v_y,\overline{\psi}\rangle + \dfrac{1}{16\pi^4}\langle v_{xy},\psi\rangle \langle v_{xy},\overline{\psi}\rangle,
\end{eqnarray*} 
\begin{eqnarray*}
I^{01}\left(x,y\right) &=& \psi_{det}^{01}\left(x,y\right)\overline{\psi_{det}^{01}\left(x,y\right)} \\
&=& \left[\dfrac{1}{4}\psi + 
\dfrac{ i}{4\pi} \langle v_x,\psi\rangle
-\dfrac{ i}{4\pi} \langle v_y,\psi\rangle +
\dfrac{1}{4\pi^2} \langle v_{xy},\psi\rangle\right]  \\
&\cdot&
 \left[\dfrac{1}{4}\overline{\psi} -
\dfrac{ i}{4\pi} \langle v_x,\overline{\psi}\rangle
+\dfrac{ i}{4\pi} \langle v_y,\overline{\psi}\rangle +
\dfrac{1}{4\pi^2} \langle v_{xy},\overline{\psi}\rangle\right]  \\
&=&
\dfrac{1}{16}\psi\overline{\psi} - \dfrac{i}{16\pi}\psi\langle v_x,\overline{\psi}\rangle+ \dfrac{i}{16\pi}\psi\langle v_y,\overline{\psi}\rangle + \dfrac{1}{16\pi^2}\psi\langle v_{xy},\overline{\psi}\rangle \\
&+&
\dfrac{i}{16\pi}\langle v_x,\psi\rangle \overline{\psi} + \dfrac{1}{16\pi^2}\langle v_x,\psi\rangle \langle v_x,\overline{\psi}\rangle-\dfrac{1}{16\pi^2}\langle v_x,\psi\rangle \langle v_y,\overline{\psi}\rangle + \dfrac{i}{16\pi^3}\langle v_x,\psi\rangle \langle v_{xy},\overline{\psi}\rangle \\
&-&
\dfrac{i}{16\pi}\langle v_y,\psi\rangle \overline{\psi} - \dfrac{1}{16\pi^2}\langle v_y,\psi\rangle \langle v_x,\overline{\psi}\rangle+\dfrac{1}{16\pi^2}\langle v_y,\psi\rangle \langle v_y,\overline{\psi}\rangle - \dfrac{i}{16\pi^3}\langle v_y,\psi\rangle \langle v_{xy},\overline{\psi}\rangle \\
&+&
\dfrac{1}{16\pi^2}\langle v_{xy},\psi\rangle \overline{\psi} - \dfrac{i}{16\pi^3}\langle v_{xy},\psi\rangle \langle v_x,\overline{\psi}\rangle+\dfrac{i}{16\pi^3}\langle v_{xy},\psi\rangle \langle v_y,\overline{\psi}\rangle + \dfrac{1}{16\pi^4}\langle v_{xy},\psi\rangle \langle v_{xy},\overline{\psi}\rangle,
\end{eqnarray*}
\begin{eqnarray*}
I^{10}\left(x,y\right) &=& \psi_{det}^{10}\left(x,y\right)\overline{\psi_{det}^{10}\left(x,y\right)} \\
&=& \left[\dfrac{1}{4}\psi - 
\dfrac{ i}{4\pi} \langle v_x,\psi\rangle
+\dfrac{ i}{4\pi} \langle v_y,\psi\rangle +
\dfrac{1}{4\pi^2} \langle v_{xy},\psi\rangle\right] \\
&\cdot&
 \left[\dfrac{1}{4}\overline{\psi} +
\dfrac{ i}{4\pi} \langle v_x,\overline{\psi}\rangle
-\dfrac{ i}{4\pi} \langle v_y,\overline{\psi}\rangle +
\dfrac{1}{4\pi^2} \langle v_{xy},\overline{\psi}\rangle\right]  \\
&=&
\dfrac{1}{16}\psi\overline{\psi} + \dfrac{i}{16\pi}\psi\langle v_x,\overline{\psi}\rangle- \dfrac{i}{16\pi}\psi\langle v_y,\overline{\psi}\rangle + \dfrac{1}{16\pi^2}\psi\langle v_{xy},\overline{\psi}\rangle \\
&-&
\dfrac{i}{16\pi}\langle v_x,\psi\rangle \overline{\psi} + \dfrac{1}{16\pi^2}\langle v_x,\psi\rangle \langle v_x,\overline{\psi}\rangle-\dfrac{1}{16\pi^2}\langle v_x,\psi\rangle \langle v_y,\overline{\psi}\rangle - \dfrac{i}{16\pi^3}\langle v_x,\psi\rangle \langle v_{xy},\overline{\psi}\rangle \\
&+&
\dfrac{i}{16\pi}\langle v_y,\psi\rangle \overline{\psi} - \dfrac{1}{16\pi^2}\langle v_y,\psi\rangle \langle v_x,\overline{\psi}\rangle+\dfrac{1}{16\pi^2}\langle v_y,\psi\rangle \langle v_y,\overline{\psi}\rangle + \dfrac{i}{16\pi^3}\langle v_y,\psi\rangle \langle v_{xy},\overline{\psi}\rangle \\
&+&
\dfrac{1}{16\pi^2}\langle v_{xy},\psi\rangle \overline{\psi} + \dfrac{i}{16\pi^3}\langle v_{xy},\psi\rangle \langle v_x,\overline{\psi}\rangle-\dfrac{i}{16\pi^3}\langle v_{xy},\psi\rangle \langle v_y,\overline{\psi}\rangle + \dfrac{1}{16\pi^4}\langle v_{xy},\psi\rangle \langle v_{xy},\overline{\psi}\rangle,
\end{eqnarray*}
\begin{eqnarray*}
I^{11}\left(x,y\right) &=& \psi_{det}^{11}\left(x,y\right)\overline{\psi_{det}^{11}\left(x,y\right)} \\
&=& \left[\dfrac{1}{4}\psi - 
\dfrac{ i}{4\pi} \langle v_x,\psi\rangle
-\dfrac{ i}{4\pi} \langle v_y,\psi\rangle -
\dfrac{1}{4\pi^2} \langle v_{xy},\psi\rangle\right] \\
&\cdot&
 \left[\dfrac{1}{4}\overline{\psi} +
\dfrac{ i}{4\pi} \langle v_x,\overline{\psi}\rangle
+\dfrac{ i}{4\pi} \langle v_y,\overline{\psi}\rangle -
\dfrac{1}{4\pi^2} \langle v_{xy},\overline{\psi}\rangle\right]  \\
&=&
\dfrac{1}{16}\psi\overline{\psi} + \dfrac{i}{16\pi}\psi\langle v_x,\overline{\psi}\rangle+ \dfrac{i}{16\pi}\psi\langle v_y,\overline{\psi}\rangle - \dfrac{1}{16\pi^2}\psi\langle v_{xy},\overline{\psi}\rangle \\
&-&
\dfrac{i}{16\pi}\langle v_x,\psi\rangle \overline{\psi} + \dfrac{1}{16\pi^2}\langle v_x,\psi\rangle \langle v_x,\overline{\psi}\rangle+\dfrac{1}{16\pi^2}\langle v_x,\psi\rangle \langle v_y,\overline{\psi}\rangle + \dfrac{i}{16\pi^3}\langle v_x,\psi\rangle \langle v_{xy},\overline{\psi}\rangle \\
&-&
\dfrac{i}{16\pi}\langle v_y,\psi\rangle \overline{\psi} + \dfrac{1}{16\pi^2}\langle v_y,\psi\rangle \langle v_x,\overline{\psi}\rangle+\dfrac{1}{16\pi^2}\langle v_y,\psi\rangle \langle v_y,\overline{\psi}\rangle + \dfrac{i}{16\pi^3}\langle v_y,\psi\rangle \langle v_{xy},\overline{\psi}\rangle \\
&-&
\dfrac{1}{16\pi^2}\langle v_{xy},\psi\rangle \overline{\psi} - \dfrac{i}{16\pi^3}\langle v_{xy},\psi\rangle \langle v_x,\overline{\psi}\rangle-\dfrac{i}{16\pi^3}\langle v_{xy},\psi\rangle \langle v_y,\overline{\psi}\rangle + \dfrac{1}{16\pi^4}\langle v_{xy},\psi\rangle \langle v_{xy},\overline{\psi}\rangle.
\end{eqnarray*}

Taking the sums according to~\eqref{meas_equ}, we obtain the non-modulated (indicated by the superscript $n$) pyramid sensor data $s_x^n$ in $x$-direction as
\begin{align*}
I_0\cdot s_x^n (x,y) &= [I^{01}(x,y) + I^{00}(x,y)] - [I^{11}(x,y) + I^{10}(x,y)]  \\
                 &=
\dfrac{1}{16}\psi\overline{\psi} - \dfrac{i}{16\pi}\psi\langle v_x,\overline{\psi}\rangle+ \dfrac{i}{16\pi}\psi\langle v_y,\overline{\psi}\rangle + \dfrac{1}{16\pi^2}\psi\langle v_{xy},\overline{\psi}\rangle \\
&+
\dfrac{i}{16\pi}\langle v_x,\psi\rangle \overline{\psi} + \dfrac{1}{16\pi^2}\langle v_x,\psi\rangle \langle v_x,\overline{\psi}\rangle-\dfrac{1}{16\pi^2}\langle v_x,\psi\rangle \langle v_y,\overline{\psi}\rangle + \dfrac{i}{16\pi^3}\langle v_x,\psi\rangle \langle v_{xy},\overline{\psi}\rangle \\
&-
\dfrac{i}{16\pi}\langle v_y,\psi\rangle \overline{\psi} - \dfrac{1}{16\pi^2}\langle v_y,\psi\rangle \langle v_x,\overline{\psi}\rangle+\dfrac{1}{16\pi^2}\langle v_y,\psi\rangle \langle v_y,\overline{\psi}\rangle - \dfrac{i}{16\pi^3}\langle v_y,\psi\rangle \langle v_{xy},\overline{\psi}\rangle \\
&+
\dfrac{1}{16\pi^2}\langle v_{xy},\psi\rangle \overline{\psi} - \dfrac{i}{16\pi^3}\langle v_{xy},\psi\rangle \langle v_x,\overline{\psi}\rangle+\dfrac{i}{16\pi^3}\langle v_{xy},\psi\rangle \langle v_y,\overline{\psi}\rangle + \dfrac{1}{16\pi^4}\langle v_{xy},\psi\rangle \langle v_{xy},\overline{\psi}\rangle \\
                 &+
\dfrac{1}{16}\psi\overline{\psi} - \dfrac{i}{16\pi}\psi\langle v_x,\overline{\psi}\rangle- \dfrac{i}{16\pi}\psi\langle v_y,\overline{\psi}\rangle - \dfrac{1}{16\pi^2}\psi\langle v_{xy},\overline{\psi}\rangle \\
&+
\dfrac{i}{16\pi}\langle v_x,\psi\rangle \overline{\psi} + \dfrac{1}{16\pi^2}\langle v_x,\psi\rangle \langle v_x,\overline{\psi}\rangle+\dfrac{1}{16\pi^2}\langle v_x,\psi\rangle \langle v_y,\overline{\psi}\rangle - \dfrac{i}{16\pi^3}\langle v_x,\psi\rangle \langle v_{xy},\overline{\psi}\rangle \\
&+
\dfrac{i}{16\pi}\langle v_y,\psi\rangle \overline{\psi} + \dfrac{1}{16\pi^2}\langle v_y,\psi\rangle \langle v_x,\overline{\psi}\rangle+\dfrac{1}{16\pi^2}\langle v_y,\psi\rangle \langle v_y,\overline{\psi}\rangle - \dfrac{i}{16\pi^3}\langle v_y,\psi\rangle \langle v_{xy},\overline{\psi}\rangle \\
&-
\dfrac{1}{16\pi^2}\langle v_{xy},\psi\rangle \overline{\psi} + \dfrac{i}{16\pi^3}\langle v_{xy},\psi\rangle \langle v_x,\overline{\psi}\rangle+\dfrac{i}{16\pi^3}\langle v_{xy},\psi\rangle \langle v_y,\overline{\psi}\rangle + \dfrac{1}{16\pi^4}\langle v_{xy},\psi\rangle \langle v_{xy},\overline{\psi}\rangle \\
                  &-   
\dfrac{1}{16}\psi\overline{\psi} - \dfrac{i}{16\pi}\psi\langle v_x,\overline{\psi}\rangle- \dfrac{i}{16\pi}\psi\langle v_y,\overline{\psi}\rangle + \dfrac{1}{16\pi^2}\psi\langle v_{xy},\overline{\psi}\rangle \\
&+
\dfrac{i}{16\pi}\langle v_x,\psi\rangle \overline{\psi} - \dfrac{1}{16\pi^2}\langle v_x,\psi\rangle \langle v_x,\overline{\psi}\rangle-\dfrac{1}{16\pi^2}\langle v_x,\psi\rangle \langle v_y,\overline{\psi}\rangle - \dfrac{i}{16\pi^3}\langle v_x,\psi\rangle \langle v_{xy},\overline{\psi}\rangle \\
&+
\dfrac{i}{16\pi}\langle v_y,\psi\rangle \overline{\psi} - \dfrac{1}{16\pi^2}\langle v_y,\psi\rangle \langle v_x,\overline{\psi}\rangle-\dfrac{1}{16\pi^2}\langle v_y,\psi\rangle \langle v_y,\overline{\psi}\rangle - \dfrac{i}{16\pi^3}\langle v_y,\psi\rangle \langle v_{xy},\overline{\psi}\rangle \\
&+
\dfrac{1}{16\pi^2}\langle v_{xy},\psi\rangle \overline{\psi} + \dfrac{i}{16\pi^3}\langle v_{xy},\psi\rangle \langle v_x,\overline{\psi}\rangle+\dfrac{i}{16\pi^3}\langle v_{xy},\psi\rangle \langle v_y,\overline{\psi}\rangle - \dfrac{1}{16\pi^4}\langle v_{xy},\psi\rangle \langle v_{xy},\overline{\psi}\rangle \\
                  &-     
\dfrac{1}{16}\psi\overline{\psi} - \dfrac{i}{16\pi}\psi\langle v_x,\overline{\psi}\rangle+ \dfrac{i}{16\pi}\psi\langle v_y,\overline{\psi}\rangle - \dfrac{1}{16\pi^2}\psi\langle v_{xy},\overline{\psi}\rangle \\
&+
\dfrac{i}{16\pi}\langle v_x,\psi\rangle \overline{\psi} - \dfrac{1}{16\pi^2}\langle v_x,\psi\rangle \langle v_x,\overline{\psi}\rangle+\dfrac{1}{16\pi^2}\langle v_x,\psi\rangle \langle v_y,\overline{\psi}\rangle + \dfrac{i}{16\pi^3}\langle v_x,\psi\rangle \langle v_{xy},\overline{\psi}\rangle \\
&-
\dfrac{i}{16\pi}\langle v_y,\psi\rangle \overline{\psi} + \dfrac{1}{16\pi^2}\langle v_y,\psi\rangle \langle v_x,\overline{\psi}\rangle-\dfrac{1}{16\pi^2}\langle v_y,\psi\rangle \langle v_y,\overline{\psi}\rangle - \dfrac{i}{16\pi^3}\langle v_y,\psi\rangle \langle v_{xy},\overline{\psi}\rangle \\
&-
\dfrac{1}{16\pi^2}\langle v_{xy},\psi\rangle \overline{\psi} - \dfrac{i}{16\pi^3}\langle v_{xy},\psi\rangle \langle v_x,\overline{\psi}\rangle+\dfrac{i}{16\pi^3}\langle v_{xy},\psi\rangle \langle v_y,\overline{\psi}\rangle - \dfrac{1}{16\pi^4}\langle v_{xy},\psi\rangle \langle v_{xy},\overline{\psi}\rangle   ,   
\end{align*}   
which simplifies to
\begin{align*}
I_0\cdot s_x^n (x,y) &=
-\dfrac{i}{4\pi}\psi\langle v_x,\overline{\psi}\rangle  +\dfrac{i}{4\pi}\langle v_x,\psi\rangle\overline{\psi} - \dfrac{i}{4\pi^3}\langle v_y,\psi\rangle\langle v_{xy},\overline{\psi}\rangle+ \dfrac{i}{4\pi^3}\langle v_{xy},\psi\rangle\langle v_y,\overline{\psi}\rangle    \\
                                       &=
-\dfrac{i}{4\pi}\left[\psi\langle v_x,\overline{\psi}\rangle -\langle v_x,\psi\rangle\overline{\psi}\right] - \dfrac{i}{4\pi^3}\left[\langle v_y,\psi\rangle\langle v_{xy},\overline{\psi}\rangle - \langle v_{xy},\psi\rangle\langle v_y,\overline{\psi}\rangle  \right]     \\                                                  
                                       &=
-\dfrac{i}{4\pi}\left[\psi\left(x,y\right)\langle v_x\left(x-\cdot\right),\overline{\psi\left(\cdot,y\right)}\rangle -\langle v_x\left(x-\cdot\right),\psi\left(\cdot,y\right)\rangle\overline{\psi\left(x,y\right)}\right] \\
&- \dfrac{i}{4\pi^3}\left[\langle v_y\left(y-\cdot\right),\psi\left(x,\cdot\right)\rangle\langle v_{xy}\left(x-\cdot,y-\cdot\right),\overline{\psi}\rangle - \langle v_{xy}\left(x-\cdot,y-\cdot\right),\psi\rangle\langle v_y\left(y-\cdot\right),\overline{\psi\left(x,\cdot\right)}\rangle  \right].
\end{align*}   
This can further be formulated as
\begin{align*}
I_0\cdot s_x^n (x,y) &= -\dfrac{i}{4\pi}\left[\mathcal{X}_{\Omega}\left(x,y\right) \cdot \exp \left(-i\Phi\left(x,y\right)\right) \ p.v. \int\limits_\R \mathcal{X}_{\Omega}\left(x',y\right)\cdot \exp\left(i\Phi\left(x',y\right)\right)\dfrac{1}{x-x'} \ dx' \right. \\
&-\left. \mathcal{X}_{\Omega}\left(x,y\right)\cdot\exp\left(i\Phi\left(x,y\right)\right) \ p.v. \int\limits_\R \mathcal{X}_{\Omega}\left(x',y\right)\cdot \exp\left(-i\Phi\left(x',y\right)\right) \dfrac{1}{x-x'} \ dx'  \right]\\
&-\dfrac{i}{4\pi^3}\left[ p.v. \int\limits_\R \mathcal{X}_{\Omega}\left(x,y''\right)\cdot \exp\left(-i\Phi\left(x,y''\right)\right)\dfrac{1}{y-y''} \ dy'' \right. \\
&\cdot \left. p.v. \int\limits_\R  p.v. \int\limits_\R \mathcal{X}_{\Omega}\left(x',y'\right)\cdot \exp\left(i\Phi\left(x',y'\right)\right)\dfrac{1}{\left(x-x'\right)\left(y-y'\right)} \ dy' \ dx'  \right. \\
&-\left. p.v. \int\limits_\R \mathcal{X}_{\Omega}\left(x,y''\right)\cdot \exp\left(i\Phi\left(x,y''\right)\right)\dfrac{1}{y-y''} \ dy'' \right. \\
&\cdot \left.  p.v. \int\limits_\R  p.v. \int\limits_\R \mathcal{X}_{\Omega}\left(x',y'\right)\cdot \exp\left(-i\Phi\left(x',y'\right)\right)\dfrac{1}{\left(x-x'\right)\left(y-y'\right)} \ dy' \ dx'  \right] \\
&=
-\dfrac{i}{4\pi}\left[\mathcal{X}_{\Omega}\left(x,y\right) \cdot \exp \left(-i\Phi\left(x,y\right)\right) \ p.v. \int\limits_{\Omega_y} \exp\left(i\Phi\left(x',y\right)\right)\dfrac{1}{x-x'} \ dx' \right. \\
&-\left. \mathcal{X}_{\Omega}\left(x,y\right)\cdot\exp\left(i\Phi\left(x,y\right)\right) \ p.v. \int\limits_{\Omega_y} \exp\left(-i\Phi\left(x',y\right)\right) \dfrac{1}{x-x'} \ dx'  \right]\\
&-\dfrac{i}{4\pi^3}\left[\mathcal{X}_{\Omega_y}\left(x\right) \ p.v. \int\limits_{\Omega_x} \exp\left(-i\Phi\left(x,y''\right)\right)\dfrac{1}{y-y''} \ dy'' \right. \\
&\cdot \left. p.v. \int\limits_{\Omega_y}  p.v. \int\limits_{\Omega_x} \exp\left(i\Phi\left(x',y'\right)\right)\dfrac{1}{\left(x-x'\right)\left(y-y'\right)} \ dy' \ dx'  \right. \\
&-\left. \mathcal{X}_{\Omega_y}\left(x\right)p.v. \int\limits_{\Omega_x} \exp\left(i\Phi\left(x,y''\right)\right)\dfrac{1}{y-y''} \ dy'' \right. \\
&\cdot \left.  p.v. \int\limits_{\Omega_y}  p.v. \int\limits_{\Omega_x}\exp\left(-i\Phi\left(x',y'\right)\right)\dfrac{1}{\left(x-x'\right)\left(y-y'\right)} \ dy' \ dx'  \right]. 
\end{align*}
With Euler's  and trigonometric formulas we obtain
\begin{align*}
I_0\cdot s_x^n (x,y)
={}&- \dfrac{i}{4\pi}\mathcal{X}_{\Omega}\left(x,y\right) \left[2i \ p.v. \int\limits_{\Omega_y}\dfrac{\sin\left[\Phi\left(x',y\right)-\Phi\left(x,y\right)\right]}{x-x'} \ dx'  \right] \\
&-
\dfrac{i}{4\pi^3}\mathcal{X}_{\Omega_y}\left(x\right)\left[2i \ p.v. \int\limits_{\Omega_x}  \ p.v. \int\limits_{\Omega_y}  \ p.v. \int\limits_{\Omega_x} \dfrac{\sin\left[\Phi\left(x',y'\right)-\Phi\left(x,y''\right)\right]}{\left(x-x'\right)\left(y-y'\right)\left(y-y''\right)} \ dy'  \ dx' \ dy''
\right] \\
={}&
\mathcal{X}_{\Omega}\left(x,y\right)\dfrac{1}{2\pi} \ p.v. \int\limits_{\Omega_y}\dfrac{\sin\left[\Phi\left(x',y\right)-\Phi\left(x,y\right)\right]}{x-x'} \ dx' \\
&+
\mathcal{X}_{\Omega_y}\left(x\right)\dfrac{1}{2\pi^3} \ p.v. \int\limits_{\Omega_y}  \ p.v. \int\limits_{\Omega_x}  \ p.v. \int\limits_{\Omega_x} \dfrac{\sin\left[\Phi\left(x',y'\right)-\Phi\left(x,y''\right)\right]}{\left(x-x'\right)\left(y-y'\right)\left(y-y''\right)} \ dy''  \ dy' \ dx'.
\end{align*}

Taking the sums according to~\eqref{meas_equ}, the non-modulated pyramid sensor data $s_y^n$ in $y$-direction are written as
\begin{align*}
I_0\cdot s_y^n (x,y) &= [I^{01}(x,y) + I^{11}(x,y)] - [I^{00}(x,y) + I^{10}(x,y)]  \\
                 &=
\dfrac{1}{16}\psi\overline{\psi} - \dfrac{i}{16\pi}\psi\langle v_x,\overline{\psi}\rangle+ \dfrac{i}{16\pi}\psi\langle v_y,\overline{\psi}\rangle + \dfrac{1}{16\pi^2}\psi\langle v_{xy},\overline{\psi}\rangle \\
&+
\dfrac{i}{16\pi}\langle v_x,\psi\rangle \overline{\psi} + \dfrac{1}{16\pi^2}\langle v_x,\psi\rangle \langle v_x,\overline{\psi}\rangle-\dfrac{1}{16\pi^2}\langle v_x,\psi\rangle \langle v_y,\overline{\psi}\rangle + \dfrac{i}{16\pi^3}\langle v_x,\psi\rangle \langle v_{xy},\overline{\psi}\rangle \\
&-
\dfrac{i}{16\pi}\langle v_y,\psi\rangle \overline{\psi} - \dfrac{1}{16\pi^2}\langle v_y,\psi\rangle \langle v_x,\overline{\psi}\rangle+\dfrac{1}{16\pi^2}\langle v_y,\psi\rangle \langle v_y,\overline{\psi}\rangle - \dfrac{i}{16\pi^3}\langle v_y,\psi\rangle \langle v_{xy},\overline{\psi}\rangle \\
&+
\dfrac{1}{16\pi^2}\langle v_{xy},\psi\rangle \overline{\psi} - \dfrac{i}{16\pi^3}\langle v_{xy},\psi\rangle \langle v_x,\overline{\psi}\rangle+\dfrac{i}{16\pi^3}\langle v_{xy},\psi\rangle \langle v_y,\overline{\psi}\rangle + \dfrac{1}{16\pi^4}\langle v_{xy},\psi\rangle \langle v_{xy},\overline{\psi}\rangle \\
                 &+ 
\dfrac{1}{16}\psi\overline{\psi} + \dfrac{i}{16\pi}\psi\langle v_x,\overline{\psi}\rangle+ \dfrac{i}{16\pi}\psi\langle v_y,\overline{\psi}\rangle - \dfrac{1}{16\pi^2}\psi\langle v_{xy},\overline{\psi}\rangle \\
&-
\dfrac{i}{16\pi}\langle v_x,\psi\rangle \overline{\psi} + \dfrac{1}{16\pi^2}\langle v_x,\psi\rangle \langle v_x,\overline{\psi}\rangle+\dfrac{1}{16\pi^2}\langle v_x,\psi\rangle \langle v_y,\overline{\psi}\rangle + \dfrac{i}{16\pi^3}\langle v_x,\psi\rangle \langle v_{xy},\overline{\psi}\rangle \\
&-
\dfrac{i}{16\pi}\langle v_y,\psi\rangle \overline{\psi} + \dfrac{1}{16\pi^2}\langle v_y,\psi\rangle \langle v_x,\overline{\psi}\rangle+\dfrac{1}{16\pi^2}\langle v_y,\psi\rangle \langle v_y,\overline{\psi}\rangle + \dfrac{i}{16\pi^3}\langle v_y,\psi\rangle \langle v_{xy},\overline{\psi}\rangle \\
&-
\dfrac{1}{16\pi^2}\langle v_{xy},\psi\rangle \overline{\psi} - \dfrac{i}{16\pi^3}\langle v_{xy},\psi\rangle \langle v_x,\overline{\psi}\rangle-\dfrac{i}{16\pi^3}\langle v_{xy},\psi\rangle \langle v_y,\overline{\psi}\rangle + \dfrac{1}{16\pi^4}\langle v_{xy},\psi\rangle \langle v_{xy},\overline{\psi}\rangle \\
                 &-
\dfrac{1}{16}\psi\overline{\psi} + \dfrac{i}{16\pi}\psi\langle v_x,\overline{\psi}\rangle+ \dfrac{i}{16\pi}\psi\langle v_y,\overline{\psi}\rangle + \dfrac{1}{16\pi^2}\psi\langle v_{xy},\overline{\psi}\rangle \\
&-
\dfrac{i}{16\pi}\langle v_x,\psi\rangle \overline{\psi} - \dfrac{1}{16\pi^2}\langle v_x,\psi\rangle \langle v_x,\overline{\psi}\rangle-\dfrac{1}{16\pi^2}\langle v_x,\psi\rangle \langle v_y,\overline{\psi}\rangle + \dfrac{i}{16\pi^3}\langle v_x,\psi\rangle \langle v_{xy},\overline{\psi}\rangle \\
&-
\dfrac{i}{16\pi}\langle v_y,\psi\rangle \overline{\psi} - \dfrac{1}{16\pi^2}\langle v_y,\psi\rangle \langle v_x,\overline{\psi}\rangle-\dfrac{1}{16\pi^2}\langle v_y,\psi\rangle \langle v_y,\overline{\psi}\rangle + \dfrac{i}{16\pi^3}\langle v_y,\psi\rangle \langle v_{xy},\overline{\psi}\rangle \\
&+
\dfrac{1}{16\pi^2}\langle v_{xy},\psi\rangle \overline{\psi} - \dfrac{i}{16\pi^3}\langle v_{xy},\psi\rangle \langle v_x,\overline{\psi}\rangle-\dfrac{i}{16\pi^3}\langle v_{xy},\psi\rangle \langle v_y,\overline{\psi}\rangle - \dfrac{1}{16\pi^4}\langle v_{xy},\psi\rangle \langle v_{xy},\overline{\psi}\rangle \\
                 &-
\dfrac{1}{16}\psi\overline{\psi} - \dfrac{i}{16\pi}\psi\langle v_x,\overline{\psi}\rangle+ \dfrac{i}{16\pi}\psi\langle v_y,\overline{\psi}\rangle - \dfrac{1}{16\pi^2}\psi\langle v_{xy},\overline{\psi}\rangle \\
&+
\dfrac{i}{16\pi}\langle v_x,\psi\rangle \overline{\psi} - \dfrac{1}{16\pi^2}\langle v_x,\psi\rangle \langle v_x,\overline{\psi}\rangle+\dfrac{1}{16\pi^2}\langle v_x,\psi\rangle \langle v_y,\overline{\psi}\rangle + \dfrac{i}{16\pi^3}\langle v_x,\psi\rangle \langle v_{xy},\overline{\psi}\rangle \\
&-
\dfrac{i}{16\pi}\langle v_y,\psi\rangle \overline{\psi} + \dfrac{1}{16\pi^2}\langle v_y,\psi\rangle \langle v_x,\overline{\psi}\rangle-\dfrac{1}{16\pi^2}\langle v_y,\psi\rangle \langle v_y,\overline{\psi}\rangle - \dfrac{i}{16\pi^3}\langle v_y,\psi\rangle \langle v_{xy},\overline{\psi}\rangle \\
&-
\dfrac{1}{16\pi^2}\langle v_{xy},\psi\rangle \overline{\psi} - \dfrac{i}{16\pi^3}\langle v_{xy},\psi\rangle \langle v_x,\overline{\psi}\rangle+\dfrac{i}{16\pi^3}\langle v_{xy},\psi\rangle \langle v_y,\overline{\psi}\rangle - \dfrac{1}{16\pi^4}\langle v_{xy},\psi\rangle \langle v_{xy},\overline{\psi}\rangle,   
\end{align*} 
which is equivalent to
\begin{align*}
I_0\cdot s_y^n (x,y) &=
\dfrac{i}{4\pi}\psi\langle v_y,\overline{\psi}\rangle + \dfrac{i}{4\pi^3}\langle v_x,\psi\rangle\langle v_{xy},\overline{\psi}\rangle -\dfrac{i}{4\pi}\langle v_y,\psi\rangle\overline{\psi} - \dfrac{i}{4\pi^3}\langle v_{xy},\psi\rangle\langle v_x,\overline{\psi}\rangle    \\
                                       &=
\dfrac{i}{4\pi}\left[\psi\langle v_y,\overline{\psi}\rangle -\langle v_y,\psi\rangle\overline{\psi}\right] + \dfrac{i}{4\pi^3}\left[\langle v_x,\psi\rangle\langle v_{xy},\overline{\psi}\rangle - \langle v_{xy},\psi\rangle\langle v_x,\overline{\psi}\rangle  \right]     \\                                                  
                                       &=
\dfrac{i}{4\pi}\left[\psi\left(x,y\right)\langle v_y\left(y-\cdot\right),\overline{\psi\left(x,\cdot\right)}\rangle -\langle v_y\left(y-\cdot\right),\psi\left(x,\cdot\right)\rangle\overline{\psi\left(x,y\right)}\right] \\
&+ \dfrac{i}{4\pi^3}\left[\langle v_x\left(x-\cdot\right),\psi\left(\cdot,y\right)\rangle\langle v_{xy}\left(x-\cdot,y-\cdot\right),\overline{\psi}\rangle - \langle v_{xy}\left(x-\cdot,y-\cdot\right),\psi\rangle\langle v_x\left(x-\cdot\right),\overline{\psi\left(\cdot,y\right)}\rangle  \right]                                                       
\end{align*} 
and results in
\begin{align*}
I_0\cdot s_y^n (x,y) &= \dfrac{i}{4\pi}\left[\mathcal{X}_{\Omega}\left(x,y\right) \cdot \exp \left(-i\Phi\left(x,y\right)\right) \ p.v. \int\limits_\R \mathcal{X}_{\Omega}\left(x,y'\right)\cdot \exp\left(i\Phi\left(x,y'\right)\right)\dfrac{1}{y-y'} \ dy' \right. \\
&-\left. \mathcal{X}_{\Omega}\left(x,y\right)\cdot\exp\left(i\Phi\left(x,y\right)\right) \ p.v. \int\limits_\R \mathcal{X}_{\Omega}\left(x,y'\right)\cdot \exp\left(-i\Phi\left(x,y'\right)\right) \dfrac{1}{y-y'} \ dy'  \right]\\
&+\dfrac{i}{4\pi^3}\left[ p.v. \int\limits_\R \mathcal{X}_{\Omega}\left(x'',y\right)\cdot \exp\left(-i\Phi\left(x'',y\right)\right)\dfrac{1}{x-x''} \ dx'' \right. \\
&\cdot \left. p.v. \int\limits_\R  p.v. \int\limits_\R \mathcal{X}_{\Omega}\left(x',y'\right)\cdot \exp\left(i\Phi\left(x',y'\right)\right)\dfrac{1}{\left(x-x'\right)\left(y-y'\right)} \ dy' \ dx'  \right. \\
&-\left. p.v. \int\limits_\R \mathcal{X}_{\Omega}\left(x'',y\right)\cdot \exp\left(i\Phi\left(x'',y\right)\right)\dfrac{1}{x-x''} \ dx'' \right. \\
&\cdot \left.  p.v. \int\limits_\R  p.v. \int\limits_\R \mathcal{X}_{\Omega}\left(x',y'\right)\cdot \exp\left(-i\Phi\left(x',y'\right)\right)\dfrac{1}{\left(x-x'\right)\left(y-y'\right)} \ dy' \ dx'  \right] \\
&=
\dfrac{i}{4\pi}\left[\mathcal{X}_{\Omega}\left(x,y\right) \cdot \exp \left(-i\Phi\left(x,y\right)\right) \ p.v. \int\limits_{\Omega_x} \exp\left(i\Phi\left(x,y'\right)\right)\dfrac{1}{y-y'} \ dy' \right. \\
&-\left. \mathcal{X}_{\Omega}\left(x,y\right)\cdot\exp\left(i\Phi\left(x,y\right)\right) \ p.v. \int\limits_{\Omega_x} \exp\left(-i\Phi\left(x,y'\right)\right) \dfrac{1}{y-y'} \ dy'  \right]\\
&+\dfrac{i}{4\pi^3}\left[\mathcal{X}_{\Omega_x}\left(y\right) \ p.v. \int\limits_{\Omega_y} \exp\left(-i\Phi\left(x'',y\right)\right)\dfrac{1}{x-x''} \ dx'' \right. \\
&\cdot \left. p.v. \int\limits_{\Omega_y}  p.v. \int\limits_{\Omega_x} \exp\left(i\Phi\left(x',y'\right)\right)\dfrac{1}{\left(x-x'\right)\left(y-y'\right)} \ dy' \ dx'  \right. \\
&-\left. \mathcal{X}_{\Omega_x}\left(y\right)p.v. \int\limits_{\Omega_y} \exp\left(i\Phi\left(x'',y\right)\right)\dfrac{1}{x-x''} \ dx'' \right. \\
&\cdot \left.  p.v. \int\limits_{\Omega_y}  p.v. \int\limits_{\Omega_x}\exp\left(-i\Phi\left(x',y'\right)\right)\dfrac{1}{\left(x-x'\right)\left(y-y'\right)} \ dy' \ dx'  \right]. 
\end{align*}
Using Euler's and trigonometric formulas we get
\begin{align*}
I_0\cdot s_y^n (x,y) ={}&
 \dfrac{i}{4\pi}\mathcal{X}_{\Omega}\left(x,y\right) \left[2i \ p.v. \int\limits_{\Omega_x}\dfrac{\sin\left[\Phi\left(x,y'\right)-\Phi\left(x,y\right)\right]}{y-y'} \ dy'  \right] \\
&+
\dfrac{i}{4\pi^3}\mathcal{X}_{\Omega_x}\left(y\right)\left[2i \ p.v. \int\limits_{\Omega_y}  \ p.v. \int\limits_{\Omega_y}  \ p.v. \int\limits_{\Omega_x} \dfrac{\sin\left[\Phi\left(x',y'\right)-\Phi\left(x'',y\right)\right]}{\left(x-x'\right)\left(y-y'\right)\left(x-x''\right)} \ dy'  \ dx' \ dx''
\right] \\
={}&-
\mathcal{X}_{\Omega}\left(x,y\right)\dfrac{1}{2\pi} \ p.v. \int\limits_{\Omega_x}\dfrac{\sin\left[\Phi\left(x,y'\right)-\Phi\left(x,y\right)\right]}{y-y'} \ dy' \\
&-
\mathcal{X}_{\Omega_x}\left(y\right)\dfrac{1}{2\pi^3} \ p.v. \int\limits_{\Omega_y}  \ p.v. \int\limits_{\Omega_x}  \ p.v. \int\limits_{\Omega_y} \dfrac{\sin\left[\Phi\left(x',y'\right)-\Phi\left(x'',y\right)\right]}{\left(x-x'\right)\left(y-y'\right)\left(x-x''\right)} \ dx''  \ dy' \ dx'.
\end{align*}

This completes the proof for the sensor without modulation. \bigskip

\endgroup

Now, we derive the pyramid sensor model with circular modulation. The theoretical scheme of the non-modulated PWFS described above serves as a basis for the modulated PWFS model. The only modification to be done is to include the physical modulation of the beam
into the model:

First, physical rotation of the beam of light with a steering mirror is
represented in the theoretical model by adding a time-dependent periodic tilt \cite{BuDa06}
\begin{equation}\label{eq:periodic_tilt}
 \Phi^{mod} (x,y,t) = \alpha_{\lambda} ( x \sin (2\pi t/T) + y \cos (2\pi t/T) ) 
\end{equation} 
introducing the circular modulation path to the incoming screen $\Phi$. The constant $\alpha_{\lambda}$ denotes the modulation parameter defined in \eqref{eq:mod_param}.
Clearly, by using the non-modulated model from above, one obtains for each time step $t$ the 
non-modulated measurements $s_x^n (x,y,t), s_y^n (x,y,t)$ corresponding to the tilted phase $\Phi (x,y) + \Phi^{mod} (x,y,t)$. 

As the second step, one has to integrate these time-dependent non-modulated pyramid measurements $s_x^n (x,y,t), s_y^n (x,y,t)$
over one full time period $T$, which gives the measurements of the circularly modulated pyramid wavefront sensor as
\begin{equation} \label{eq:mod_sx_general}
 s_x^{c} (x,y) = \dfrac{1}{T} \int\limits_{-T/2}^{T/2} s_x^n (x,y,t)\  dt ,
\end{equation}
\begin{equation*} \label{eq:mod_sy_general}
 s_y^{c} (x,y) = \dfrac{1}{T} \int\limits_{-T/2}^{T/2} s_y^n (x,y,t)\ dt .
\end{equation*}

Thus, the modulated sensor measurements are described by
{\footnotesize\begin{equation*}
\begin{split}
 s_x^{c} (x,y) =& \dfrac{1}{T} \int\limits_{-T/2}^{T/2} 
\dfrac { 1 } { 2\pi } \ \mathcal{X}_{\Omega}(x,y) \ p.v. \int \limits_{\Omega_y} 
\dfrac{ \sin [ \Phi (x',y) + \Phi^{mod} (x',y,t) - \Phi (x,y) - \Phi^{mod} (x,y,t) ] } {x-x'} \ dx' \ dt\\
+& \dfrac{1}{T} \int\limits_{-T/2}^{T/2} 
\dfrac { 1 } { 2\pi^3 } \ \mathcal{X}_{\Omega_y}(x) \\ &p.v.  \int \limits_{\Omega_y}  \int \limits_{\Omega_x}
\int \limits_{\Omega_x}
\dfrac{ \sin [ \Phi (x',y') + \Phi^{mod} (x',y',t) - \Phi (x,y'') - \Phi^{mod} (x,y'',t)  ] } { (x-x') (y-y') (y-y'') } \ 
dy'' \ dy'\ dx'\ dt . \nonumber
\end{split}
\end{equation*}
}First, we want to separate the parts which depend on time to be able to integrate them,
{\footnotesize\begin{equation*}
\begin{split}
 s_x^{c} (x,y) 
=& \dfrac{1}{T} \int\limits_{-T/2}^{T/2} 
\dfrac { 1 } { 2\pi } \ \mathcal{X}_{\Omega}(x,y) \ p.v. \int \limits_{\Omega_y}
\dfrac{ \sin \left[ ( \Phi (x',y) - \Phi (x,y) ) + \left( \Phi^{mod} (x',y,t) - \Phi^{mod} (x,y,t) \right) \right] } {x-x'} \ dx'\ dt \\
+& \dfrac{1}{T} \int\limits_{-T/2}^{T/2} 
\dfrac { 1 } { 2\pi^3 } \ \mathcal{X}_{\Omega_y}(x) \\ &p.v. \int \limits_{\Omega_y}  \int \limits_{\Omega_x}
\int \limits_{\Omega_x} 
\dfrac{ \sin [ ( \Phi (x',y') - \Phi (x,y'') ) + ( \Phi^{mod} (x',y',t) - \Phi^{mod} (x,y'',t) )  ] } { (x-x') (y-y')
(y-y'') }\  dy''\ dy'\ dx'\ dt . \nonumber
\end{split}
\end{equation*}

}Note that the modulation function $\Phi^{mod}$ is linear in the first two arguments, i.e.,
{\small\begin{align} \label{eq:mod_tilts_x}
 \Phi^{mod} (x',y,t) - \Phi^{mod} (x,y,t) 
&= \alpha_{\lambda} x' \sin (2\pi t/T) + \alpha_{\lambda} y \cos (2\pi t/T)  
- \alpha_{\lambda} x \sin (2\pi t/T) - \alpha_{\lambda} y \cos (2\pi t/T)  \nonumber \\
&= \alpha_{\lambda} (x'-x) \sin (2\pi t/T) \nonumber \\
&= \Phi^{mod} (x'-x,0,t)  
\end{align}} and
{\small\begin{align} \label{eq:mod_tilts_xy}
 \Phi^{mod} (x',y',t) - \Phi^{mod} (x,y'',t) 
&= \alpha_{\lambda} x' \sin (2\pi t/T) + \alpha_{\lambda} y' \cos (2\pi t/T)  
- \alpha_{\lambda} x \sin (2\pi t/T) - \alpha_{\lambda} y'' \cos (2\pi t/T)  \nonumber \\
&= \alpha_{\lambda} (x'-x) \sin (2\pi t/T) + \alpha_{\lambda} (y'-y'') \cos (2\pi t/T) \nonumber \\
&= \Phi^{mod} (x'-x,y'-y'',t)  . 
\end{align}}
Hence, we have
{\footnotesize\begin{align*}
 s_x^{c} (x,y) 
&= \dfrac{1}{T} \int\limits_{-T/2}^{T/2} 
\dfrac { 1 } { 2\pi } \ \mathcal{X}_{\Omega}(x,y) \ p.v. \int \limits_{\Omega_y}
\dfrac{ \sin [ ( \Phi (x',y) - \Phi (x,y) ) + \Phi^{mod} (x'-x,0,t) ] } {x-x'} \ dx'\ dt  \\
&+ \dfrac{1}{T} \int\limits_{-T/2}^{T/2} 
\dfrac { 1 } { 2\pi^3 } \ \mathcal{X}_{\Omega_y}(x) \ p.v.  \int \limits_{\Omega_y} \int \limits_{\Omega_x}
\int \limits_{\Omega_x}
\dfrac{ \sin [ ( \Phi (x',y') - \Phi (x,y'') ) + \Phi^{mod} (x'-x,y'-y'',t)   ] } { (x-x') (y-y')
(y-y'') } \ dy'' \ dy'\ dx'\  dt . \nonumber
\end{align*}
}Using trigonometric formulas, we separate the time-dependent parts
{\footnotesize\begin{align*}
 s_x^{c} (x,y) 
&= \mathcal{X}_{\Omega}(x,y)\left[\dfrac { 1 } { 2\pi } \ p.v. \int \limits_{\Omega_y}
\dfrac{ \sin [ \Phi (x',y) - \Phi (x,y) ]  } {x-x'} 
\left( \dfrac{1}{T} \int\limits_{-T/2}^{T/2} 
\cos \left[ \Phi^{mod} (x'-x,0,t) \right] \ dt \right) dx' \right.\\
&+\left. \dfrac { 1 } { 2\pi } \ p.v. \int \limits_{\Omega_y}
\dfrac{ \cos [ \Phi (x',y) - \Phi (x,y) ]  } {x-x'} 
\left( \dfrac{1}{T} \int\limits_{-T/2}^{T/2} 
\sin \left[ \Phi^{mod} (x'-x,0,t) \right] \ dt \right) dx' \right] \nonumber \\
&+ \mathcal{X}_{\Omega_y}(x)\left[ \dfrac { 1 } { 2\pi^3 }  \ p.v. \int \limits_{\Omega_y}  \int \limits_{\Omega_x}
\int \limits_{\Omega_x}   
\dfrac{ \sin [ \Phi (x',y') - \Phi (x,y'') ] } { (x-x') (y-y') (y-y'') } 
\left( \dfrac{1}{T} \int\limits_{-T/2}^{T/2} 
\cos \left[ \Phi^{mod} (x'-x,y'-y'',t) \right]
\ dt \right) dy''\ dy'\ dx' \nonumber \right. \\
&+\left. \dfrac { 1 } { 2\pi^3 } \ p.v.  \int \limits_{\Omega_y}  \int \limits_{\Omega_x}
\int \limits_{\Omega_x} 
\dfrac{ \cos [ \Phi (x',y') - \Phi (x,y'') ] } { (x-x') (y-y') (y-y'') } 
\left( \dfrac{1}{T} \int\limits_{-T/2}^{T/2} 
\sin \left[ \Phi^{mod} (x'-x,y'-y'',t)\right]
\ dt \right) dy'' \ dy'\ dx' \right] . \nonumber
\end{align*}
}The second and the fourth terms equal zero, since the integrands are odd functions.
After substitution of the explicit expressions \eqref{eq:mod_tilts_x}-\eqref{eq:mod_tilts_xy} for $\Phi^{mod}$, the remaining time integrals simplify to
\begin{eqnarray*}
& & \dfrac{1}{T} \int\limits_{-T/2}^{T/2} \cos \left[ \Phi^{mod} (x'-x,y'-y'',t) \right] \ dt \\
&=& \dfrac{1}{T} \int\limits_{-T/2}^{T/2} 
\cos \left[ \alpha_{\lambda} (x'-x) \sin (2\pi t/T) + \alpha_{\lambda} (y'-y'') \cos (2\pi t/T) \right] \ dt \nonumber \\
&=& \dfrac{1}{T} \int\limits_{-T/2}^{T/2} 
\cos \left[ \alpha_{\lambda} (x'-x) \sin (2\pi t/T) \right] \cos \left[ \alpha_{\lambda} (y'-y'') \cos (2\pi t/T) \right] \ dt \nonumber \\
&-& \dfrac{1}{T} \int\limits_{-T/2}^{T/2} 
\sin \left[ \alpha_{\lambda} (x'-x) \sin (2\pi t/T) \right] \sin \left[ \alpha_{\lambda} (y'-y'') \cos (2\pi t/T) \right] \ dt  \nonumber \\
&=& \dfrac{1}{T} \int\limits_{-T/2}^{T/2} 
\cos \left[ \alpha_{\lambda} (x'-x) \sin (2\pi t/T) \right] \cos \left[ \alpha_{\lambda} (y'-y'') \cos (2\pi t/T) \right] \ dt - 0 . \nonumber \\
\end{eqnarray*}
and
\begin{eqnarray*}
\dfrac{1}{T} \int\limits_{-T/2}^{T/2} \cos \left[ \Phi^{mod} (x'-x,0,t) \right] \ dt  
&=& \dfrac{1}{T} \int\limits_{-T/2}^{T/2} \cos \left[ \alpha_{\lambda} (x'-x) \sin (2\pi t/T) \right] \ dt \\
&=& \dfrac{1}{2\pi} \int\limits_{-\pi}^{\pi} \cos \left[ \alpha_{\lambda} (x'-x) \sin (t') \right] \ dt' \\
&=& J_0[ \alpha_{\lambda} (x'-x) ] ,  \nonumber
\end{eqnarray*}
where we used the substitution $t'=2\pi t/T$ and the definition of the zero-order Bessel function 
\begin{eqnarray*}
 J_0 (x) &=& \dfrac{1}{\pi} \int\limits_0^{\pi} \cos ( x \sin t ) \ dt \\
&=& \dfrac{1}{2\pi} \int\limits_{-\pi}^{\pi} \cos ( x \sin t ) \ dt .
\end{eqnarray*}

All steps of the proof can by performed for the data $s_y^{\{c\}}$ analogously. 
\end{proof}

\bibliographystyle{plain}
\bibliography{arXiv_pyramid_iterativeI}

\end{document}